\def\final{1}
	\DeclareMathAlphabet{\mathsf}{OT1}{cmss}{m}{n}
	\SetMathAlphabet{\mathsf}{bold}{OT1}{cmss}{bx}{n}
\definecolor{DarkGreen}{rgb}{0.15,0.5,0.15}
\definecolor{DarkRed}{rgb}{0.6,0.2,0.2}
\definecolor{DarkBlue}{rgb}{0.15,0.15,0.55}
\definecolor{DarkPurple}{rgb}{0.4,0.2,0.4}
\newcommand{\mynote}[2]{{\color{#1} \marginpar{\tiny #2}}}
\newcommand{\mybignote}[2]{{\color{#1} $\langle \langle$ #2$\rangle \rangle$}}
\newcommand{\mynote}[2]{}
\newcommand{\mybignote}[2]{}
\newcommand{\jnote}[1]{\mynote{DarkRed}{Jon: {#1}}}
\newcommand{\bigjnote}[1]{\mybignote{DarkRed}{Jon: #1}}
\newcommand{\mnote}[1]{\mynote{DarkBlue}{Mark: {#1}}}
\newcommand{\tnote}[1]{\mynote{DarkGreen}{Thomas: {#1}}}
\newcommand{\INDSTATE}[1][1]{\STATE\hspace{#1\algorithmicindent}}
\newcolumntype{Y}{>{\centering\arraybackslash}X}
\newcommand{\pr}[2]{\underset{#1}{\mathbb{P}}\left[ #2 \right]}
\newcommand{\ex}[2]{\underset{#1}{\mathbb{E}}\left[ #2 \right]}
\newcommand{\var}[2]{\underset{#1}{\mathrm{Var}}\left[ #2 \right]}
\newcommand{\poly}{\mathrm{poly}}
\newcommand{\zo}{\{0,1\}}
\newcommand{\pmo}{\{\pm1\}}
\newcommand{\getsr}{\gets_{\mbox{\tiny R}}}
\newcommand{\set}[1]{\left\{#1\right\}}
\newcommand{\from}{:}
\newcommand{\loss}{L}
\newcommand{\eps}{\varepsilon}
\DeclareMathOperator*{\argmax}{arg\,max}
\newcommand{\offlinealg}[2]{\mathsf{Offline}_{#1 {\rightarrow \atop \leftarrow} #2}}
\newcommand{\onlinealg}[2]{\mathsf{Online}_{#1 {\rightarrow \atop \leftarrow} #2}} 
\newcommand{\adaptivealg}[2]{\mathsf{Adaptive}_{#1 {\rightarrow \atop \leftarrow} #2}}
\newcommand{\N}{\mathbb{N}}
\newcommand{\R}{\mathbb{R}}
\newcommand{\cM}{\mathcal{M}}
\newcommand{\cQ}{\mathcal{Q}}
\newcommand{\cR}{\mathcal{R}}
\newtheorem{theorem}{Theorem}[section]
\newtheorem{thm}[theorem]{Theorem}
\newtheorem{lemma}[theorem]{Lemma}
\newtheorem{lem}[theorem]{Lemma}
\newtheorem{fact}[theorem]{Fact}
\newtheorem{claim}[theorem]{Claim}
\newtheorem{clm}[theorem]{Claim}
\newtheorem{remark}[theorem]{Remark}
\newtheorem{corollary}[theorem]{Corollary}
\newtheorem{prop}[theorem]{Proposition}
\theoremstyle{definition}
\newtheorem{defn}[theorem]{Definition}
\newtheorem{definition}[theorem]{Definition}
\newcommand{\Lap}{\operatorname{Lap}}
\newcommand{\thresh}[1]{\mathsf{Thresh}_{#1}}
\newcommand{\nope}[1]{}
\newcommand{\univ}{X}
\newcommand{\db}{x}
\newcommand{\row}{x}
\newcommand{\alg}{M}
\newcommand{\adv}{A}
\newcommand{\range}{\cR}
\newcommand{\qprefix}{Q_{\textrm{prefix}}}
\newcommand{\qcorr}{Q_{\textrm{corr}}}
\newcommand{\qthresh}{Q_{\textrm{thresh}}}
\newcommand{\lT}{t_\ell}
\newcommand{\uT}{t_u}
\newcommand{\hlT}{\hat{t}_\ell}
\newcommand{\huT}{\hat{t}_u}
\newcommand{\leftsymb}{\operatorname{L}}
\newcommand{\rightsymb}{\operatorname{R}}
\newcommand{\haltsymb}{\top}
\title{Make Up Your Mind: \\ The Price of Online Queries in Differential Privacy}
\author{Mark Bun\thanks{Harvard University John A.~Paulson School of Engineering and Applied Sciences. Supported by an NDSEG Fellowship and NSF grant CNS-1237235. Part of this work was done while the author was visiting Yale University.  \texttt{mbun@seas.harvard.edu}} \and Thomas Steinke\thanks{Harvard University John A. Paulson School of Engineering and Applied Sciences.  Supported by NSF grants CCF-1116616, CCF-1420938, and CNS-1237235. \texttt{tsteinke@seas.harvard.edu} } \and Jonathan Ullman\thanks{Northeastern University College of Computer and Information Science.  \texttt{jullman@ccs.neu.edu}}}
\begin{document}

\maketitle

\pagenumbering{gobble}
\begin{abstract}
We consider the problem of answering queries about a sensitive dataset subject to differential privacy.  The queries may be chosen adversarially from a larger set $Q$ of allowable queries in one of three ways, which we list in order from easiest to hardest to answer:
\begin{itemize}
\item {\em Offline:} The queries are chosen all at once and the differentially private mechanism answers the queries in a single batch.
\item {\em Online:} The queries are chosen all at once, but the mechanism only receives the queries in a streaming fashion and must answer each query before seeing the next query.
\item {\em Adaptive:} The queries are chosen one at a time and the mechanism must answer each query before the next query is chosen. In particular, each query may depend on the answers given to previous queries.
\end{itemize}
Many differentially private mechanisms are just as efficient in the adaptive model as they are in the offline model. Meanwhile, most lower bounds for differential privacy hold in the offline setting. This suggests that the three models may be equivalent.

We prove that these models are all, in fact, distinct.  Specifically, we show that there is a family of statistical queries such that exponentially more queries from this family can be answered in the offline model than in the online model.  We also exhibit a family of search queries such that exponentially more queries from this family can be answered in the online model than in the adaptive model.  We also investigate whether such separations might hold for simple queries like threshold queries over the real line.
\end{abstract}

\vfill
\newpage

\tableofcontents

\vfill
\newpage

\pagenumbering{arabic}
\section{Introduction}
\emph{Differential privacy}~\cite{DworkMNS06} is a formal guarantee that an algorithm run on a sensitive dataset does not reveal too much about any individual in that dataset.  Since its introduction, a rich literature has developed to determine what statistics can be computed accurately subject to differential privacy.  For example, suppose we wish to approximate a real-valued \emph{query} $q(\db)$ on some dataset $\db$ that consists of the private data of many individuals.  Then, this question has a clean answer---we can compute a differentially private estimate of $q(\db)$ with error proportional to the \emph{global sensitivity} of $q,$ and we cannot have smaller error in the worst case.

But how much error do we need to answer a large set of queries $q_1,\dots,q_k$?  Before we can answer this question, we have to define a model of how the queries are asked and answered.  The literature on differential privacy has considered three different interactive models\footnote{Usually, the ``interactive model'' refers only to what we call the ``adaptive model.''  We prefer to call all of these models interactive, since they each require an interaction with a data analyst who issues the queries.  We use the term ``interactive'' to distinguish these models from one where the algorithm only answers a fixed set of queries.} for specifying the queries:
\begin{itemize}
\item {\em The Offline Model:} The sequence of queries $q_1,\dots,q_k$ are given to the algorithm together in a batch and the mechanism answers them together.
\item {\em The Online Model:} The sequence of queries $q_1,\dots,q_k$ is chosen in advance and then the mechanism must answer each query $q_j$ before seeing $q_{j+1}$.
\item {\em The Adaptive Model:} The queries are not fixed in advance, each query $q_{j+1}$ may depend on the answers to queries $q_1,\dots,q_j$.
\end{itemize}
In all three cases, we assume that $q_1, \cdots, q_k$ are chosen from some family of allowable queries $Q$, but may be chosen adversarially from this family.

Differential privacy seems well-suited to the adaptive model.  Arguably its signature property is that any adaptively-chosen sequence of differentially private algorithms remains collectively differentially private, with a graceful degradation of the privacy parameters~\cite{DworkMNS06, DworkRV10}.  As a consequence, there is a simple differentially private algorithm that takes a dataset of $n$ individuals and answers $\tilde{\Omega}(n)$ statistical queries in the adaptive model with error $o(1/\sqrt{n})$, simply by perturbing each answer independently with carefully calibrated noise.  In contrast, the seminal lower bound of Dinur and Nissim and its later refinements~\cite{DinurN03, DworkY08} shows that there exists a fixed set of $O(n)$ queries that cannot be answered by any differentially private algorithm with such little error, even in the easiest offline model.  For an even more surprising example, the private multiplicative weights algorithm of Hardt and Rothblum~\cite{HardtR10} can in many cases answer an exponential number of arbitrary, adaptively-chosen statistical queries with a strong accuracy guarantee, whereas~\cite{BunUV14} show that the accuracy guarantee of private multiplicative weights is nearly optimal even for a simple, fixed family of queries.

These examples might give the impression that answering adaptively-chosen queries comes ``for free'' in differential privacy---that everything that can be achieved in the offline model can be matched in the adaptive model.  Beyond just the lack of any separation between the models, many of the most powerful differentially private algorithms in all of these models use techniques from no-regret learning, which are explicitly designed for adaptive models.

Another motivation for studying the relationship between these models is the recent line of work connecting differential privacy to statistical validity for \emph{adaptive data analysis}~\cite{HardtU14, DworkFHPRR15, SteinkeU15a, BassilyNSSSU16}, which shows that differentially private algorithms for adaptively-chosen queries in fact yield state-of-the-art algorithms for statistical problems unrelated to privacy.  This connection further motivates studying the adaptive model and its relationship to the other models in differential privacy.

In this work, we show for the first time that these three models are actually distinct.  In fact, we show exponential separations between each of the three models.  These are the first separations between these models in differential privacy.

\subsection{Our Results}

Given a dataset $\db$ whose elements come from a data universe $\univ$, a \emph{statistical query on $\univ$} is defined by a predicate $\phi$ on $\univ$ and asks ``what fraction of elements in the dataset satisfy $\phi$?''  The answer to a statistical query lies in $[0,1]$ and our goal is to answer these queries up to some small additive error $\pm \alpha$, for a suitable choice of $0 < \alpha < 1$.  If the mechanism is required to answer \emph{arbitrary} statistical queries, then the offline, online, and adaptive models are essentially equivalent --- the upper bounds in the adaptive model match the lower bounds in the offline model \cite{DworkRV10,HardtR10,BunUV14,SteinkeU15b}.  However, we show that when the predicate $\phi$ is required to take a specific form, then it becomes strictly easier to answer a set of these queries in the offline model than it is to answer a sequence of queries presented online.
\begin{theorem}[Informal] \label{thm:mainprefix}
There exists a data universe $\univ$ and a family of statistical queries $Q$ on $\univ$ such that for every $n \in \N$,
\begin{enumerate}
\item there is a differentially private algorithm that takes a dataset $\db \in \univ^n$ and answers any set of $k = 2^{\Omega(\sqrt{n})}$ offline queries from $Q$ up to error $\pm 1/100$ from $Q$, but
\item no differentially private algorithm can take a dataset $\db \in \univ^n$ and answer an arbitrary sequence of $k = O(n^2)$ online (but not adaptively-chosen) queries from $Q$ up to error $\pm 1/100$.
\end{enumerate}
\end{theorem}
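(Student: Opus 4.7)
The plan is to take $Q$ to be the family of \emph{prefix queries} on a complete binary tree. Let $\univ = \{0,1\}^L$ with $L = \Theta(\sqrt n)$, viewed as the leaves of a depth-$L$ complete binary tree, and for each node $p \in \{0,1\}^{\le L}$ define $\phi_p(x) = \mathbb{I}\{p \text{ is a prefix of } x\}$. Then $|Q| = 2^{L+1} - 1 = 2^{\Theta(\sqrt n)}$. For the offline upper bound I would invoke an off-the-shelf batch release mechanism---either the ``small database'' algorithm of Blum, Ligett, and Roth or private multiplicative weights---both of which answer any prespecified set of $|Q|$ statistical queries up to error $\pm 1/100$ using a dataset of size $n = \tilde O(\log|Q|) = \tilde O(\sqrt n)$. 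Taking the constant in $L = c\sqrt n$ small enough makes this easily hold, establishing item (1).

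For the online lower bound, I would adapt the fingerprinting-code framework (as in Bun--Ullman--Vadhan and Steinke--Ullman). The goal is to exhibit a non-adaptive query sequence $q_1,\ldots,q_k$ from $Q$ of length $k = O(n^2)$ and a hard distribution over datasets $\db \in \univ^n$ such that any online differentially private mechanism's transcript can be ``traced'' back to an individual row of $\db$, contradicting privacy. Concretely, I would take the queries to correspond to a collection of roughly $n^{3/2}$ independently sampled random root-to-leaf walks down the tree (each of length $L$), interleaved into a single long non-adaptive sequence. On each walk, the mechanism's online answers are approximate empirical frequencies of random prefixes; because the mechanism must commit to each answer before seeing the next query, its answers behave like nearly unbiased estimates of the true prefix frequencies on $\db$, and concatenating them across walks yields a transcript long enough for standard code-tracing arguments to identify a dataset row with constant probability.

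The main obstacle is the online lower bound, in two senses. First, the queries are non-adaptive, so the attack cannot ``follow'' the mechanism's answers down a single informative path the way an adaptive attack would; I would address this by pre-committing many independent walks and arguing that a constant fraction of them land in informative regions relative to the hidden hard dataset. Second, the queries are restricted to the structured prefix family rather than being arbitrary statistical queries, which is the more permissive setting in which classical fingerprinting-code lower bounds live; I plan to construct a tailored code whose coordinates correspond to child indicators $\phi_{p0}$ along random paths and show that this restricted family is still rich enough to support tracing from $O(n^2)$ answers. Establishing both properties simultaneously, while crucially preserving the non-adaptivity of the query sequence and relying on the mechanism's inability to look ahead, is where I expect the bulk of the technical work to lie.
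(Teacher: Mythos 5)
Your offline upper bound is sound and follows the same high-level route as the paper (invoke a batch mechanism such as BLR, exploiting that a prefix query's behavior is determined by a small effective universe), though the paper handles arbitrary-length strings via a universe-reduction map $r_S$ rather than fixing $\univ = \{0,1\}^{\Theta(\sqrt n)}$ up front. The real problem is the online lower bound, and it is fatal: you have defined $Q$ to consist of \emph{single-prefix} indicators $\phi_p(x) = \ind{p \preceq x}$, but that family admits no such lower bound. Single prefix counts on a depth-$L$ tree can all be released \emph{non-interactively} by the standard hierarchical (binary-tree) histogram mechanism: adding $\Lap(O(L/\eps))$ noise to each node is $\eps$-DP, and every $\phi_p(\db)$ is then answered with additive error $\tilde O(L/\eps n)$. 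With $L = \Theta(\sqrt n)$ this is $o(1)$, so a mechanism can privately publish accurate answers to \emph{all} $2^{L+1}-1$ queries in $Q$ before any query is even asked — in particular it is $(1/100,1/100)$-accurate even adaptively. No amount of clever interleaving of random root-to-leaf walks can circumvent this; the class you chose is simply too weak for any online lower bound at the $\poly(n)$ scale.

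The paper's fix — and the ingredient you are missing — is that a prefix query is allowed to be defined by a \emph{set} of prefixes: $q_S(x) = \ind{\exists y \in S,\ y \preceq x}$ with $|S| \le n$ (the class $\qprefix^n$), which is an OR of up to $n$ prefix indicators. This is what makes the fingerprinting embedding go through. In the lower bound the rows are $\row_i = (\mathrm{binary}(i),\, c_i^1, \dots, c_i^k)$, and the $j$th query uses $S_j = \{(\mathrm{binary}(i),\, c_i^1, \dots, c_i^{j-1}, 1) : i \in [n]\}$. Because each element of $S_j$ begins with a distinct row identifier, $q_{S_j}(\row_i)$ is exactly the indicator that $c_i^j = 1$, so $q_{S_j}(\db) = \overline{c}^j$ — a marginal. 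Crucially, $S_j$ \emph{encodes} the previous columns $c^1,\dots,c^{j-1}$ in its prefixes, so an offline mechanism that sees all of $q_{S_1},\dots,q_{S_k}$ can read them off from $q_{S_k}$, whereas an online mechanism must commit to $a_j$ before seeing $q_{S_{j+1}}$ and hence before learning $c^j$. That is precisely the information asymmetry that makes the fingerprinting lemma applicable in the online-but-not-offline regime. With singleton prefixes there is no mechanism for a query to ``select a different bit from each row,'' and the marginal structure cannot be simulated.
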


This result establishes that the online model is strictly harder than the offline model.  We also demonstrate that the adaptive model is strictly harder than the online model.  Here, the family of queries we use in our separation is not a family of statistical queries, but is rather a family of \emph{search queries} with a specific definition of accuracy that we will define later.
\begin{theorem}[Informal] \label{thm:maincorr}
For every $n \in \N$, there is a family of ``search'' queries $Q$ on datasets in $\univ^n$ such that
\begin{enumerate}
\item there is a differentially private algorithm that takes a dataset $\db \in \pmo^n$ and accurately answers any online (but not adaptively-chosen) sequence of $k = 2^{\Omega(n)}$ queries from $Q$, but
\item no differentially private algorithm can take a dataset $\db \in \pmo^n$ and accurately answer an adaptively-chosen sequence of $k = O(1)$ queries from $Q$. 
\end{enumerate}
\end{theorem}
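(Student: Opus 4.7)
The plan is to design the family $\qcorr$ of ``correlation'' search queries so that (i)~a single private release from $\db$ suffices to answer any online sequence of queries, while (ii)~the answer to the first of two adaptively-chosen queries can be used to craft a second query whose answer, combined with the first, enables reconstruction of $\db$.

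For the online upper bound, I would define each query $q_y$ so that a valid answer is any $z \in \pmo^n$ satisfying a correlation condition that depends jointly on $y$ and $\db$. The mechanism computes a single differentially private rough release of $\db$ once (for instance, via per-coordinate randomized response or Laplace-perturbed coordinate sums), and then, as each online query $y_j$ arrives, outputs a valid $z_j$ as a deterministic post-processing of this release together with $y_j$. By the post-processing property of differential privacy, the overall mechanism is $\eps$-DP regardless of the number of queries. The exponential bound $k = 2^{\Omega(n)}$ arises from a union bound: the probability that the single release fails to admit a valid answer for any particular adversarially fixed $y_j$ is exponentially small in $n$, so with high probability every query in the pre-specified sequence is answered correctly.

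For the adaptive lower bound, I would exhibit a $2$-query attack that reconstructs $\db$ up to Hamming error $o(n)$. The first query uses a canonical hint $y_1 = \vec{0}$, forcing any accurate mechanism to output some $z_1$ correlated with $\db$. The adversary then sets $y_2 := z_1$; the correlation condition in $\qcorr$ is designed so that any valid $z_2$ must be correlated with $\db$ in a direction nearly orthogonal to $z_1$, making $(z_1, z_2)$ jointly encode a vector $\hat{\db}$ within $o(n)$ Hamming distance of $\db$. This contradicts the Dinur--Nissim-style lower bound that no differentially private mechanism can output such an accurate reconstruction of its input dataset.

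The main obstacle will be calibrating the correlation parameters in the definition of $\qcorr$ so that both directions work simultaneously: the condition must be loose enough that a single noisy release almost surely admits a valid answer to every individual query $y$, yet tight enough that any two adaptively-chosen answers jointly reconstruct $\db$ to an accuracy that violates the privacy lower bound. Extending the reconstruction lower bound to $(\eps,\delta)$-differential privacy with $\delta > 0$ will likely require replacing the Dinur--Nissim argument with a fingerprinting-code argument, and verifying that the $2$-query adaptive attack still pushes past this strengthened reconstruction bound will be the most delicate step.
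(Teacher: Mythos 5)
Your online upper bound is essentially the paper's: one invocation of per-coordinate randomized response produces a single vector $y \approx \alpha\db$, every query is answered with this same $y$, and a Hoeffding-plus-union-bound argument over the $\le k^2$ correlation constraints appearing in the pre-specified query sequence gives $(1/k)$-accuracy for $k = 2^{\Omega(\alpha^4 n)}$ queries. That part is sound.

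The adaptive lower bound as you have sketched it does not work, and the culprit is the claim that \emph{two} adaptively-chosen queries already let the adversary reconstruct $\db$. Each accurate answer $z_j$ satisfies $\langle z_j, \db\rangle \approx \alpha n$ and $\langle z_j, z_{j'}\rangle \approx \alpha^2 n$ for $j\ne j'$. Information-theoretically, such a $z_j$ carries about $\Theta(\alpha^2 n)$ bits about a uniformly random $\db$, so two of them reveal only $O(\alpha^2 n) = o(n)$ bits when $\alpha$ is small; you cannot decode $\db$ to Hamming distance $o(n)$ (nor even to distance $<n/4$, which is what the non-reconstruction fact needs) from $o(n)$ bits. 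Concretely, if you take the coordinate-wise majority of $k$ such vectors, the correlation with $\db$ is lower-bounded by $\bigl(1 - 2/(\alpha^2 k) - O(1)\bigr)n$; with $k=2$ the term $2/(\alpha^2 k) = 1/\alpha^2$ dominates and the bound is vacuous. You genuinely need $k = \Theta(1/\alpha^2)$ adaptive queries, with each $q_{V_j}$ taking $V_j = \{y^1,\dots,y^{j-1}\}$ to be all previously returned answers, and then the coordinate-wise majority $\mathrm{sign}\bigl(\sum_j y^j\bigr)$ has correlation $>n/2$ with $\db$ by a second-moment (Chebyshev) argument. This is still $O(1)$ for any constant $\alpha$, so the informal theorem statement is met, but it is not $2$.

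Two smaller misalignments. First, you reach for a Dinur--Nissim style reconstruction bound and then worry about strengthening it to $(\eps,\delta)$-DP via fingerprinting codes; the paper instead finishes with an elementary fact that no $(1,1/20)$-differentially private map $\pmo^n\to\pmo^n$ can output, for every input, a vector more than $(n/2)$-correlated with that input --- this already handles $\delta>0$ and requires no fingerprinting machinery. (Fingerprinting codes are what the paper uses for the \emph{offline vs.\ online} separation, not this one.) Second, your loss function must penalize $|\langle y - \alpha\db, v\rangle|$ being large rather than $|\langle y, v\rangle|$ directly, since previously returned $y^{j'}$ are themselves $\alpha$-correlated with $\db$; this is exactly the ``as uncorrelated as possible given the constraint of being $\alpha$-correlated with $\db$'' calibration you flagged, and without it randomized response would fail the online accuracy requirement.
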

We leave it as an interesting open question to separate the online and adaptive models for statistical queries, or to show that the models are equivalent for statistical queries.

Although Theorems~\ref{thm:mainprefix} and~\ref{thm:maincorr} separate the three models, these results use somewhat contrived families of queries.  Thus, we also investigate whether the models are distinct for \emph{natural} families of queries that are of use in practical applications.  One very well studied class of queries is \emph{threshold queries}.  These are a family of statistical queries $\qthresh$ defined on the universe $[0,1]$ and each query is specified by a point $\tau \in [0,1]$ and asks ``what fraction of the elements of the dataset are at most $\tau$?''  If we restrict our attention to so-called pure differential privacy (i.e.~$(\eps, \delta)$-differential privacy with $\delta = 0$), then we obtain an exponential separation between the offline and online models for answering threshold queries.

\begin{theorem}[Informal] \label{thm:thresholdsep}
For every $n \in \N$,
\begin{enumerate}
\item there is a pure differentially private algorithm that takes a dataset $\db \in [0,1]^n$ and answers any set of $k = 2^{\Omega(n)}$ offline queries from $\qthresh$ up to error $\pm 1/100$, but
\item no pure differentially private algorithm takes a dataset $\db \in [0,1]^n$ and answers an arbitrary sequence of $k = O(n)$ online (but not adaptively-chosen) queries from $\qthresh$ up to error $\pm 1/100$.
\end{enumerate}
\end{theorem}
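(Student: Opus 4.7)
The plan is to reduce to the finite-universe setting. Given the offline queries $\tau_1 \le \cdots \le \tau_k$, snap every data point $x_i$ to the largest $\tau_j \le x_i$ (using a sentinel below $\tau_1$ if no such $\tau_j$ exists); this preprocessing preserves every threshold answer. The snapped dataset lives on a universe of size $k+1$, on which standard pure-DP machinery for simultaneously releasing all threshold queries applies---for example, iteratively releasing $O(1/\alpha)$ approximately-equispaced private quantiles via the exponential mechanism (each quantile has sensitivity $1$ in rank and the range has size $k$, so each quantile budget buys rank error $O(\log k / \epsilon')$). A direct calculation shows that this achieves $\ell_\infty$ accuracy $\alpha$ with sample complexity $n = O(\log k /(\alpha^2\epsilon))$. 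Setting $\alpha = 1/100$ and $\epsilon = \Theta(1)$ then yields $k = 2^{\Omega(n)}$.

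\textbf{Lower bound.} The goal is to show that no online pure-DP mechanism $\alg$ can answer $k = O(n)$ threshold queries from $\qthresh$ with error $1/100$. The intuition is that in the online model $\alg$ must commit to $a_j$ before seeing $\tau_{j+1},\ldots,\tau_k$, so it cannot restrict its internal state to the finite support of the full query set---morally, it must ``prepare'' for any threshold in the continuous universe $[0,1]$, where pure DP is known to be very weak (e.g.\ the interior point problem is impossible). I would instantiate this intuition with a packing/distinguishing argument: choose a constant $c$ and a fixed sequence of $k = cn$ threshold queries, then exhibit a family $\{\db_b\}_{b\in B}$ of datasets with $|B| \gg e^{\epsilon n}$ such that every $1/100$-accurate \emph{online} answer vector uniquely identifies $b$, while all pairwise row distances are $O(n)$. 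Pure DP's group privacy property $\prob{\alg(\db_b) \in R} \le e^{\epsilon \cdot d_H(\db_b,\db_{b'})}\prob{\alg(\db_{b'}) \in R}$ then forces a contradiction with $|B|$.

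\textbf{Main obstacle.} The principal difficulty is that a naive per-cell packing---place $n/k$ points in each of $k$ cells and encode one bit per cell via vertical position within that cell---is too weak on two fronts: it transfers verbatim to the offline model (and would contradict part~1), and for $k = \Theta(n)$ each encoded bit shifts the corresponding answer by only $\Theta(1/k)$, swamped by the $1/100$ accuracy tolerance. The plan is therefore a more delicate multi-scale construction: each bit of $b$ should perturb $\Theta(n/k)$ rows so as to shift \emph{multiple} query answers by an amount well above $1/100$, while keeping row distances $O(n)$, and the distinguishing step should rely on the online property that $\alg$ has committed to $a_j$ before seeing $\tau_{j+1},\ldots,\tau_k$. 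Stitching the rounds together, most naturally via a hybrid/fingerprinting-style argument that at round $j$ bounds the pure-DP ``leakage'' about the later bits of $b$, is the central technical step; the crux is to make the per-round bound survive iteration across all $k$ rounds without collapsing into an offline packing that would contradict the upper bound of part~1.
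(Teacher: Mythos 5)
Your upper bound is on the same track as the paper's: discretize to the finite universe induced by the queries, then apply a pure-DP threshold/quantile-release algorithm over a universe of size $O(k)$. The paper simply cites \cite{DworkNPR10, ChanSS11, DworkNRR15} for a sample complexity of $O(\min\{(\log k + \log^2(1/\alpha))/\alpha\eps,\, \log^2 k/\alpha\eps\})$, which is the same $\log k$ regime you get, so this direction is fine.

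The lower bound is where your proposal has a genuine gap. You correctly diagnose the central obstacle---any packing argument over a \emph{single fixed} query sequence and a fixed family of datasets rules out offline mechanisms too, contradicting part~1---but you do not actually overcome it. Your proposed fixes (a ``multi-scale construction'' plus a ``hybrid/fingerprinting-style argument'') are both vague and, in the case of fingerprinting, aimed at the wrong target: fingerprinting is the tool this paper uses for the \emph{approximate}-DP lower bound for prefix queries, whereas here pure DP already has much cheaper tools. The paper's argument is entirely different and has three pieces. First, adaptively answering $k = \lceil 1 + \log_2 T\rceil$ threshold queries lets one \emph{binary search} for an $\alpha$-approximate median of a dataset over $[T]$ (Lemma~\ref{lem:median_reduction}). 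Second, a short packing argument (Lemma~\ref{lem:thresh_packing}) shows any pure-DP approximate-median finder over $[T]$ needs $n = \Omega(\log T / \alpha\eps)$; the hard instances $x^t$ place about half the mass at $1$, half at $T$, and a small $\Theta(\alpha n)$ slug at $t$, so $x^t$ and $x^{t'}$ differ in only $O(\alpha n)$ rows while having distinct medians. Third---and this is the idea your proposal lacks entirely---on these particular packing inputs $x^t$, the binary-search queries are \emph{predictable}: with probability $1-\beta$ they are exactly the fixed path to $t$, so they can be announced non-adaptively up front. An online mechanism's answer to query $j$ depends only on the first $j$ queries, so query sequences that share a prefix receive the same answers; this prefix-consistency is precisely what lets the pigeonhole over $[T]$ go through for online mechanisms, and precisely what fails for offline mechanisms (which may use the later queries to collapse the universe and answer cheaply). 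Without the median reduction, the packing lemma, and the predictability/prefix-consistency observation, the lower bound does not close, and the proposal as written cannot be repaired into a separation from the offline case.
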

We also ask whether or not such a separation exists for arbitrary differentially private algorithms (i.e.~$(\eps, \delta)$-differential privacy with $\delta > 0$). Theorem \ref{thm:thresholdsep} shows that, for pure differential privacy, threshold queries have near-maximal sample complexity.  That is, up to constants, the lower bound for online threshold queries matches what is achieved by the Laplace mechanism, which is applicable to arbitrary statistical queries. This may lead one to conjecture that adaptive threshold queries also require near-maximal sample complexity subject to approximate differential privacy. However, we show that this is not the case:\tnote{Rewrote this para, please read.}
\begin{theorem}
For every $n \in \N$, there is a differentially private algorithm that takes a dataset $\db \in [0,1]^n$ and answers any set of $k = 2^{{\Omega}({n})}$ adaptively-chosen queries from $\qthresh$ up to error $\pm 1/100$.
\end{theorem}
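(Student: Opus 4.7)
The plan is to exploit the fact that every threshold query can be answered from a common synopsis of the data: a pointwise approximation to the empirical CDF. Specifically, if $\widehat{F}:[0,1]\to[0,1]$ satisfies $|\widehat{F}(\tau)-q_\tau(\db)|\le \alpha$ for every $\tau \in [0,1]$, then evaluating $\widehat{F}(\tau)$ already answers the threshold query at $\tau$ to error $\pm\alpha$. Hence, as long as we can privately release such a $\widehat{F}$ \emph{once}, without ever looking at the queries, we can answer any sequence of threshold queries---adaptively-chosen or otherwise---by post-processing $\widehat{F}$, at no additional privacy cost. In other words, adaptivity is ``free'' for any query family whose answers are all determined by a single low-dimensional object, and thresholds are such a family.

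First, I would discretize the universe $[0,1]$ to a finite grid $\univ_T\subset [0,1]$ of $T=2^{\Theta(n)}$ equally spaced points. Rounding each data entry (and each query threshold) to $\univ_T$ perturbs the answer to any threshold query by only $O(1/T)$, which is much smaller than our target error $1/100$. This reduces the problem to privately answering all threshold queries on the totally ordered universe $\univ_T$.

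Next, I would invoke the $(\eps,\delta)$-differentially private algorithm of Bun, Nissim, Stemmer, and Vadhan (or its subsequent refinements) for privately releasing approximate answers to all threshold queries on a totally ordered universe of size $T$. With high probability, that algorithm outputs a synopsis from which every threshold query can be answered to additive error $1/100$, using sample complexity $n_0=2^{O(\log^* T)}\cdot\poly(1/\eps,\log(1/\delta))$. Since $T=2^{\Theta(n)}$ gives $\log^* T=\log^*(n)+O(1)$, we have $n_0 \le n$ for all sufficiently large $n$ at constant privacy parameters.

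Finally, because the synopsis is produced before any queries are seen, the answer to any threshold query is a deterministic function of the synopsis alone. The post-processing property of differential privacy then guarantees that releasing answers to any number of queries---adaptively-chosen or not---preserves the privacy of the synopsis step; in particular, we may answer $k=2^{\Omega(n)}$ (in fact, unboundedly many) adaptive threshold queries. There is no technically difficult step here given the offline release result; the main content is the conceptual observation in the first paragraph, which collapses the adaptive threshold problem to the offline one and stands in sharp contrast to Theorem~\ref{thm:thresholdsep} for pure differential privacy.
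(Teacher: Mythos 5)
Your central claim---that rounding the data and queries to a fixed grid $\univ_T$ of $T = 2^{\Theta(n)}$ equally-spaced points ``perturbs the answer to any threshold query by only $O(1/T)$''---is false. Rounding controls the Wasserstein distance between the original and discretized empirical distributions (it is at most $1/T$), but the accuracy requirement is a sup-norm (Kolmogorov) guarantee over thresholds, and this can degrade by $\Omega(1)$ under discretization. Concretely, put all $n$ data points at $g + \eps$ for a grid point $g$ and tiny $\eps > 0$. Then $q_g(\db) = 0$, but after rounding down (or to nearest) the rounded dataset places all its mass at $g$, so any algorithm that sees only the rounded data will report an answer near $1$ for $c_g$. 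The adversary does not even need adaptivity to exploit this: the single non-adaptive query $c_g$ already breaks accuracy. The paper's own offline reduction avoids exactly this pitfall by rounding the data to the finite set $\{\tau_1,\dots,\tau_k,1\}$ of the \emph{queried} thresholds, which is a query-dependent grid unavailable in the online and adaptive settings; this is precisely the difficulty the theorem is about.

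There is also a deeper obstruction to the synopsis idea that is worth internalizing: your first paragraph asserts that if we could once release $\widehat F$ with $|\widehat F(\tau) - q_\tau(\db)| \le \alpha$ simultaneously for \emph{all} $\tau \in [0,1]$, adaptivity would be free. That is true as a conditional, but the hypothesis is unattainable. Releasing such a $\widehat F$ on $[0,1]$ privately would in particular solve the (offline) interior-point problem on $[0,1]$, and the $\Omega(\log^* |X|)$ lower bound of Bun, Nissim, Stemmer, and Vadhan implies that this requires unbounded sample complexity as the (effective) domain grows without bound---i.e., it is impossible on $[0,1]$ at any finite $n$. So thresholds over $[0,1]$ are \emph{not} a ``synopsis-releasable'' family, and the theorem cannot be reduced to an offline release. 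The paper's proof is genuinely online: it sorts the data, partitions it into $\Theta(1/\alpha)$ chunks of consecutive elements (Figure~\ref{alg:partition}), and for each chunk runs an online interior-point subroutine built on the new $\mathsf{BetweenThresholds}$ mechanism (Figure~\ref{alg:between-thresholds}), which answers each incoming query with fresh Laplace noise and uses a stability argument to bound the privacy cost. Because the algorithm may halt on a chunk only once an adversary supplies a query falling inside that chunk, it never needs to commit to a full CDF, which is what makes the continuous domain tractable with sample complexity $O(\log k)$ rather than the $2^{O(\log^* T)}$ you were hoping for.
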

In contrast, for any offline set of $k$ thresholds $\tau_1,\dots,\tau_k$, we can round each element of the dataset up to an element in the finite universe $\univ = \set{\tau_1,\dots,\tau_k, 1}$ without changing the answers to any of the queries.  Then we can use known algorithms for answering all threshold queries over any finite, totally ordered domain~\cite{BeimelNS13, BunNSV15} to answer the queries using a very small dataset of size $n = 2^{O(\log^*(k))}$.  We leave it as an interesting open question to settle the complexity of answering adaptively-chosen threshold queries in the adaptive model.
%We do not know if this guarantee can be matched for adaptively-chosen (or even online) thresholds. %, we do show that we can answer $k$ adaptively-chosen thresholds using a dataset of size $n = {O}(\log(k))$.
%We leave it as an open question to either design an algorithm for answering adaptively-chosen threshold queries that matches the performance of the offline algorithm or to prove that no such algorithm exists.

\subsection{Techniques}
\subsubsection*{Separating Offline and Online Queries}
To prove Theorem~\ref{thm:mainprefix}, we construct a sequence of queries $q_1, \cdots, q_k$ such that, for all $j \in [k]$,
\begin{itemize}
\item $q_j$ ``reveals'' the answers to $q_1, \cdots, q_{j-1}$, but
\item $q_1, \cdots, q_{j-1}$ do not reveal the answer to $q_j$.
\end{itemize}
Thus, given the sequence $q_1, \cdots, q_k$ in the offline setting, the answers to $q_1, \cdots, q_{k-1}$ are revealed by $q_k$. So only $q_k$ needs to be answered and the remaining query answers can be inferred. However, in the online setting, each query $q_{j-1}$ must be answered before $q_j$ is presented and this approach does not work. This is the intuition for our separation.

To prove the online lower bound, we build on a lower bound for marginal queries~\cite{BunUV14}, which is based on the existence of short secure fingerprinting codes~\cite{BonehS98, Tardos03}.  Consider the data universe $\pmo^{k}$.  Given a dataset $\db \in \pmo^{n \times k}$, a \emph{marginal query} is a specific type of statistical query that asks for the mean of a given column of $\db$.  Bun et al.~\cite{BunUV14} showed that unless $k \ll n^2$, there is no differentially private algorithm that answers all $k$ marginal queries with non-trivial accuracy. This was done by showing that such an algorithm would violate the security of a short fingerprinting code due to Tardos~\cite{Tardos03}. We are able to ``embed'' $k$ marginal queries into the sequence of online queries $q_1, \cdots, q_k$. Thus a modification of the lower bound for marginal queries applies in the online setting.

To prove the offline upper bound, we use the fact that every query reveals information about other queries. However, we must handle arbitrary sequences of queries, not just the specially-constructed sequences used for the lower bound. The key property of our family of queries is the following. Each element $x$ of the data universe $X$ requires $k$ bits to specify. On the other hand, for any set of queries $q_1, \cdots, q_k$, we can specify $q_1(x), \cdots, q_k(x)$ using only $O(\log(nk))$ bits. Thus the effective size of the data universe given the queries is $\poly(nk)$, rather than $2^k$. Then we can apply a differentially private algorithm that gives good accuracy as long as the data universe has subexponential size~\cite{BlumLR08}. Reducing the size of the data universe is only possible once the queries have been specified; hence this approach only works in the offline setting.

%Building on this technique, we design a family of statistical queries such that any algorithm capable of answering an online sequence of these queries can be used to violate security of a short fingerprinting code. However, if the queries are given to the algorithm offline, then there is a ``trapdoor'' that makes it relatively easy to answer all of the chosen queries.  Our analysis makes use of fingerprinting codes in a non-black-box manner, using arguments from~\cite{SteinkeU15b, DworkSSUV15}.

\subsubsection*{Separating Online and Adaptive Queries}
To prove Theorem~\ref{thm:maincorr}, we start with the classical randomized response algorithm~\cite{Warner65}. \jnote{Interestingly, not the oldest paper I've cited. Bonferroni31.} \mnote{Cite something like Markov91, then we'll talk.} Specifically, given a dataset $\db \in \pmo^{n}$, randomized response produces a new dataset $y \in \pmo^{n}$ where each coordinate $y_i$ is independently set to $+x_i$ with probability $ (1+\alpha)/2$ and is set to $-x_i$ with probability $ (1 - \alpha)/2$.  It is easy to prove that this algorithm is $(O(\alpha), 0)$-differentially private.  What accuracy guarantee does this algorithm satisfy? By design, it outputs a vector $y$ that has correlation approximately $\alpha$ with the dataset $\db$ --- that is, $\langle y, x \rangle \approx \alpha n$.  On the other hand, it is also easy to prove that there is no differentially private algorithm (for any reasonable privacy parameters) that can output a vector that has correlation at least $1/2$ with the sensitive dataset.

Our separation between the online and adaptive models is based on the observation that, if we can obtain $O(1/\alpha^2)$ ``independent'' vectors $y_1,\dots,y_k$ that are each roughly $\alpha$-correlated with $\db$, then we can obtain a vector $z$ that is $(1/2)$-correlated with $\db$, simply by letting $z$ be the coordinate-wise majority of the $y_j$s.  Thus, no differentially private algorithm can output such a set of vectors. More precisely, we require that $\langle y_i, y_j \rangle \approx \alpha^2 n$ for $i \ne j$, which is achieved if each $y_j$ is an independent sample from randomized response.

Based on this observation, we devise a class of queries such that, if we are allowed to choose $k$ of these queries adaptively, then we obtain a set of vectors $y_1,\dots,y_k$ satisfying the conditions above. This rules out differential privacy for $k=O(1/\alpha^2)$ adaptive queries. The key is that we can use adaptivity to ensure that each query asks for an ``independent'' $y_j$ by adding the previous answers $y_1, \cdots, y_{j-1}$ as constraints in the search query.

On the other hand, randomized response can answer each such query with high probability. If a number of these queries is fixed in advance, then, by a union bound, the vector $y$ output by randomized response is simultaneously an accurate answer to every query with high probability. Since randomized response is oblivious to the queries, we can also answer the queries in the online model, as long as they are not chosen adaptively.

At a high level, the queries that achieve this property are of the form ``output a vector $y \in \pmo^{n}$ that is approximately $\alpha$-correlated with $\db$ and is approximately as uncorrelated as possible with the vectors $v_1,\dots,v_m$.''  A standard concentration argument shows that randomized response gives an accurate answer to all the queries simulatneously with high probability.  On the other hand, if we are allowed to choose the queries adaptively, then for each query $q_{i}$, we can ask for a vector $y_{i}$ that is correlated with $\db$ but is as uncorrelated as possible with the previous answers $y_1,\dots,y_{i-1}$. % Thereby we will obtain a $(1/2)$-correlated vector and violate differential privacy.

\subsubsection*{Threshold Queries}
For pure differential privacy, our separation between offline and online threshold queries uses a simple argument based on binary search.  Our starting point is a lower bound showing that any purely differentially private algorithm that takes a dataset of $n$ points $x_1,\dots,x_n \in \set{1,\dots,T}$ and outputs an \emph{approximate median} of these points requires $n = \Omega(\log(T))$.  This lower bound follows from a standard application of the ``packing'' technique of Hardt and Talwar \cite{HardtT10}.  On the other hand, by using binary search, any algorithm that can answer $k = O(\log(T))$ adaptively-chosen threshold queries can be used to find an approximate median. Thus, any purely differentially private algorithm for answering such queries requires a dataset of size $n = \Omega(k)$.  Using the structure of the lower bound argument, we show that the same lower bound holds for online non-adaptive queries as well.  In contrast, using the algorithms of~\cite{DworkNPR10, ChanSS11, DworkNRR15}, we can answer $k$ offline threshold queries on a dataset with only $n = O(\log(k))$ elements, giving an exponential separation.

The basis of our improved algorithm for adaptive threshold queries under approximate differential privacy is a generalization of the \emph{sparse vector} technique \cite{DworkNPR10, RothR10, HardtR10} (see \cite[\S3.6]{DworkR14} for a textbook treatment).  Our algorithm makes crucial use of a \emph{stability argument} similar to the propose-test-release techniques of Dwork and Lei~\cite{DworkL09}.  To our knowledge, this is the first use of a stabiltiy argument for any online or adaptive problem in differential privacy and may be of independent interest.\tnote{I said it.} In particular, our algorithm is given an input $x \in X^n$, a threshold $t \in (0,1)$, and an adaptive sequence of statistical (or low-sensitivity) queries $q_1, \cdots, q_k : X^n \to [0,1]$ and, for each query $q_j$, it reports (i) $q_j(x) \geq t$, (ii) $q_j(x) \leq t$, or (iii) $t-\alpha \leq q_j(x) \leq t+ \alpha$. The sample complexity of this algorithm is $n=O(\sqrt{c} \log(k/\varepsilon\delta)/\varepsilon\alpha)$, where $k$ is the total number of queries, $c$ is an upper bound on the number of times (iii) may be reported, and $(\varepsilon,\delta)$-differential privacy is provided.  We call this the \emph{Between Thresholds algorithm}.

Once we have this algorithm, we can use it to answer adaptively-chosen thresholds using an approach inspired by Bun et al.~\cite{BunNSV15}.  The high-level ideal is to sort the dataset $x_{(1)} < x_{(2)} < \dots < x_{(n)}$ and then partition it into chunks of consecutive sorted elements.  For any chunk, and a threshold $\tau$, we can use the between thresholds algorithm to determine (approximately) whether $\tau$ lies below all elements in the chunk, above all elements in the chunk, or inside the chunk.  Obtaining this information for every chunk is enough to accurately estimate the answer to the threshold query $\tau$ up to an error proportional to the size of the chunks.  The sample complexity is dominated by the $O(\log k)$ sample complexity of our Between Thresholds algorithm multiplied by the number of chunks needed, namely $O(1/\alpha)$.%The low sample complexity of the between thresholds algorithm allows us to make the chunks of size $O(\log(k))$ (for constant accuracy), which turns out to be the dominant source of error in our algorithm.

\jnote{I tried to make it a bit more concrete?  We could also just say something more vague here and move the explanation later where the actual algorithms are in place.}

\section{Preliminaries}

\subsection{Datasets and Differential Privacy}
A \emph{dataset} $\db \in (\row_1,\dots,\row_n) \in X^n$ is an ordered tuple of $n$ elements from some \emph{data universe} $\univ$.  We say that two datasets $\db, \db'$ are \emph{adjacent} if they differ on only a single element and denote this relation by $\db \sim \db'$.
\begin{definition}[Differential Privacy~\cite{DworkMNS06}]
A randomized algorithm $\alg \from \univ^n \to \cR$ is $(\eps, \delta)$-differentially private if for every two adjacent datasets $\db \sim \db'$, and every $R \subseteq \cR$,
$$
\pr{}{\alg(\db) \in R} \leq e^{\eps} \pr{}{\alg(\db') \in R} + \delta.
$$
\end{definition}

We also use the following well known \emph{group privacy} property of $(\eps, 0)$-differential privacy.  We say that two datasets $\db, \db'$ are \emph{$c$-adjacent} if the differ on at most $c$-elements, and denote this relation by $\db \sim_c \db'$.
\begin{lemma}[\cite{DworkMNS06}]
If $\alg \from \univ^n \to \cR$ is $(\eps, 0)$-differentially private, then for every $c \in \N$ and every two $c$-adjacent datasets $\db \sim_c \db'$, and every $R \subseteq \cR$,
$$
\pr{}{\alg(\db) \in R} \leq e^{c \eps} \pr{}{\alg(\db') \in R}.
$$
\end{lemma}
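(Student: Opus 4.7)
\medskip

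\noindent\textbf{Proof Proposal.} The plan is a standard hybrid argument: interpolate between $\db$ and $\db'$ via a sequence of datasets where each consecutive pair is adjacent (differs in one row), and then apply $(\eps,0)$-differential privacy once along each step of the chain.

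More concretely, since $\db \sim_c \db'$, the datasets $\db$ and $\db'$ differ on at most $c$ coordinates. I would fix the set $S \subseteq [n]$ of indices where they differ, with $|S| \leq c$, and enumerate $S = \{i_1,\dots,i_{|S|}\}$. Define intermediate datasets $\db = \db^{(0)}, \db^{(1)}, \dots, \db^{(|S|)} = \db'$, where $\db^{(j)}$ is obtained from $\db^{(j-1)}$ by replacing the row at index $i_j$ with the corresponding row of $\db'$. By construction, $\db^{(j-1)} \sim \db^{(j)}$ for every $j$, so the definition of $(\eps,0)$-differential privacy gives
$$
\pr{}{\alg(\db^{(j-1)}) \in R} \leq e^{\eps} \pr{}{\alg(\db^{(j)}) \in R}
$$
for every measurable $R \subseteq \cR$.

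Chaining these $|S| \leq c$ inequalities together yields
$$
\pr{}{\alg(\db) \in R} \;\leq\; e^{|S|\eps}\, \pr{}{\alg(\db') \in R} \;\leq\; e^{c\eps}\, \pr{}{\alg(\db') \in R},
$$
which is exactly the claimed bound. No genuine obstacle arises here; the only subtlety worth flagging is that the simple hybrid argument relies on $\delta = 0$, since otherwise the additive $\delta$ terms would accumulate (giving the familiar $(c\eps, c e^{(c-1)\eps}\delta)$-type bound rather than a clean multiplicative statement). Because the lemma as stated is for pure differential privacy, the argument above suffices.
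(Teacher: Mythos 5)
Your proof is correct and is the standard hybrid (chain) argument for group privacy under pure $\eps$-differential privacy; the paper itself gives no proof and simply cites \cite{DworkMNS06}, which uses the same argument. Your remark that $\delta=0$ is what makes the chain clean is accurate and appropriately flagged.
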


\subsection{Queries}
In this work we consider two general classes of queries on the dataset: \emph{statistical queries}, and \emph{search queries}.  Although statistical queries are a very special case of search queries, we will present each of them independently to avoid having to use overly abstract notation to describe statistical queries.

\paragraph{Statistical Queries.} A statistical query on a data universe $\univ$ is defined by a Boolean predicate $q \from \univ \to \zo$.  Abusing notation, we define the evaluation of a statistical query $q$ on a dataset $\db = (\row_1,\dots,\row_n)$ to be the average of the predicate over the rows
$$
q(\db) = \frac{1}{n} \sum_{i=1}^{n} q(\row_i) \in [0,1].
$$
For a dataset $\db$, a statistical query $q$, and an answer $a \in [0,1]$, the answer is \emph{$\alpha$-accurate for $q$ on $\db$} if $$\left| q(\db) - a \right| \leq \alpha.$$
%We will use $Q = \set{q: \univ \to \zo}$ to denote a set of statistical queries, and will sometimes use $Q_{\univ}$ when the universe is not fixed (e.g.~if $\univ = \zo^d$ where $d$ is a varying parameter).

\paragraph{Search Queries.} A \emph{search query} $q$ on $\univ^n$ is defined by a \emph{loss function} $\loss_q : \univ^n \times \range \to [0,\infty)$, where $\range$ is an arbitrary set representing the range of possible outputs.  For a dataset $\db \in \univ^n$ and an output $y \in \range$, we will say that $y$ is \emph{$\alpha$-accurate for $q$ on $\db$} if $\loss_q(\db, y) \leq \alpha$. In some cases the value of $\loss_q$ will always be either $0$ or $1$. Thus we simply say that $y$ is \emph{accurate for $q$ on $\db$} if $\loss_q(\db,y)=0$.  For example, if $\univ^n = \pmo^n$, we can define a search query by $\range = \pmo^n$, and $\loss_q(\db, y)=0$ if $\langle x, y \rangle \geq \alpha n$ and $\loss_q(\db,y)=1$ otherwise.  In this case, the search query would ask for any vector $y$ that has correlation $\alpha$ with the dataset.

To see that statistical queries are a special case of search queries, given a statistical query $q$ on $\univ^n$, we can define a search query $\loss_q$ with $\range = [0,1]$ and $\loss_q(\db, a)=|q(x)-a|$.  Then both definitions of $\alpha$-accurate align.

%We will use $Q = \set{q = (\range_{q}, \acc_{q})}$ to denote a set of search queries on $\univ^n$, and will sometimes use $Q_{n}$ when $n$ is not fixed.  Often it will be the case that $\range_{q}$ is the same for all $q \in Q$, in which case we will denote the range for all such queries by $\cR_{Q}$.

\subsection{Models of Interactive Queries}
The goal of this work is to understand the implications of different ways to allow an adversary to query a sensitive dataset.  In each of these models there is an algorithm $\alg$ that holds a dataset $\db \in \univ^n$, and a fixed family of (statistical or search) queries $Q$ on $\univ^n$, and a bound $k$ on the number of queries that $\alg$ has to answer. There is also an adversary $\adv$ that chooses the queries. The models differ in how the queries chosen by $\adv$ are given to $\alg$.

\subsubsection*{Offline}
In the \emph{offline} model, the queries $q_1,\dots,q_k \in Q$ are specified by the adversary $\adv$ in advance and the algorithm $\alg$ is given all the queries at once and must provide answers. Formally, we define the following function $\offlinealg{\adv}{\alg} : \univ^n \to Q^k \times \range^k$ depending $\adv$ and $\alg$.
\begin{figure}[h!]
\begin{framed}
\begin{algorithmic}
\INDSTATE[0]{Input: $\db \in X^n$.}
\INDSTATE[0]{$\adv$ chooses $q_1, \cdots, q_k \in Q$.}
\INDSTATE[0]{$\alg$ is given $\db$ and $q_1, \cdots, q_k$ and outputs $a_1, \cdots, a_k \in \range$.}
\INDSTATE[0]{Output: $(q_1, \cdots, q_k, a_1,\cdots,a_k) \in Q^k \times \range^k$.}
\end{algorithmic}
\end{framed}
\vspace{-6mm}
\caption{$\offlinealg{\adv}{\alg} : \univ^n \to Q^k \times \range^k$}
\end{figure}

\subsubsection*{Online Non-Adaptive}
In the \emph{online non-adaptive} model, the queries $q_1,\dots,q_k \in Q$ are again fixed in advance by the adversary, but are then given to the algorithm one at a time, and the algorithm must give an answer to query $q_j$ before it is shown $q_{j+1}$. We define a function $\onlinealg{\adv}{\alg} : \univ^n \to Q^k \times \range^k$ depending on the adversary $\adv$ and the algorithm $\alg$ as follows.
\mnote{Why not say that $a_j$ may depend on $q_1, \dots, q_{j}$ and any internal state of the mechanism?}
\begin{figure}[h!]
\begin{framed}
\begin{algorithmic}
\INDSTATE[0]{Input: $\db \in X^n$.}
\INDSTATE[0]{$\adv$ chooses $q_1, \cdots, q_k \in Q$.}
\INDSTATE[0]{$\alg$ is given $\db$.}
\INDSTATE[0]{For $j = 1,\dots,k$:}
\INDSTATE[1]{$\alg$ is given $q_j$ and outputs $a_j \in \range$.\footnote{Note that $\alg$ is stateful and its answer $a_j$ will depend on $\db$ and previous queries and answers.}}
\INDSTATE[0]{Output: $(q_1, \cdots, q_k, a_1,\cdots,a_k) \in Q^k \times \range^k$.}
\end{algorithmic}
\end{framed}
\vspace{-6mm}
\caption{$\onlinealg{\adv}{\alg} : \univ^n \to Q^k \times \range^k$}
\end{figure}

\subsubsection*{Online Adaptive}
In the \emph{online adaptive} model, the queries $q_1,\dots,q_k \in Q$ are not fixed, and the adversary may choose each $q_j$ based on the answers that the algorithm gave to the previous queries.  We define a function $\adaptivealg{\adv}{\alg} : \univ^n \to Q^k \times \range^k$ depending on the adversary $\adv$ and the algorithm $\alg$ as follows.
\begin{figure}[h!]
\begin{framed}
\begin{algorithmic}
\INDSTATE[0]{Input: $\db \in X^n$.}
\INDSTATE[0]{$\alg$ is given $\db$.}
\INDSTATE[0]{For $j = 1,\dots,k$:}
\INDSTATE[1]{$\adv$ chooses a query $q_j \in Q$.}
\INDSTATE[1]{$\alg$ is given $q_j$ and outputs $a_j \in \range$.}
\INDSTATE[0]{Output: $(q_1, \cdots, q_k, a_1,\cdots,a_k) \in Q^k \times \range^k$.}
\end{algorithmic}
\end{framed}
\vspace{-6mm}
\caption{$\adaptivealg{\adv}{\alg} : \univ^n \to Q^k \times \range^k$ \label{fig:AdaptiveAlg}}
\end{figure}

\begin{defn}[Differential Privacy for Interactive Mechanisms]
In each of the three cases --- Offline, Online Non-Adaptive, or Online Adaptive ---  we say that $\alg$ is $(\varepsilon,\delta)$-differentially private if, for all adversaries $\adv$, respectively $\offlinealg{\adv}{\alg}$, $\onlinealg{\adv}{\alg}$, or $\adaptivealg{\adv}{\alg}$ is $(\varepsilon,\delta)$-differentially private.
\end{defn}

\begin{defn}[Accuracy for Interactive Mechanisms]
In each case --- Offline, Online Non-Adaptive, or Online Adaptive queries ---  we say that $\alg$ is $(\alpha,\beta)$-accurate if, for all adversaries $\adv$ and all inputs $\db \in \univ^n$, \begin{equation}\pr{q_1, \cdots, q_k, a_1,\cdots,a_k}{\max_{j \in [k]} L_{q_j}(x,a_j) \leq \alpha} \geq 1-\beta,\label{eqn:AccDef}\end{equation} where $(q_1, \cdots, q_k, a_1,\cdots,a_k)$ is respectively drawn from one of $\offlinealg{\adv}{\alg}(\db)$, $\onlinealg{\adv}{\alg}(\db)$, or $\adaptivealg{\adv}{\alg}(\db)$. We also say that $\alg$ is $\alpha$-accurate if the above holds with \eqref{eqn:AccDef} replaced by $$\ex{q_1, \cdots, q_k, a_1,\cdots,a_k}{\max_{j \in [k]} L_{q_j}(x,a_j)} \leq \alpha.$$
\end{defn}

\section{A Separation Between Offline and Online Queries}

In this section we prove that online accuracy is strictly harder to achieve than offline accuracy, even for statistical queries.  We prove our results by constructing a set of statistical queries that we call \emph{prefix queries} for which it is possible to take a dataset of size $n$ and accurately answer superpolynomially many offline prefix queries in a differentially private manner, but it is impossible to answer more than $O(n^2)$ online prefix queries while satisfying differential privacy.

We now define the family of prefix queries.  These queries are defined on the universe $X = \pmo^* = \bigcup_{j=0}^{\infty} \pmo^j$ consisting of all finite length binary strings.\footnote{All of the arguments in this section hold if we restrict to strings of length at most $k+\log n$. However, we allow strings of arbitrary length to reduce notational clutter.}\jnote{Do we really want to make the universe $\pmo^*$?  It's more consistent with the literature to make the universe be $\pmo^d$.  It also highlights the role that the dimension plays in the problem.}\tnote{Perhaps it suffices to have a remark saying we can reduce the universe. Allowing arbirary strings means there are fewer parameters to keep track of.}  For $x, y \in \pmo^*$, we use $y \preceq x$ to denote that $y$ is a \emph{prefix} of $x$.  Formally
$$
y \preceq x \qquad\iff\qquad |y|\leq|x| ~~\text{and} ~~\forall i = 1, \dots, |y|~x_i=y_i.
$$

\begin{definition}
For any finite set $S \subseteq \pmo^*$ of finite-length binary strings, we define the \emph{prefix query} $q_{S} \from \pmo^* \to \pmo$ by
$$
q_{S}(x) = 1 ~~~\iff~~~ \exists y \in S ~~~ y \preceq x.
$$
We also define
\begin{align*}
&\qprefix = \set{q_{S} \mid S \subset \pmo^*} \\
&\qprefix^{B} = \set{q_{S} \mid S \subset \pmo^*, |S| \leq B}
\end{align*}
to be the set of all prefix queries and the set of prefix queries with sizes bounded by $B$, respectively.
\end{definition}

\subsection{Answering Offline Prefix Queries} \label{sec:prefixalg}
We now prove that there is a differentially private algorithm that answers superpolynomially many prefix queries, provided that the queries are specified offline.
\begin{theorem} [Answering Offline Prefix Queries] \label{thm:prefixalg}
For every $\alpha, \eps \in (0, 1/10)$, every $B \in \N$, and every $n \in \N$, there exists a 
$$
k = \min\left\{2^{\Omega(\sqrt{\alpha^3 \eps n})}, 2^{\Omega(\alpha^3 \eps n / \log(B))}\right\}
$$ 
and an $(\eps, 0)$-differentially private algorithm $\alg_{\mathsf{prefix}} \from \univ^n \times (\qprefix^{B})^k \to \R^k$ that is $(\alpha, 1/100)$-accurate for $k$ offline queries from $\qprefix^{B}$.
\end{theorem}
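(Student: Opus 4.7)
The plan is to reduce the problem to answering $k$ statistical queries on a universe whose \emph{effective} size, once the queries are fixed, is only $O(kB)$, and then invoke the small-database (\textsf{SmallDB}) mechanism of Blum, Ligett, and Roth. The key observation is that for any finite $T \subseteq \pmo^*$ and any $x \in \pmo^*$, the subset of $T$ consisting of prefixes of $x$ forms a chain under $\preceq$, since any two prefixes of a common string are comparable; consequently this subset is determined by its (unique) longest element, or by the empty set if no element of $T$ is a prefix of $x$.

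Given any $k$ queries $q_{S_1}, \ldots, q_{S_k} \in \qprefix^{B}$, I would set $T = S_1 \cup \dots \cup S_k$, so $|T| \leq kB$, and define a map $\phi \from \pmo^* \to T \cup \{\bot\}$ that sends $x$ to the longest $y \in T$ with $y \preceq x$, or to $\bot$ if no such $y$ exists. For each $j$, I define the induced statistical query $q'_j$ on the reduced universe $X' = T \cup \{\bot\}$ by letting $q'_j(y) = 1$ iff $y \neq \bot$ and some $z \in S_j$ satisfies $z \preceq y$. The chain observation then implies $q_{S_j}(x) = q'_j(\phi(x))$ for every $x$ and $j$, so evaluating the original queries on $\db = (x_1, \ldots, x_n)$ yields exactly the same answers as evaluating the induced queries on the reduced dataset $(\phi(x_1), \ldots, \phi(x_n)) \in (X')^n$.

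Finally, I would run \textsf{SmallDB} on input $(\phi(x_1), \ldots, \phi(x_n))$ with universe $X'$ and queries $q'_1, \ldots, q'_k$. Since $\phi$ is applied row-wise and everything afterwards is post-processing, the overall algorithm is $(\varepsilon,0)$-differentially private. \textsf{SmallDB} achieves $(\alpha, 1/100)$-accuracy on $k$ statistical queries over a universe of size $N$ using $n = O\bigl((\log N)(\log k)/(\alpha^3 \varepsilon)\bigr)$ samples; substituting $N \leq kB + 1$ yields the condition $n = O\bigl((\log k)(\log k + \log B)/(\alpha^3 \varepsilon)\bigr)$, and separating the two terms and solving for $k$ gives $k = \min\bigl\{2^{\Omega(\sqrt{\alpha^3 \varepsilon n})},\, 2^{\Omega(\alpha^3 \varepsilon n/\log B)}\bigr\}$, matching the theorem. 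There is no real obstacle here: the whole argument is driven by the chain observation, and the only subtlety worth emphasizing is that defining $T$ and $\phi$ requires knowing all $k$ queries simultaneously, which is exactly why this approach is restricted to the offline model and why a matching online algorithm is not expected.
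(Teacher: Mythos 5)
Your proposal matches the paper's proof essentially step for step: reduce the infinite universe to $T \cup \{\bot\}$ (the paper uses $S \cup \{\emptyset\}$ with the empty string as the sentinel) via the longest-prefix map, observe that this map commutes with the queries because prefixes of a common string form a chain, and then run BLR/\textsf{SmallDB} on a universe of size $\leq kB+1$. The bound on $k$ follows by the same rearrangement, so this is the same argument.
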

We remark that it is possible to answer even more offline prefix queries by relaxing to $(\eps, \delta)$-differential privacy for some negligibly small $\delta > 0$.  However, we chose to state the results for $(\eps, 0)$-differential privacy to emphasize the contrast with the lower bound, which applies even when $\delta > 0$, and to simplify the statement.

Our algorithm for answering offline queries relies on the existence of a good differentially private algorithm for answering \emph{arbitrary} offline statistical queries.  For concreteness, the so-called ``BLR mechanism'' of Blum, Ligett, and Roth~\cite{BlumLR08} suffices, although different parameter tradeoffs can be obtained using different mechanisms.  Differentially private algorithms with this type of guarantee exist only when the data universe is bounded, which is not the case for prefix queries.  However, as we show, when the queries are specified offline, we can replace the infinite universe $\univ = \pmo^*$ with a finite, restricted universe $\univ'$ and run the BLR mechanism.  Looking ahead, the key to our separation will be the fact that this universe restriction is only possible in the offline setting.  Before we proceed with the proof of Theorem~\ref{thm:prefixalg}, we will state the guarantees of the BLR mechanism.
\begin{theorem}[\cite{BlumLR08}]
For every $0 < \alpha, \eps \leq 1/10$ and every finite data universe $\univ$, if $\cQ_{\mathsf{SQ}}$ is the set of all statistical queries on $\univ$, then for every $n \in \N$, there is a
$$
k = 2^{\Omega(\alpha^3 \eps n / \log |\univ|)}
$$
and an $(\eps, 0)$-differentially private algorithm $\alg_{\mathsf{BLR}} \from \univ^n \times \cQ_{\mathsf{SQ}}^{k} \to \R^k$ that is $(\alpha, 1/100)$-accurate for $k$ offline queries from $\cQ_{\mathsf{SQ}}$.
\end{theorem}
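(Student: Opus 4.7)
The plan is to prove the BLR theorem by combining a sampling argument (to show every dataset has a small ``synthetic'' approximation with respect to any fixed collection of statistical queries) with an application of the exponential mechanism over the space of small synthetic datasets. Concretely, given a dataset $\db \in \univ^n$ and offline queries $q_1,\dots,q_k \in \cQ_{\mathsf{SQ}}$, the algorithm $\alg_{\mathsf{BLR}}$ will sample a dataset $\hat\db \in \univ^m$ with $m$ much smaller than $n$, and then report $a_j = q_j(\hat\db)$.

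First I would establish the key existence statement: for $m = O(\log(k)/\alpha^2)$, for every $\db \in \univ^n$ there exists $\db^{\ast} \in \univ^m$ with $\max_{j \in [k]} |q_j(\db^{\ast}) - q_j(\db)| \leq \alpha/2$. This follows from drawing a uniformly random subsample of size $m$ from $\db$ and applying Hoeffding's inequality to each $q_j$ (each predicate value is in $\{0,1\}$ and has mean $q_j(\db)$), combined with a union bound over the $k$ queries; choosing the constant in $m$ large enough makes the failure probability strictly less than $1$, so a good $\db^{\ast}$ must exist.

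Next I would instantiate the exponential mechanism with range $\cR = \univ^m$ and score function
\[
s(\db, \db^{\ast}) \;=\; -\max_{j \in [k]} \left| q_j(\db^{\ast}) - q_j(\db) \right|.
\]
Each statistical query has global sensitivity $1/n$ in $\db$, so $s$ has sensitivity $1/n$. The exponential mechanism that outputs $\hat\db$ with probability proportional to $\exp(\eps n \cdot s(\db, \hat\db)/2)$ is then $(\eps,0)$-differentially private. Its standard utility guarantee says that with probability at least $1 - \beta$ over $\hat\db$,
\[
s(\db,\hat\db) \;\geq\; \max_{\db^{\ast} \in \univ^m} s(\db, \db^{\ast}) \;-\; \frac{2}{\eps n}\bigl(\log |\univ|^m + \log(1/\beta)\bigr).
\]
Combining with the existence step (which upper-bounds the optimum $|s|$ by $\alpha/2$) gives an overall error of at most
\[
\frac{\alpha}{2} \;+\; \frac{2m\log|\univ| + 2\log(1/\beta)}{\eps n}.
\]
Setting $\beta = 1/100$ and requiring this to be $\leq \alpha$, then plugging in $m = \Theta(\log(k)/\alpha^2)$, yields the condition $\log k \leq \Omega(\alpha^3 \eps n / \log|\univ|)$, which is exactly the claimed bound on $k$. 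Privacy is immediate from the exponential mechanism, and accuracy follows from the two-step error decomposition; reporting $a_j = q_j(\hat\db)$ is post-processing and preserves privacy.

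The main obstacle, and the step requiring the most care, is balancing the two sources of error: the sampling error $\alpha/2$ forces $m$ to scale like $\log(k)/\alpha^2$, while the exponential-mechanism error scales like $m\log|\univ|/(\eps n)$. These two terms must be set simultaneously to make the total at most $\alpha$, and tracking how their product produces the $\alpha^3$ in the exponent is the delicate part of the calculation; everything else (sensitivity of $s$, union bound over queries, privacy of the exponential mechanism, post-processing) is routine.
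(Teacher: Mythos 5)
Your proposal is correct, but note that the paper does not prove this statement at all: it is imported as a black box from \cite{BlumLR08}, so there is no internal proof to compare against. What you have written is a faithful reconstruction of the standard argument for the cited result --- the existence of an $m = O(\log(k)/\alpha^2)$-row synthetic dataset via subsampling and Hoeffding plus a union bound, followed by the exponential mechanism over $\univ^m$ with the max-error score of sensitivity $1/n$, whose utility term $O(m\log|\univ|/\eps n)$ combined with the sampling error $\alpha/2$ yields exactly the $k = 2^{\Omega(\alpha^3\eps n/\log|\univ|)}$ trade-off. The error budgeting and the privacy/post-processing steps are all handled correctly.
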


We are now ready to prove Theorem~\ref{thm:prefixalg}.
\begin{proof}[Proof of Theorem~\ref{thm:prefixalg}]
Suppose we are given a set of queries $q_{S_1},\dots,q_{S_k} \in \qprefix^{B}$ and a dataset $\db \in \univ^n$ where $\univ = \pmo^*$.  Let $S = \bigcup_{j=1}^{k} S_{j}$.  We define the universe
$\univ_{S} = S \cup \set{\emptyset}$ where $\emptyset$ denotes the empty string of length $0$.
Note that this universe depends on the choice of queries, and that $|\univ_{S}| \leq kB + 1.$  Since $\univ_{S} \subset \univ$, it will be well defined to restrict the domain of each query $q_{S_{j}}$ to elements of $\univ_{S}$.

Next, given a dataset $\db = (\row_1,\dots,\row_n) \in \univ^n$, and a collection of sets $S_1,\dots,S_k \subset \univ$, we give a procedure for mapping each element of $\db$ to an element of $\univ_{S}$ to obtain a new dataset $\db^{S} = (\row^S_1,\dots,\row^S_n) \in \univ_{S}^n$ that is equivalent to $\db$ with respect to the queries $q_{S_1},\dots,q_{S_k}$.  Specifically, define $r_{S} \from \univ \to \univ_{S}$ by
$$
r_{S}(\row) = \argmax_{y \in \univ_{S}, y \preceq x} |y|.
$$
That is, $r_{S}(\row)$ is the longest string in $\univ_{S}$ that is a prefix of $\row$.  We summarize the key property of $r_{S}$ in the following claim
\begin{claim} \label{clm:reduction}
For every $\row \in \univ$, and $j = 1,\dots,k$, $q_{S_j}(r_{S}(\row)) = q_{S_j}(\row)$.
\end{claim}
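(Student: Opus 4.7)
The plan is to unwind the two definitions and then use the elementary fact that among any two prefixes of a common string, the shorter is itself a prefix of the longer. Recall that $q_{S_j}(\row) = 1$ iff some $y \in S_j$ satisfies $y \preceq \row$, while $r_S(\row)$ is the (unique) longest element of $\univ_S = S \cup \{\emptyset\}$ that is a prefix of $\row$; the inclusion of $\emptyset$ guarantees that this $\argmax$ is always taken over a nonempty set. The claim asserts equality of two $\{0,1\}$-valued quantities, so it suffices to prove the two implications separately.

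First I would show the forward direction: if $q_{S_j}(\row) = 1$, then some $y \in S_j$ has $y \preceq \row$. Since $S_j \subseteq S \subseteq \univ_S$, this $y$ is a feasible candidate in the $\argmax$ defining $r_S(\row)$, so $|r_S(\row)| \geq |y|$. But $y$ and $r_S(\row)$ are both prefixes of $\row$, so the shorter of them is a prefix of the longer; therefore $y \preceq r_S(\row)$, which witnesses $q_{S_j}(r_S(\row)) = 1$. For the reverse direction, if $q_{S_j}(r_S(\row)) = 1$, pick $y \in S_j$ with $y \preceq r_S(\row)$. By the definition of $r_S$ we have $r_S(\row) \preceq \row$, and $\preceq$ is transitive, so $y \preceq \row$ and hence $q_{S_j}(\row) = 1$.

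I do not expect any real obstacle here: the claim is a definitional check whose only moving part is the transitivity of the prefix relation together with the ``shorter is a prefix of longer'' observation used in the forward direction. The reason this works is precisely that $\univ_S$ was built to contain every string that could possibly witness a query, so collapsing $\row$ down to the longest such witness preserves the value of every $q_{S_j}$ simultaneously. This in turn justifies replacing $\db$ by $\db^S \in \univ_S^n$ before invoking $\alg_{\mathsf{BLR}}$ on the reduced universe of size at most $kB+1$, which is the only place the offline assumption is used.
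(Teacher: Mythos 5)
Your proof is correct and follows essentially the same route as the paper's: both argue each implication separately, with the forward direction using the fact that two prefixes of a common string are comparable (the shorter being a prefix of the longer) together with the maximality of $r_S(\row)$, and the reverse direction using transitivity of $\preceq$ together with $r_S(\row) \preceq \row$.
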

\begin{proof}[Proof of Claim~\ref{clm:reduction}]
First, we state a simple but important fact about prefixes:  If $y, y'$ are both prefixes of a string $x$ with $|y| \le |y'|$, then $y$ is a prefix of $y'$.  Formally,
\begin{equation}\label{eq:prefixfact}
\forall x,y,y' \in \{0,1\}^* \qquad (y \preceq x ~~\wedge~~ y' \preceq x ~~\wedge~~ |y| \le |y'|) ~~ \implies ~~ y \preceq y' .
\end{equation}
Now, fix any $\row \in \univ$ and any query $q_{S_j}$ and suppose that $q_{S_j}(\row) = 1$.  Then there exists a string $y \in S_j$ such that $y \preceq \row$.  By construction, we have that $r_{S}(\row) \preceq x$ and that $|r_{S}(\row)| \geq |y|$.  Thus, by~\eqref{eq:prefixfact}, we have that $y \preceq r_{S}(\row)$.  Thus, there exists $y \in S_j$ such that $y \preceq r_{S}(\row)$, which means $q_{S_j}(r_{S}(\row)) = 1$, as required.

Next, suppose that $q_{S_j}(r_{S}(\row)) = 1$.  Then, there exists $y \in S_j$ such that $y \preceq r_{S}(\row)$.  By construction, $r_{S}(\row) \preceq \row$, so by transitivity we have that $y \preceq \row$.  Therefore, $q_{S_j}(\row) = 1$, as required.
\end{proof}

Given this lemma, we can replace every row $\row_i$ of $\db$ with $\row^S_i = r_{S}(\row_i)$ to obtain a new dataset $\db^{S}$ such that for every $j = 1,\dots,k$,
$$
q_{S_j}(\db^{S}) = \frac{1}{n} \sum_{i=1}^{n} q_{S_j}(\row^S_i) = \frac{1}{n} \sum_{i=1}^{n} q_{S_j}(\row_i) = q_{S_j}(\db).
$$
Thus, we can answer $q_{S_1}, \cdots, q_{S_k}$ on $\db^S \in \univ_S^n$, rather than on $\db \in \univ^n$. Note that each row of $\db^{S}$ depends only on the corresponding row of $\db$.  Hence, for every set of queries $q_{S_1},\dots,q_{S_k}$, if $\db \sim \db'$ are adjacent datasets, then $\db^{S} \sim \db'^{S}$ are also adjacent datasets. Consequently, applying a $(\varepsilon,\delta)$-differentially private algorithm to $\db^S$ yields a $(\varepsilon,\delta)$-differentially private algorithm as a function of $\db$.

In particular, we can give $\alpha$-accurate answers to these queries using the algorithm $\alg_{\mathsf{BLR}}$ as long as 
$$
k \leq 2^{\Omega(\alpha^3 \eps n / \log |\univ_{S}|)} = 2^{\Omega(\alpha^3 \eps n / \log(kB+1))}.
$$
Rearranging terms gives the bound in Theorem~\ref{thm:prefixalg}.  We specify the complete algorithm $\alg_{\mathsf{prefix}}$ in Figure \ref{alg:prefix}.
\begin{figure}[h!]
\begin{framed}
\begin{algorithmic}
\INDSTATE[0]{$\alg_{\mathsf{prefix}}(\db; q_{S_1},\dots,q_{S_k})$:}
\INDSTATE[1]{Write $\db = (\row_1,\dots,\row_n) \in \univ^n$, $S = \bigcup_{j=1}^{k} S_j$, $\univ_S = S \cup \set{\emptyset}$.}
\INDSTATE[1]{For $i = 1,\dots,n$, let $\row^S_i = r_{S}(\row_i)$ and let $\db^{S} = (\row^S_1,\dots,\row^S_n) \in \univ_S^n$.}
\INDSTATE[1]{Let $(a_1,\dots,a_k) = \alg_{\mathsf{BLR}}(\db^{S}; q_{S_1},\dots,q_{S_k})$.}
\INDSTATE[1]{Output $(a_1,\dots,a_k)$.}
\end{algorithmic}
\end{framed}
\vspace{-6mm}
\caption{$\alg_{\mathsf{prefix}}$} \label{alg:prefix}
\end{figure}

\end{proof}

\subsection{A Lower Bound for Online Prefix Queries} \label{sec:prefixlb}
Next, we prove a lower bound for online queries.  Our lower bound shows that the simple approach of perturbing the answer to each query with independent noise is essentially optimal for prefix queries.  Since this approach is only able to answer $k = O(n^2)$ queries, we obtain an exponential separation between online and offline statistical queries for a broad range of parameters.

\begin{theorem}[Lower Bound for Online Prefix Queries] \label{thm:prefixlb}
There exists a function $k = O(n^2)$ such that for every sufficiently large $n \in \N$, there is no $(1, 1/30n)$-differentially private algorithm $\alg$ that takes a dataset $\db \in \univ^n$ and is $(1/100, 1/100)$-accurate for $k$ online queries from $\qprefix^{n}$. 
\end{theorem}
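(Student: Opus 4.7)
My plan is to reduce to the marginal-query lower bound of Bun, Ullman, and Vadhan~\cite{BunUV14} mentioned in the technical overview. That result, based on Tardos's short fingerprinting codes, exhibits a distribution over matrix datasets $x \in \pmo^{n \times k}$ with $k = O(n^2)$ such that for any $(1, 1/30n)$-differentially private mechanism $M$, if $M(x)$ gives answers within constant error to all $k$ column marginals $\mu_j(x) = \frac{1}{n} \sum_i \frac{1+x_{i,j}}{2}$, then a tracing algorithm using the answers can identify a row of $x$, violating differential privacy.

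To lift this to prefix queries, I would construct an \emph{embedding}: a row-wise encoding $\phi : \pmo^k \to \pmo^*$, extended to datasets by $\phi(x) = (\phi(x_1), \ldots, \phi(x_n))$, together with a fixed online non-adaptive sequence of prefix queries $q_{S_1}, \ldots, q_{S_k} \in \qprefix^n$ (i.e., each $|S_j| \leq n$) such that for every $x$ and every $j$, $q_{S_j}(\phi(x)) = \mu_j(x)$. Row-wiseness guarantees $x \sim x' \Rightarrow \phi(x) \sim \phi(x')$, so composing any hypothetical $(1, 1/30n)$-differentially private $(1/100, 1/100)$-accurate online-non-adaptive prefix-query mechanism with $\phi$ yields a $(1, 1/30n)$-differentially private mechanism answering the $k$ marginals of $x$ accurately, contradicting~\cite{BunUV14}.

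The key technical obstacle is the embedding itself. The naive attempt -- prefix each row with a unique $\lceil \log n \rceil$-bit identifier $\mathrm{id}_i$, set $x'_i = \mathrm{id}_i \circ x_i$, and take $S_j = \{\mathrm{id}_i \circ x_i[1{:}j{-}1] \circ +1 : i \in [n]\}$ -- satisfies $|S_j| \leq n$ and correctly computes $\mu_j(x)$, but the sets $S_j$ depend on the private content bits $x_i[1{:}j{-}1]$, which is forbidden since the adversary must fix all queries in advance without seeing the data. My plan to resolve this is to design the encoding so that bit $j$ of each row is placed at a \emph{publicly known} position reached via a publicly known ``anchor'' prefix (one per row), with the intervening content concealed so that $S_j$ is determined purely by the public structure of the encoding. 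Intuitively, $\phi$ lays out each row as a path through a tree of depth $k$ that has exactly $n$ designated anchor points at each level, and the $j$-th query $S_j$ is the set of at most $n$ strings ``anchor at level $j$ extended by $+1$''; this keeps $|S_j| \leq n$ and fixes $S_j$ before seeing the data, while $q_{S_j}(\phi(x_i)) = 1 \iff x_{i,j} = +1$.

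Assuming such an embedding is constructed, the reduction is immediate: composing the prefix-query mechanism with $\phi$ yields the forbidden marginal-answering mechanism on the Tardos distribution, and the fingerprinting-code tracing argument produces the contradiction. The bottleneck of the whole proof is therefore the explicit construction of $\phi$ and the corresponding prefix-query sequence meeting all three constraints (preserved adjacency, publicly fixed queries, and $|S_j| \leq n$) simultaneously; everything else is bookkeeping on top of the already-established marginal-query lower bound.
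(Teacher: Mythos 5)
Your plan to reduce to the marginal-query lower bound of Bun--Ullman--Vadhan is in the right spirit, and your paper's proof is indeed a cousin of that argument (it uses the fingerprinting lemma of Appendix~\ref{sec:Fingerprinting} directly rather than invoking the BUV/Tardos result as a black box). But the specific route you propose---constructing a \emph{data-independent} encoding $\phi$ together with \emph{publicly fixed} prefix queries $S_1,\dots,S_k$ with $|S_j| \le n$ satisfying $q_{S_j}(\phi(x)) = \mu_j(x)$---cannot work, and the obstacle you flag is in fact fatal rather than a technicality to be engineered around.

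Here is a clean reason such an embedding cannot exist. If the queries $S_1,\dots,S_k$ did not depend on the data, then they could be handed to the mechanism all at once, i.e.\ they would constitute an \emph{offline} reduction. But Theorem~\ref{thm:prefixalg} shows that an $(\eps,0)$-differentially private mechanism can answer $k = \exp(\tilde\Omega(\sqrt{n}))$ offline prefix queries from $\qprefix^n$ with constant error. Composing that mechanism with your $\phi$ would privately answer $k = O(n^2)$ marginals of $x \in \pmo^{n\times k}$, contradicting the BUV lower bound you intend to invoke. So a public embedding of the kind you describe is impossible; any correct construction must exploit the online structure in an essential way. You can also see this directly from the sequential nature of prefix queries: a set $S_j$ of strings all at depth $\ell_j$ partitions the encoded rows according to their length-$\ell_j$ prefix, and that prefix determines (by restriction) the length-$\ell_{j'}$ prefix for every shallower $\ell_{j'} \le \ell_j$, hence determines $x_{i,j'}$ for every $j'$ already tested at a shallower depth. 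After $\log_2(2n)$ columns have been tested, the number of distinct depth-$\ell_j$ states exceeds $2n$, so $|S_j| \le n$ cannot separate the $+1$ states from the $-1$ states. No ``anchor'' trick escapes this, because a prefix query cannot skip over positions: any string that reaches bit $j$'s slot must pin down every earlier slot.

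The piece you are missing is that data-dependent queries are \emph{not} forbidden here---they are exactly what the paper uses. In the online non-adaptive model the accuracy guarantee quantifies over all pairs $(\db, q_1,\dots,q_k)$, so the lower-bound analyst may choose queries that are correlated with the dataset; and the privacy guarantee is checked with the query sequence held fixed while only the dataset is changed. The paper's construction sets $S_j = \{(\mathrm{binary}(i), c_i^1,\dots,c_i^{j-1}, 1) : i \in [n]\}$, baking the private prefix $c_i^{1:j-1}$ into the query string for each row. The crucial structural property is the \emph{online lag}: $S_j$ depends only on $c^{1},\dots,c^{j-1}$ and never on $c^j$. This is what makes the tracing step go through: fixing the queries and replacing row $i^*$'s data, the mechanism's $j$-th answer $a^j$ is independent of $c_{i^*}^j$, so $Z' = \sum_j a^j(c_{i^*}^j - p^j)$ has mean zero and variance $O(k)$, while differential privacy forces $Z'$ to be close in distribution to $Z = \sum_j a^j(c_{i^*}^j - p^j)$ computed on the real data, which the fingerprinting lemma shows has mean $\Omega(k/n)$. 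The same lag is also why this argument does \emph{not} carry over to the offline model---where $q_k$ reveals all of $c^1,\dots,c^{k-1}$ up front---consistent with Theorem~\ref{thm:prefixalg}. If you want to salvage your reduction-to-BUV framing, the fix is to let $S_j$ depend on the data exactly as the paper does, and then either apply the fingerprinting lemma directly as the paper does, or reprove the tracing step of BUV in this correlated-query setting; a black-box invocation of BUV will not suffice, because the map $x \mapsto (\phi(x), Q(x))$ changes the query sequence along with the dataset and so is not directly a DP-preserving reduction.
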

In this parameter regime, our algorithm from Section~\ref{sec:prefixalg} answers $k=\exp(\tilde{\Omega}(\sqrt{n}))$ offline prefix queries, so we obtain an exponential separation.

Our lower bound relies on a connection between \emph{fingerprinting codes} and differential privacy \cite{Ullman13, BunUV14, SteinkeU15b, DworkSSUV15}.  However, instead of using fingerprinting codes in a black-box way, we will make a direct use of the main techniques.  Specifically, we will rely heavily on the following key lemma. The proof appears in Appendix \ref{sec:Fingerprinting}.	

\begin{lem}[Fingerprinting Lemma] \label{lem:Fingerprinting} \tnote{Somehow having $c$ be a random variable still seems unnatural to me...}\jnote{True, random variables should be upper case.  We could change it, or at least write $p \sim [-1,1], c \sim C_p$ to make it clear that $c$ is the realization not the rv.}
Let $f : \pmo^n \to [-1,1]$ be any function.  Suppose $p$ is sampled from the uniform distribution over $[-1,1]$ and $c \in \pmo^n$ is a vector of $n$ independent bits, where each bit has expectation $p$.  Letting $\overline{c}$ denote the coordinate-wise mean of $c$, we have
$$\ex{p,c}{f(c)\cdot \sum_{i \in [n]}(c_i-p) + 2\big| f(c) - \overline{c} \, \big|} \geq \frac{1}{3}.$$
\end{lem}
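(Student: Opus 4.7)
The plan is to fix $p$ first and convert the first term into a derivative with respect to $p$ via a score-function identity, then eliminate that derivative by averaging over $p$ and integrating by parts against the uniform density on $[-1,1]$. Define $g(p) = \ex{c}{f(c)}$ where the expectation is over $c \in \pmo^n$ with i.i.d.\ $p$-biased coordinates. Since $\Pr[c_i = b \mid p] = (1+bp)/2$ for $b \in \pmo$, differentiating gives $\frac{d}{dp}\log\Pr[c_i = b \mid p] = (b-p)/(1-p^2)$, and summing over $i$ yields the identity
$$
\sum_{i=1}^{n}(c_i - p) \;=\; (1-p^2) \cdot \frac{d}{dp}\log\Pr[c \mid p].
$$
Multiplying both sides by $f(c)\Pr[c \mid p]$ and summing over $c \in \pmo^n$ produces $\ex{c}{f(c)\sum_i(c_i - p)} = (1-p^2)\, g'(p)$.

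Next I would average over $p \sim U[-1,1]$ (density $1/2$) and integrate by parts with $u = (1-p^2)/2$, $dv = g'(p)\,dp$. The boundary terms vanish because $(1-p^2)$ is zero at $\pm 1$, leaving
$$
\ex{p,c}{f(c)\sum_{i=1}^{n}(c_i - p)} \;=\; \int_{-1}^{1} p \cdot g(p)\, dp.
$$
For the second term, Jensen's inequality together with $\ex{c}{\overline{c}} = p$ gives $\ex{c}{|f(c)-\overline{c}|} \geq |g(p)-p|$, hence $\ex{p,c}{2|f(c)-\overline{c}|} \geq \int_{-1}^{1}|g(p)-p|\,dp$.

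It then remains to show that $\int_{-1}^{1}\bigl[p\cdot g(p) + |g(p)-p|\bigr]dp \geq 1/3$. Writing $p\cdot g(p) = p^2 + p(g(p)-p)$ and using that, for all $p \in [-1,1]$,
$$
p\cdot(g(p)-p) + |g(p)-p| \;\geq\; -|p|\cdot|g(p)-p| + |g(p)-p| \;=\; (1-|p|)\,|g(p)-p| \;\geq\; 0,
$$
the integral is bounded below by $\int_{-1}^{1}p^2\,dp = 2/3$, which comfortably beats the required $1/3$. The step that I expect to require the most care is the score-function/integration-by-parts bookkeeping — in particular, justifying the exchange of $\frac{d}{dp}$ with the finite sum over $c$ (fine for $p \in (-1,1)$ since each $\Pr[c \mid p]$ is a polynomial in $p$) and confirming the boundary contribution vanishes; both are handled by the prefactor $(1-p^2)$.
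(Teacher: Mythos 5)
Your proof is correct. The first half — the score-function identity $\ex{c \sim p}{f(c)\sum_i(c_i-p)} = (1-p^2)g'(p)$ followed by integration by parts against the uniform density to turn it into $\int_{-1}^1 p\,g(p)\,dp$ — is exactly the paper's Lemma~\ref{lem:Corr} and Lemma~\ref{lem:Parts}, and your remarks about polynomiality of $g$ and the vanishing boundary term are the right justifications.

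Where you diverge is the endgame. The paper first proves a squared-error version (Proposition~\ref{prop:CorrErr2}): it applies Jensen to $(f(c)-\overline{c})^2$, expands $\ex{p}{(g(p)-p)^2} = \ex{p}{g(p)^2} - 2\ex{p}{g(p)p} + \tfrac13$, cancels the $-2\ex{p}{g(p)p}$ against the score term, and is left with $\ex{p}{g(p)^2} + \tfrac13 \ge \tfrac13$; it then derives the stated $L^1$ lemma from the trivial inequality $(f-\overline{c})^2 \le 2|f-\overline{c}|$ (using $|f-\overline{c}| \le 2$). You instead apply Jensen directly to $|f(c)-\overline{c}|$ to get $\ex{p,c}{2|f(c)-\overline{c}|} \ge \int_{-1}^1 |g(p)-p|\,dp$, decompose $p\,g(p) = p^2 + p(g(p)-p)$, and close with the pointwise inequality $p(g(p)-p) + |g(p)-p| \ge (1-|p|)|g(p)-p| \ge 0$. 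This is a slightly shorter path that avoids routing through the squared-error statement, and it actually buys you more: you establish the lower bound $2/3$ (tight for $f(c) = \overline{c}$ as $n\to\infty$), whereas the paper's proof of this lemma only delivers $1/3$ because the slack in the bound $(f-\overline{c})^2 \le 2|f-\overline{c}|$ is discarded.
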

Roughly the fingerprinting lemma says that if we sample a vector $c \in \pmo^n$ in a specific fashion, then for any bounded function $f(c)$, we either have that $f(c)$ has ``significant'' correlation with $c_i$ for some coordinate $i$, or that $f(c)$ is ``far'' from $\overline{c}$ on average.  In our lower bound, the vector $c$ will represent a \emph{column} of the dataset, so each coordinate $c_i$ will correspond to the value of some \emph{row} of the dataset.  The function $f(c)$ will represent the answer to some prefix query.  We will use the accuracy of a mechanism for answering prefix queries to argue that $f(c)$ is not far from $\overline{c}$, and therefore conclude that $f(c)$ must be significantly correlated with some coordinate $c_i$.  On the other hand, if $c_i$ were excluded from the dataset, then $c_i$ is sufficiently random that the mechanism's answers cannot be significantly correlated with $c_i$.  We will use this to derive a contradiction to differential privacy.

\begin{proof}[Proof of Theorem~\ref{thm:prefixlb}]
First we define the distribution on the input dataset $\db = (\row_1,\dots,\row_n)$ and the queries $q_{S_1}, \cdots, q_{S_k}$.
\paragraph{Input dataset $\db$:}
\begin{itemize}
\item Sample $p^1, \cdots, p^k \in [-1,1]$ independently and uniformly at random.
\item Sample $c^1, \cdots, c^k \in \pmo^n$ independently, where each $c^j$ is a vector of $n$ independent bits, each with expectation $p^j$.
\item For $i \in [n]$, define 
$$
\row_i = (\mathrm{binary}(i), c_i^1, \cdots, c_i^k) \in \pmo^{\lceil\log_2 n\rceil+k},
$$
where $\mathrm{binary}(i) \in \pmo^{\lceil \log_2 n \rceil}$ is the binary representation of $i$ where $1$ is mapped to $+1$ and $0$ is mapped to $-1$.\footnote{This choice is arbitrary, and is immaterial to our lower bound.  The only property we need is that $\mathrm{binary}(i)$ uniquely identifies $i$ and, for notational consistency, we require $\mathrm{binary}(i)$ to be a string over the alphabet $\pmo$.}  Let $\db = (\row_1,\dots,\row_n) \in \left(\pmo^{\lceil \log_2 n\rceil + k}\right)^n$.
\end{itemize}

\paragraph{Queries $q_{S_1}, \cdots, q_{S_k}$:}
\begin{itemize}
\item For $i \in [n]$ and $j \in [k]$, define 
$$
z_{i,j} = (\mathrm{binary}(i),c_i^1, \cdots, c_i^{j-1},1) \in \pmo^{\lceil \log_2 n \rceil + j}.
$$ 
\item For $j \in [k]$, define $q_{S_j} \in \qprefix^{n}$ by $S_j = \set{z_{i,j} \mid i \in [n]}$.
\end{itemize}

These queries are designed so that the correct answer to each query $j \in [k]$ is given by $q_{S_j}(\db)=\overline{c}^{j}$:
\begin{claim} \label{clm:Queries}
For every $j \in [k]$, if the dataset $\db$ and the queries $q_{S_1},\dots,q_{S_k}$ are constructed as above, then with probability $1$,
$$
q_{S_j}(\db) = \frac{1}{n} \sum_{i=1}^{n} q_{S_j}(\row_i) = \frac{1}{n} \sum_{i=1}^{n} c^j_i = \overline{c}^j
$$
\end{claim}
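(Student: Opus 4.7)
The plan is to verify coordinate-by-coordinate that $q_{S_j}(\row_i) = c_i^j$ for every $i \in [n]$ and $j \in [k]$; the claim then follows immediately by averaging over $i$. So the task reduces to determining, for each $i$ and $j$, which if any of the strings $z_{i',j} \in S_j$ is a prefix of $\row_i = (\mathrm{binary}(i), c_i^1, \ldots, c_i^k)$.

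First I would rule out all candidates with $i' \neq i$. Since $i \mapsto \mathrm{binary}(i)$ is injective on $[n]$, for $i' \neq i$ the strings $\mathrm{binary}(i')$ and $\mathrm{binary}(i)$ disagree in at least one position within the first $\lceil \log_2 n \rceil$ coordinates, and this disagreement lies inside the prefix of $\row_i$ that $z_{i',j}$ would need to match. Hence $z_{i',j} \not\preceq \row_i$ whenever $i' \neq i$, leaving $z_{i,j}$ as the only possibly relevant element of $S_j$.

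Next I would examine $z_{i,j} = (\mathrm{binary}(i), c_i^1, \ldots, c_i^{j-1}, 1)$, which has length $\lceil \log_2 n \rceil + j$. By construction its first $\lceil \log_2 n \rceil + j - 1$ coordinates agree exactly with those of $\row_i$, so $z_{i,j} \preceq \row_i$ holds if and only if the final coordinate of $z_{i,j}$, namely $1$, matches the $(\lceil \log_2 n \rceil + j)$-th coordinate of $\row_i$, namely $c_i^j$. Thus $q_{S_j}(\row_i) = 1$ precisely when $c_i^j = 1$ and $q_{S_j}(\row_i) = -1$ otherwise, giving $q_{S_j}(\row_i) = c_i^j$. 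Averaging this identity over $i$ yields $q_{S_j}(\db) = \frac{1}{n}\sum_{i=1}^n c_i^j = \overline{c}^j$. There is no real obstacle to this argument; the only care needed is the bookkeeping of lengths together with the injectivity of $i \mapsto \mathrm{binary}(i)$, both of which are built into the construction.
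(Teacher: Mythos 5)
Your proof is correct and follows essentially the same route as the paper's: rule out $z_{\ell,j}$ with $\ell \ne i$ via injectivity of $\mathrm{binary}$, then observe that $z_{i,j} \preceq \row_i$ holds exactly when $c_i^j = 1$. You simply spell out the length/coordinate bookkeeping that the paper leaves implicit.
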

\begin{proof}[Proof of Claim~\ref{clm:Queries}]
We have $$q_{S_j}(\row_i) = 1 \iff \exists w \in S_j ~(w \preceq \row_i) \iff \exists \ell \in [n] ~(z_{\ell,j} \preceq \row_i).$$ By construction, we have $z_{\ell,j} \preceq \row_i$ if and only if $\ell=i$ and $x_i^j=c_i^j = 1$, as required.  Here, we have used the fact that the strings $\mathrm{binary}(i)$ are unique to ensure that $z_{\ell, j} \preceq \row_i$ if and only if $\ell = i$.
\end{proof}

We now show no differentially private algorithm $\alg$ is capable of giving accurate answers to these queries. Let $\alg$ be an algorithm that answers $k$ online queries from $\qprefix^{n}$.  Suppose we generate an input dataset $\db$ and queries $q_{S_1},\dots,q_{S_k}$ as above, and run $\alg(\db)$ on this sequence of queries.  Let $a^1,\dots,a^k \in [-1,1]$ denote the answers given by $\alg$.

First, we claim that, if $\alg(\db)$ is accurate for the given queries, then each answer $a^{j}$ is close to the corresponding value $\overline{c}^{j} = \frac{1}{n} \sum_{i=1}^{n} c^{j}_{i}$.
\begin{clm} \label{clm:Accuracy}
If $\alg$ is $(1/100,1/100)$-accurate for $k$ online queries from $\qprefix^{n}$, then with probability $1$ over the choice of $\db$ and $q_{S_1},\dots,q_{S_k}$ above,
$$
\ex{\alg}{\sum_{j \in [k]} \left| a^j - \overline{c}^j \right|} \leq \frac{k}{10}. 
$$
\end{clm}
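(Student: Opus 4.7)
The plan is to apply the accuracy guarantee of $\alg$ to each fixed realization of $\db$ (together with the uniquely determined query sequence) and then split the expected sum based on whether the accuracy event succeeds.

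The first observation is that, as constructed, the queries $q_{S_1},\dots,q_{S_k}$ are a deterministic function of $\db$, since each $S_j$ depends only on the bits $c^j_i$, which are recorded in the rows $\row_i$. Consequently, for any fixed realization of $(\db,q_{S_1},\dots,q_{S_k})$ in the support of the coupled distribution, we may consider the (oblivious) non-adaptive adversary $\adv^*_{\db}$ that simply outputs the corresponding sequence $q_{S_1},\dots,q_{S_k}$. The $(1/100,1/100)$-accuracy of $\alg$, applied to this adversary and input, yields
$$\prob{\max_{j\in[k]} |a^j - q_{S_j}(\db)| \leq 1/100} \geq 99/100,$$
where the probability is over $\alg$'s randomness. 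Combining with Claim~\ref{clm:Queries}, which gives $q_{S_j}(\db)=\overline{c}^j$ with probability one, we obtain that with probability at least $99/100$, $|a^j - \overline{c}^j|\leq 1/100$ for every $j$, and hence $\sum_j |a^j-\overline{c}^j|\leq k/100$.

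On the complementary event, of probability at most $1/100$, we use the trivial bound $|a^j - \overline{c}^j| \leq 2$ (we may assume $a^j \in [-1,1]$ by clipping, and $\overline{c}^j \in [-1,1]$ as an average of $\pmo$-valued bits), which gives $\sum_j |a^j-\overline{c}^j|\leq 2k$. Putting these two cases together,
$$\ex{\alg}{\sum_{j\in[k]} |a^j - \overline{c}^j|} \leq \frac{99}{100}\cdot\frac{k}{100} + \frac{1}{100}\cdot 2k = \frac{299k}{10000} \leq \frac{k}{10},$$
which is the desired bound. The only real content is the observation that the queries constructed for the lower bound, although correlated with $\db$, can be produced by a valid non-adaptive adversary for each fixed $\db$; once that is in place, the computation is routine, so I do not anticipate a technical obstacle in this step.
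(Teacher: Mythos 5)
Your proof is correct and takes essentially the same approach as the paper: invoke Claim~\ref{clm:Queries} to rewrite $q_{S_j}(\db)$ as $\overline{c}^j$, apply the $(1/100,1/100)$-accuracy guarantee, and split the expectation into the accurate event (contributing at most $k/100$) and the inaccurate event (contributing at most $2k/100$ via the trivial bound $|a^j - \overline{c}^j| \le 2$). The extra paragraph you include justifying that the constructed query sequence is realizable by a valid non-adaptive adversary for each fixed $\db$ is a sound clarification of a step the paper leaves implicit, but it is not a different argument.
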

\begin{proof}[Proof of Claim~\ref{clm:Accuracy}]
By Claim~\ref{clm:Queries}, for every $j \in [k]$, $q_{S_j}(\db) = \overline{c}^{j}$.
Since, by assumption, $\alg$ is $(1/100, 1/100)$-accurate for $k$ online queries from $\qprefix^{n}$, we have that with probability at least $99/100$,
$$
\forall j \in [k]~~\left| a^{j} - q_{S_j}(\db) \right| \leq \frac{1}{100} ~~ \implies ~~ \forall j \in [k]~~\left| a^{j} - \overline{c}^{j} \right| \leq \frac{1}{100}
$$
By linearity of expectation, this case contributes at most $k/100$ to the expectation.  On the other hand, $| a^{j} - q_{S_j}(\db) | \leq 2$, so by linearity of expectation the case where $\alg$ is inaccurate contributes at most $2k/100$ to the expectation.  This suffices to prove the claim.
\end{proof}

The next claim shows how the fingerprinting lemma (Lemma~\ref{lem:Fingerprinting}) can be applied to $\alg$.
\begin{clm} \label{clm:Fingerprinting}
$$\ex{p,\db,q,\alg}{\sum_{j \in [k]} \left(a^j \sum_{i \in [n]} (c_i^j - p^j) + 2\left| a^j - \overline{c}^j \right|\right)} \geq \frac{k}{3}. $$
\end{clm}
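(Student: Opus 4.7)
The plan is to apply the Fingerprinting Lemma (Lemma~\ref{lem:Fingerprinting}) separately for each index $j \in [k]$ and then sum. First I would assume without loss of generality that $a^j \in [-1,1]$: clamping the mechanism's output to this interval can only decrease $|a^j - \overline{c}^j|$ (since $\overline{c}^j \in [-1,1]$), so any accuracy guarantee is preserved, and clamping is a post-processing and therefore preserves differential privacy.

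The key structural observation is that, in the online non-adaptive model with the construction used in Theorem~\ref{thm:prefixlb}, the query $q_{S_j}$ is a deterministic function of $c^1, \ldots, c^{j-1}$ alone (by the definition of $S_j = \{z_{i,j} : i \in [n]\}$ with $z_{i,j} = (\mathrm{binary}(i), c_i^1, \ldots, c_i^{j-1}, 1)$), and in particular does not depend on $c^j, \ldots, c^k$ or on any of $\alg$'s previous answers. So, fixing $j \in [k]$ and conditioning on $(p^{j'}, c^{j'})_{j' \neq j}$ together with all of $\alg$'s internal randomness, the queries $q_{S_1}, \ldots, q_{S_j}$ are determined, the dataset $\db$ depends on $c^j$ only through the $j$th coordinate of each row, and consequently $a^j$ is a deterministic function $f_j : \pmo^n \to [-1,1]$ of $c^j$.

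Applying Lemma~\ref{lem:Fingerprinting} to $f_j$, using the fact that $p^j$ is uniform on $[-1,1]$ and $c^j$ is a vector of $n$ independent bits of expectation $p^j$, gives
$$\ex{p^j, c^j}{f_j(c^j) \sum_{i \in [n]} (c_i^j - p^j) + 2\bigl| f_j(c^j) - \overline{c}^j \bigr|} \geq \frac{1}{3}.$$
This pointwise inequality (holding for every realization of the conditioning) is preserved under taking expectation over all remaining randomness, yielding
$$\ex{p,\db,q,\alg}{a^j \sum_{i \in [n]} (c_i^j - p^j) + 2\bigl| a^j - \overline{c}^j \bigr|} \geq \frac{1}{3}$$
for each $j \in [k]$. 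Summing over $j$ and using linearity of expectation gives the claimed bound of $k/3$.

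The main thing to be careful about is verifying that, after the appropriate conditioning, $a^j$ really is a function of $c^j$ only: this is precisely where the online non-adaptive assumption is used, since the queries are fixed in advance by the adversary (so they cannot depend on previous answers) and the online construction of $S_j$ deliberately references only columns strictly preceding $j$. In the fully adaptive or offline settings this step would require modification, but for the construction at hand it goes through without further work.
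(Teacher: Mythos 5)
Your proof is correct and follows essentially the same route as the paper: reduce to a per-$j$ bound by linearity, observe that the construction makes $q_{S_1},\dots,q_{S_j}$ a function of $c^1,\dots,c^{j-1}$ alone, condition so that $a^j$ becomes a deterministic function of $c^j$ drawn from the $p^j$-biased distribution, and invoke Lemma~\ref{lem:Fingerprinting}. If anything your conditioning is slightly more careful than the paper's: the paper conditions only on $c^1,\dots,c^{j-1}$ and $\alg$'s coins, but since $\alg$ holds the entire dataset $\db$ its answer $a^j$ can also depend on $c^{j+1},\dots,c^k$, so one really does need to fix those as well (as you do) before $a^j$ is a pure function of $c^j$; the paper elides this, though it is harmless because those columns are independent of $(p^j,c^j)$. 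The clamping remark is also a reasonable point of hygiene, although in this paper the answers $a^j$ are already taken in $[-1,1]$ by definition of a statistical-query response, so no separate argument is strictly needed.
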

\begin{proof}
By linearity of expectation, it suffices to show that, for every $j \in [k]$, 
$$
\ex{p,\db,q,\alg}{a^j \sum_{i \in [n]} (c_i^j - p^j) + 2\left| a^j - \overline{c}^j \right|} \geq \frac{1}{3}.
$$
Since each column $c^{j}$ is generated independently from the columns $c^{1},\dots,c^{j-1}$, $c^j$ and $p^j$ are independent from $q_{S_1}, \cdots, q_{S_j}$. Thus, at the time $\alg$ produces the output $a^j$, it does not have any information about $c^j$ or $p^j$ apart from its private input.  (Although $\alg$ later learns $c^j$ when it is asked $q_{S_{j+1}}$.) For any fixed values of $c^1, \dots, c^{j-1}$ and the internal randomness of $\alg$, the answer $a^j$ is a deterministic function of $c^j$. Thus we can apply Lemma \ref{lem:Fingerprinting} to this function to establish the claim.
\end{proof}

Combining Claims~\ref{clm:Accuracy} and~\ref{clm:Fingerprinting} gives
$$
\ex{p,\db,q,\alg}{\sum_{j \in [k]} a^j \sum_{i \in [n]} (c_i^j - p^j)} \geq \frac{2k}{15}.
$$
In particular, there exists some $i^* \in [n]$ such that \mnote{Replaced $i$ with $i^*$}
\begin{equation} \label{eq:highcorrelationwithi}
\ex{p,\db,q,\alg}{\sum_{j \in [k]} a^j  (c_{i^*}^j - p^j)} \geq \frac{2k}{15n}.
\end{equation}
To complete the proof, we show that \eqref{eq:highcorrelationwithi} violates the differential privacy guarantee unless $n \geq \Omega(\sqrt{k})$.

To this end, fix any $p^1,\dots,p^k \in [-1,1]$, whence ${c}^1_{i^*}, \cdots, {c}^k_{i^*} \in \pmo$ are independent bits with $\ex{}{{c}^j}=p^j$. Let $\tilde{c}^1, \cdots, \tilde{c}^k \in \pmo$ be independent bits with $\ex{}{\tilde{c}^j}=p^j$. The random variables ${c}^1_{i^*}, \cdots, {c}^k_{i^*}$ have the same marginal distribution as $\tilde{c}^1, \cdots, \tilde{c}^k$. However, $\tilde{c}^1, \cdots, \tilde{c}^k$ are independent from $a^1, \cdots, a^k$, whereas $a^1, \cdots, a^k$ depend on $c_{i^*}^1, \cdots, c_{i^*}^k$. Consider the quantities   \mnote{Replaced all instances of $Z_i$ with $Z$}
$$
Z = \sum_{j \in [k]} a^j  (c_{i^*}^j - p^j) \qquad \text{and} \qquad \tilde{Z} = \sum_{j \in [k]} a^j  (\tilde{c}^j - p^j).
$$
Differential privacy implies that $Z$ and $\tilde{Z}$ have similar distributions.  Specifically, if $\alg$ is $(1,1/30n)$-differentially private, then
$$
\ex{}{|Z|} = \int_0^{2k} \pr{}{|Z|>z} \mathrm{d}z
\leq \int_0^{2k} \left(e\pr{}{|\tilde{Z}|>z} + \frac{1}{30n}\right) \mathrm{d}z
= e \ex{}{|\tilde{Z}|} + \frac{k}{15n},
$$
as $|Z|, |\tilde{Z}| \leq 2k$ with probability 1.

Now $\ex{}{|Z|} \geq \ex{}{Z} \geq 2k/15n$, by \eqref{eq:highcorrelationwithi}. On the other hand, $a^j$ is independent from $\tilde{c}^j$ and $\ex{}{\tilde{c}^j-p^j}=0$, so $\ex{}{\tilde{Z}}=0$. We now observe that $$\ex{}{|\tilde{Z}|}^2 \leq \ex{}{\tilde{Z}^2}=\var{}{\tilde{Z}} = \sum_{j \in [k]} \var{}{a^j (\tilde{c}^j-p^j)} \leq  \sum_{j \in [k]} \ex{}{(\tilde{c}^j-p^j)^2}  \leq k.$$ Thus, we have
$$
\frac{2k}{15n} \leq \ex{}{|Z|} \leq e \ex{}{|\tilde{Z}|} + \frac{k}{15n} \leq e \sqrt{k} + \frac{k}{15n}.
$$
The condition $2k/15n \leq e \sqrt{k} + k/15n$ is a contradiction unless $k \leq 225e^2 n^2$.  Thus, we can conclude that there exists a $k = O(n^2)$ such that no $(1, 1/30n)$-differentially private algorithm is accurate for more than $k$ online queries from $\qprefix^{n}$, as desired.  This completes the proof.
\end{proof}

\section{A Separation Between Adaptive and Non-Adaptive Online Queries}
In this section we prove that even among online queries, answering adaptively-chosen queries can be strictly harder than answering non-adaptively-chosen queries.  Our separation applies to a family of search queries that we call \emph{correlated vector queries.}  We show that for a certain regime of parameters, it is possible to take a dataset of size $n$ and privately answer an exponential number of fixed correlated vector queries, even if the queries are presented online, but it is impossible to answer more than a constant number of adaptively-chosen correlated vector queries under differential privacy.

The queries are defined on datasets $\db \in \pmo^{n}$ (hence the data universe is $\univ = \pmo$).  For every query, the range $\range = \pmo^{n}$ is the set of $n$-bit vectors.  We fix some parameters $0 < \alpha < 1$ and $m \in \N$.  A query $q$ is specified by a set $V$ where $V = \set{v^1,\dots,v^m} \subseteq \pmo^{n}$ is a set of $n$-bit vectors.  Roughly, an accurate answer to a given search query is any vector $y \in \pmo^{n}$ that is approximately $\alpha$-correlated with the input dataset $\db \in \pmo^{n}$ and has nearly as little correlation as possible with every $v^{j}$.  By ``as little correlation as possible with $v^{j}$'' we mean that $v^{j}$ may itself be correlated with $\db$, in which case $y$ should be correlated with $v^{j}$ only insofar as this correlation comes through the correlation between $y$ and $\db$.  \mnote{I didn't understand the previous description. Is this statement still accurate?} Formally, for a query $q_{V}$, we define the loss function $\loss_{q_{V}} : \univ^n \times \univ^n \to \{0,1\}$ by
$$
\loss_{q_{V}}(\db, y) = 0 ~~~\iff~~~ {\left| \langle y - \alpha x, x \rangle \right| \leq \frac{\alpha^2 n}{100}~~\land~~\forall v^{j} \in V\; \left| \langle y - \alpha x, v^j \rangle \right| \leq \frac{\alpha^2 n}{100}}.
$$
We remark that the choice of $\alpha^2 n /100$ is somewhat arbitrary, and we can replace this choice with $C$ for any $\sqrt{n} \ll C \ll n$ and obtain quantitatively different results.  We chose to fix this particular choice in order to reduce notational clutter.
\jnote{$C := \alpha^2 \sqrt{n}/100$ since that's the parameter regime of interest for the lower bound.}
We let
$$
\qcorr^{n, \alpha, m} = \set{q_{V} \mid V \subseteq \pmo^n, |V| \leq m}
$$
be the set of all correlated vector queries on $\pmo^{n}$ for parameters $\alpha, m$.
\jnote{We might want to compile all of this into a defn environment.}

\subsection{Answering Online Correlated Vector Queries} \label{sec:corralg}
Provided that all the queries are fixed in advance, we can privately answer correlated vector queries using the randomized response algorithm.  This algorithm simply takes the input vector $\db \in \pmo^{n}$ and outputs a new vector $y \in \pmo^{n}$ where each bit $y_i$ is independent and is set to $x_i$ with probability $1/2 + \rho$ for a suitable choice of $\rho > 0$.  The algorithm will then answer every correlated vector query with this same vector $y$.  The following theorem captures the parameters that this mechanism achieves.
\begin{theorem}[Answering Online Correlated Vector Queries]\label{thm:corralg}
For every $0 < \alpha < 1/2$, there exists $k = 2^{\Omega(\alpha^4 n)}$ such that, for every sufficiently large $n \in \N$, there is a $(3\alpha, 0)$-differentially private algorithm $\alg_{\mathsf{corr}}$ that takes a dataset $\db \in \pmo^n$ and is $(1/k)$-accurate for $k$ online queries from $\qcorr^{n,\alpha,k}$.
\end{theorem}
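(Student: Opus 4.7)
The plan is to use the randomized response mechanism, essentially as foreshadowed in the Techniques section. On input $\db=(x_1,\dots,x_n)\in\pmo^n$, the algorithm $\alg_{\mathsf{corr}}$ samples \emph{once} a vector $y\in\pmo^n$ whose coordinates are independent with $y_i=+x_i$ with probability $(1+\alpha)/2$ and $y_i=-x_i$ otherwise, and then outputs this same $y$ as its answer to every query $q_V$ presented in the online stream. Because the output does not depend on the queries at all, the algorithm is trivially compatible with the online non-adaptive model.

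For privacy, changing a single row $x_i$ alters only the marginal distribution of the corresponding coordinate $y_i$, between two Bernoullis with parameters $(1\pm\alpha)/2$. The pointwise density ratio is at most $(1+\alpha)/(1-\alpha)$, and an elementary estimate gives $\ln\tfrac{1+\alpha}{1-\alpha}\leq 3\alpha$ for all $\alpha\in(0,1/2)$. This yields $(3\alpha,0)$-differential privacy, regardless of the query sequence or the adversary.

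For accuracy, fix any $v\in\pmo^n$ (to be instantiated as either $x$ itself or some $v^j$ appearing in a query) and expand
$$\langle y-\alpha x,v\rangle \;=\; \sum_{i=1}^n (y_i-\alpha x_i)\,v_i.$$
The summands are independent, each bounded in $[-(1+\alpha),1+\alpha]$, and have mean zero because $\ex{}{y_i}=\alpha x_i$. Hoeffding's inequality then gives
$$\pr{}{\,\bigl|\langle y-\alpha x,v\rangle\bigr| > \alpha^2 n/100\,} \;\leq\; 2\exp\!\bigl(-\Omega(\alpha^4 n)\bigr).$$
A single query $q_V\in\qcorr^{n,\alpha,k}$ is answered with loss $0$ by $y$ iff the good event above holds simultaneously for $v=x$ and for every $v\in V$; a union bound over these at most $k+1$ choices of $v$, followed by a second union bound over the $k$ queries in the stream, bounds the probability that $y$ fails to be a valid answer to some query by $2k(k+1)\exp(-\Omega(\alpha^4 n))$. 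Choosing $k=2^{c\alpha^4 n}$ with $c>0$ a sufficiently small absolute constant makes this at most $1/k$ for all large enough $n$, and since the loss is $\{0,1\}$-valued, the $(1/k)$-accuracy guarantee in expectation follows.

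The main obstacle is purely bookkeeping — calibrating the Hoeffding constants, the $\ln\tfrac{1+\alpha}{1-\alpha}\leq 3\alpha$ estimate, and the exponents in the double union bound so that the advertised regime $k=2^{\Omega(\alpha^4 n)}$ falls out cleanly. No step requires any technique beyond the observation that a single randomized-response sample $y$ serves simultaneously as an accurate answer to every non-adaptively fixed correlated-vector query.
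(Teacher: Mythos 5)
Your proposal matches the paper's proof essentially line for line: the same randomized-response mechanism answered with a single sampled $y$, the same $\ln\frac{1+\alpha}{1-\alpha}\le 3\alpha$ privacy estimate, and the same Hoeffding-plus-union-bound accuracy argument (the paper unions over the set $V=\bigcup_j V_j$ of size $\le k^2$ while you union over queries and then over vectors within each query, but these give the same $\mathrm{poly}(k)\cdot\exp(-\Omega(\alpha^4 n))$ bound). Correct, and no substantive difference in approach.
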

\jnote{The quantification in this theorem is really ugly.}
\begin{proof}[Proof Theorem~\ref{thm:corralg}]
%The algorithm is very simple (and is commonly known as \emph{randomized response}) and is presented below.
Our algorithm based on randomized response is presented in Figure \ref{fig:rr} below.
\begin{figure}[h!]
\begin{framed}
\begin{algorithmic}
\INDSTATE[0]{$\alg_{\mathsf{corr}}$:}
\INDSTATE[1]{Input: a dataset $\db \in \pmo^{n}$.}
\INDSTATE[1]{Parameters: $0 < \alpha < 1/2$.} 
\INDSTATE[1]{For $i = 1,\dots,n$, independently set
\begin{equation*}
y_i =
\begin{cases}
+\row_i &\textrm{with probability $\frac{1+\alpha}{2}$}\\
-\row_i &\textrm{with probability $\frac{1-\alpha}{2}$}
\end{cases}.
\end{equation*}}
\INDSTATE[1]{Let $y = (y_1,\dots,y_n) \in \pmo^{n}$, and answer each query with $y$.}
\end{algorithmic}
\end{framed}
\vspace{-6mm}
\caption{$\alg_{\mathsf{corr}}$} \label{fig:rr}
\end{figure}

To establish privacy, observe that by construction each output bit $y_i$ depends only on $x_i$ and is independent of all $x_j, y_j$ for $j \ne i$.  Therefore, it suffices to observe that if $0 < \alpha < 1/2$,
$$
1\leq \frac{\mathbb{P}[y_i = +1 \mid x_i = +1]}{\mathbb{P}[y_i = +1 \mid x_i = -1]} = \frac{1+\alpha}{1-\alpha} \leq e^{3 \alpha}
$$
and similarly
$$
1\geq \frac{\mathbb{P}[y_i = -1 \mid x_i = +1]}{\mathbb{P}[y_i = -1 \mid x_i = -1]} = \frac{1-\alpha}{1+\alpha} \geq e^{-3\alpha}.
$$
%The inequalities are valid given our assumption that $\alpha < 1/2$.

To prove accuracy, observe that since the output $y$ does not depend on the sequence of queries, we can analyze the mechanism as if the queries $q_{V_1},\dots,q_{V_k} \in \qcorr^{n,\alpha,k}$ were fixed and given all at once.  Let $V = \bigcup_{j=1}^{k} V_{j}$, and note that $|V| \leq k^2$.  First, observe that $\ex{}{y} = \alpha x$.  Thus we have
$$
\ex{y}{\langle y - \alpha x, x \rangle} = 0~~~~\textrm{and}~~~~\forall v \in V~~\ex{y}{\langle y - \alpha x, v\rangle} = 0
$$
Since $x$ and every vector in $V$ is fixed independently of $y$, and the coordinates of $y$ are independent by construction, the quantities $\langle y , x \rangle$ and $\langle y , v \rangle$ are each the sum of $n$ independent $\pmo$-valued random variables.  Thus, we can apply Hoeffding's inequality\footnote{We use the following statement of Hoeffding's Inequality: if $Z_1,\dots,Z_n$ are independent $\pmo$-valued random variables, and ${Z} = \sum_{i=1}^{n} Z_i$, then $$\pr{}{\left|~{Z} - \ex{}{{Z}}~\right| > C\sqrt{n}} \leq 2e^{-C^2/2}$$} and a union bound to conclude
\begin{align*}
&\pr{y}{|\langle y - \alpha x, x \rangle| > \frac{\alpha^2 n}{100}} \leq 2\exp\left(\frac{-\alpha^4 n}{20000}\right)\\
&\pr{y}{\exists v \in V\textrm{ s.t. }|\langle y - \alpha x, v \rangle| > \frac{\alpha^2 n}{100}} \leq 2k^2\exp\left(\frac{-\alpha^4 n}{20000}\right)
\end{align*}
The theorem now follows by setting an appropriate choice of $k = 2^{\Omega(\alpha^4 n)}$ such that $2(k^2+1)\cdot\exp\left(\frac{-\alpha^4 n}{20000}\right) \leq 1/k$.
\end{proof}

\subsection{A Lower Bound for Adaptive Correlated Vector Queries} \label{sec:corrlb}
We now prove a contrasting lower bound showing that if the queries may be chosen adaptively, then no differentially private algorithm can answer more than a constant number of correlated vector queries.  The key to our lower bound is that fact that adaptively-chosen correlated vector queries allow an adversary to obtain many vectors $y^1,\dots,y^k$ that are correlated with $\db$ but pairwise nearly orthogonal with each other.  As we prove, if $k$ is sufficiently large, this information is enough to recover a vector $\tilde{x}$ that has much larger correlation with $\db$ than any of the vectors $y^1,\dots,y^k$ have with $\db$.  By setting the parameters appropriately, we will obtain a contradiction to differential privacy.

\begin{theorem}[Lower Bound for Correlated Vector Queries] \label{thm:corrlb}
For every $0 < \alpha < 1/2$, there is a $k = O(1/\alpha^2)$ such that for every sufficiently large $n \in \N$, there is no $(1,1/20)$-differentially private algorithm that takes a dataset $\db \in \pmo^{n}$ and is $1/100$-accurate for $k$ adaptive queries from $\qcorr^{n,\alpha,k}$
\end{theorem}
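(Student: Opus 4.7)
The plan is to implement the adaptive tracing attack sketched in Section~1.2. Set $k = C/\alpha^2$ for a sufficiently large universal constant $C$. The adversary $\adv$ proceeds in $k$ rounds; at round $j$, having seen previous answers $y_1, \dots, y_{j-1} \in \pmo^n$, it issues the correlated-vector query $q_{V_j} \in \qcorr^{n,\alpha,k}$ with $V_j = \{y_1, \dots, y_{j-1}\}$ (and $V_1 = \emptyset$). If $\alg$ is accurate, which by the $1/100$-accuracy hypothesis occurs with probability at least $99/100$, then every $y_j$ satisfies $|\langle y_j - \alpha x, x\rangle| \leq \alpha^2 n/100$ and $|\langle y_j - \alpha x, y_i\rangle| \leq \alpha^2 n/100$ for all $i < j$. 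Combining these bounds yields $\langle y_j, x\rangle = \alpha n \pm \alpha^2 n/100$ and $\langle y_j, y_i\rangle = \alpha^2 n \pm O(\alpha^2 n)$ with a small constant.

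\textbf{Reconstruction.} Let $z = \sum_{j=1}^k y_j$ and $\tilde x = \mathrm{sign}(z) \in \pmo^n$. Using $\|y_j\|_2^2 = n$ and the pairwise bounds above, a direct expansion of $\|z - k\alpha x\|_2^2 = \|z\|_2^2 - 2k\alpha \langle z, x\rangle + k^2 \alpha^2 n$ shows that $\|z - k\alpha x\|_2^2 \leq n(C + C^2/25)/\alpha^2$ on the good event. A coordinate $i$ with $\tilde x_i \neq x_i$ must satisfy $|z_i - k\alpha x_i| \geq k\alpha = C/\alpha$, so at most $\|z - k\alpha x\|_2^2 / (k\alpha)^2 = n(1/C + 1/25)$ coordinates disagree. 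Choosing $C$ sufficiently large therefore yields $\langle \tilde x, x\rangle \geq 0.9 n$ on the good event. Because $\tilde x$ is a deterministic post-processing of the transcript produced by $\adaptivealg{\adv}{\alg}(\db)$, the composed mapping $\db \mapsto \tilde x$ is itself $(1, 1/20)$-differentially private.

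\textbf{Contradiction via differential privacy.} Run the composed mechanism on a uniformly random $\db \in \pmo^n$. The accuracy analysis above gives $\ex{\db, \alg}{\langle \tilde x, x\rangle} \geq (99/100)(0.9 n) - (1/100)(n) > 0.88 n$. We upper-bound the same quantity coordinate by coordinate. For fixed $i$ and any setting of $x_{-i}$, let $g^\pm = \ex{\alg}{\tilde x_i \mid x_i = \pm 1, x_{-i}} \in [-1, 1]$. Applying $(1, 1/20)$-differential privacy to the pair of neighboring datasets that differ only in $x_i$ yields $g^+ \leq e \cdot g^- + (e - 1) + 2/20$ and, symmetrically, $g^- \leq e \cdot g^+ + (e - 1) + 2/20$. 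Maximizing $(g^+ - g^-)/2$ over this feasible region in $[-1, 1]^2$ produces a value less than $0.66$, so $\ex{x_i, \alg}{\tilde x_i x_i \mid x_{-i}} = (g^+ - g^-)/2 \leq 0.66$ for every $x_{-i}$. Taking expectations over $x_{-i}$ and summing over $i$ gives $\ex{\db, \alg}{\langle \tilde x, x\rangle} \leq 0.66 n$, contradicting the $> 0.88 n$ lower bound.

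\textbf{Main obstacle.} The delicate step is choosing $C$ large enough that the reconstruction accuracy exceeds the best correlation achievable by a $(1, 1/20)$-differentially private mechanism (about $0.65 n$), while keeping $k = C/\alpha^2 = O(1/\alpha^2)$. The $\alpha^2 n / 100$ slack in the correlated-vector loss is calibrated so that $\langle y_j, y_i\rangle$ concentrates tightly enough around $\alpha^2 n$ to make this possible. Beyond this constant balancing, the proof uses only standard per-coordinate differential privacy and a direct reconstruction computation.
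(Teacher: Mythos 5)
Your proposal is correct and follows essentially the same route as the paper: the same adaptive attack with $V_j = \{y_1,\dots,y_{j-1}\}$, the same coordinate-wise majority $\tilde x = \mathrm{sign}(\sum_j y_j)$, and the same final contradiction with differential privacy. The two places where you deviate are cosmetic rather than conceptual: the paper packages the reconstruction step as Lemma~\ref{lem:reconstruction} and proves it via Chebyshev's inequality applied to $W = x_i \bar y_i$ for a uniformly random coordinate $i$, which is an equivalent second-moment bound to your direct expansion of $\|z - k\alpha x\|_2^2$; and the paper cites the no-reconstruction statement as an elementary Fact~\ref{fact:noreconstruction} without proof, whereas you supply a clean per-coordinate proof of it --- the one piece of added value in your write-up, and it is indeed correct (your per-coordinate bound of $0.66$ is conservative; the tight value is closer to $0.48$, which is why the paper can get away with the weaker threshold $n/2$ in Fact~\ref{fact:noreconstruction}, whereas you more safely target $0.9n$).
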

We remark that the value of $k$ in our lower bound is optimal up to constants, as there is a $(1, 1/20)$-differentially private algorithm that can answer $k = \Omega(1/\alpha^2)$ adaptively-chosen queries of this sort.  The algorithm simply answers each query with an independent invocation of randomized response.  Randomized response is $O(\alpha)$-differentially private for each query, and we can invoke the adaptive composition theorem~\cite{DworkMNS06, DworkRV10} to argue differential privacy for $k = \Omega(1/\alpha^2)$-queries.\jnote{Not sure that this remark is interesting.}

Before proving Theorem \ref{thm:corrlb}, we state and prove the combinatorial lemma that forms the foundation of our lower bound.
\begin{lem}[Reconstruction Lemma]\label{lem:reconstruction}
Fix parameters $0 \leq a,b \leq 1$.  Let $\db \in \pmo^{n}$ and $y^1, \cdots, y^k \in \pmo^{n}$ be vectors such that
\begin{align*}
&\forall 1 \leq j \leq k ~~~~ \langle y^j , x \rangle \geq a n\\
&\forall 1 \leq j < j' \leq k ~~~~ |\langle y^{j}, y^{j'} \rangle| \leq b n.
\end{align*}
Then, if we let $\tilde{\db} = \mathrm{sign}(\sum_{j=1}^{k} y^j) \in \pmo^{n}$ be the coordinate-wise majority of $y^{1},\dots,y^{k}$, we have
$$
\langle \tilde{\db}, x \rangle \geq \left(1-\frac{2}{a^2 k} - \frac{2(b-a^2)}{a^2} \right) n.
$$
\end{lem}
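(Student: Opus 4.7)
The plan is to reduce the lemma to bounding the size of the \emph{disagreement set} $D = \{i : \tilde{x}_i \ne x_i\}$, since $\langle \tilde{x}, x \rangle = n - 2|D|$. So it suffices to show
$$|D| \leq \frac{n}{a^2 k} + \frac{(b - a^2)n}{a^2}.$$

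To bound $|D|$, I would use a second-moment argument centered at the ``ideal'' signal $ak x$. Let $S = \sum_{j=1}^k y^j \in \mathbb{Z}^n$, so that $\tilde{x}_i = \mathrm{sign}(S_i)$. The key observation is that for every $i \in D$, the sign of $S_i$ disagrees with $x_i$, hence $x_i S_i \leq 0$, and therefore
$$(S_i - a k x_i)^2 \;=\; S_i^2 \;-\; 2 a k \, x_i S_i \;+\; a^2 k^2 \;\geq\; a^2 k^2.$$
Summing over $i \in D$ gives $\sum_i (S_i - akx_i)^2 \geq a^2 k^2 |D|$. The rest is a matter of bounding the left-hand side from above using the two hypotheses.

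Expanding, $\sum_i (S_i - akx_i)^2 = \|S\|_2^2 - 2ak\, \langle x, S\rangle + a^2 k^2 n$. The correlation hypothesis gives $\langle x, S\rangle = \sum_j \langle x, y^j\rangle \geq akn$, and the near-orthogonality hypothesis gives
$$\|S\|_2^2 \;=\; \sum_{j,j'} \langle y^j, y^{j'}\rangle \;\leq\; kn + k(k-1)\, bn.$$
Substituting and cancelling the $a^2 k^2 n$ term yields $\sum_i (S_i - akx_i)^2 \leq kn\bigl(1 + b(k-1)\bigr) - a^2 k^2 n$. Combining with the lower bound from $D$ and dividing by $a^2 k^2$ gives $|D| \leq \frac{n(1 + b(k-1) - a^2 k)}{a^2 k}$, which is bounded above by the target $\frac{n}{a^2 k} + \frac{(b-a^2)n}{a^2}$ after a short algebraic check (the slack comes from $b(k-1)/k \leq b$).

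There is essentially no obstacle here: once one recognizes that measuring deviation from the ``ideal'' vector $akx$ simultaneously packages together the signal hypothesis (via the cross term) and the near-orthogonality hypothesis (via $\|S\|_2^2$), the computation is mechanical. The only minor care needed is in the edge case $S_i = 0$, which can be handled by fixing an arbitrary convention for $\mathrm{sign}(0)$; the inequality $(S_i - akx_i)^2 \geq a^2 k^2$ still holds trivially whenever $x_i S_i \leq 0$.
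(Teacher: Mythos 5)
Your proof is correct and is essentially the paper's argument in deterministic form: the paper defines a random variable $W = x_i\overline{y}_i$ for a uniform coordinate $i$ (with $\overline{y} = S/k$) and applies Chebyshev's inequality, and your pointwise observation that $(S_i - akx_i)^2 \geq a^2k^2$ whenever $\tilde{x}_i \neq x_i$ is exactly the Chebyshev tail step unrolled, since $(S_i - akx_i)^2 = k^2(x_i\overline{y}_i - a)^2$. The rest --- bounding $\langle x, S\rangle$ and $\|S\|_2^2$ from the two hypotheses and assembling them into a second-moment inequality --- matches the paper's computation line for line.
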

\begin{proof}[Proof of Lemma~\ref{lem:reconstruction}]
Let 
$$
\overline{y} = \frac{1}{k} \sum_{j=1}^k y^j \in [-1,1]^{n}.
$$
By linearity, $\langle \overline{y}, x \rangle \geq a n$ and
$$
\|\overline{y}\|_2^2 = \frac{1}{k^2} \sum_{j, j' = 1}^{k} \langle y^{j}, y^{j'} \rangle \leq \frac{1}{k^2} \left( k n + (k^2-k) b n\right) \leq \left(\frac{1}{k} + b \right) n.
$$
Define a random variable $W \in [-1,1]$ to be $x_i \overline{y}_i$ for a uniformly random $i \in [n]$. Then 
$$
\ex{}{W}= \frac{1}{n} \langle x, \overline{y} \rangle \geq a~~~\textrm{and}~~~\ex{}{W^2} = \frac{1}{n} \sum_{i=1}^{n} x_i^2 \overline{y}_i^2 = \frac{1}{n} \|\overline{y}\|_2^2 \leq {\frac{1}{k} + b}
$$
By Chebyshev's inequality, 
$$
\pr{}{W \leq 0} \leq \pr{}{|W-\ex{}{W}| \geq a} \leq \frac{\operatorname{Var}[W]}{a^2} = \frac{\mathbb{E}[W^2]-\mathbb{E}[W]^2}{a^2} \leq \frac{\frac{1}{k} + b - a^2}{a^2}.
$$
Meanwhile,
$$
\pr{}{W \leq 0} 
= \frac{1}{n} \sum_{i=1}^n \mathbb{I}[x_i \overline{y}_i \leq 0] 
\geq  \frac{1}{n} \sum_{i=1}^n \mathbb{I}[\mathrm{sign}(\overline{y}_i) \ne x_i] 
%= \frac{1}{n} \sum_{i=1}^n \frac{1-\mathrm{sign}(\overline{y}_i)x_i}{2} 
= \frac12-\frac{1}{2n} \langle \mathrm{sign}(\overline{y}), x \rangle.
$$
Thus we conclude
$$
\langle \mathrm{sign}(\overline{y}), x \rangle \geq n - 2n \pr{}{W \leq 0} \geq n - 2n\left( \frac{\frac{1}{k} + b-a^2}{a^2}\right)
$$
To complete the proof, we rearrange terms and note that $\mathrm{sign}(\overline{y}) = \mathrm{sign}(\sum_{j=1}^{k} y^{j})$.
\end{proof}

Now we are ready to prove our lower bound for algorithms that answer adaptively-chosen correlated vector queries.

\begin{proof}[Proof of Theorem~\ref{thm:corrlb}]
We will show that the output $y^{1},\dots,y^{k}$ of any algorithm $\alg$ that takes a dataset $\db \in \pmo^{n}$ and answers $k = 100/\alpha^2$ adaptively-chosen correlated vector queries can be used to find a vector $\tilde{\db} \in \pmo^{n}$ such that $\langle \tilde{\db}, \db \rangle > n/2$.  In light of Lemma~\ref{lem:reconstruction}, this vector will simply be $\tilde{\db} = \mathrm{sign}(\sum_{j=1}^{k} y^{j})$.  We will then invoke the following elementary fact that differentially private algorithms do not admit this sort of reconstruction of their input dataset.
\begin{fact} \label{fact:noreconstruction}
For every sufficiently large $n \in \N$, there is no $(1,1/20)$-differentially private algorithm $\alg \from \pmo^{n} \to \pmo^{n}$ such that for every $\db \in \pmo^{n}$, with probability at least $99/100$, $\langle \alg(\db), \db \rangle > n/2$.
\end{fact}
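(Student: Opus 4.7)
The plan is to derive a contradiction by combining a per-coordinate differential privacy bound with a careful application of the Fingerprinting Lemma (Lemma~\ref{lem:Fingerprinting}), in the spirit of the online prefix-queries lower bound in Section~\ref{sec:prefixlb}.

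First, I would establish that for every $(\varepsilon,\delta)$-differentially private $\alg : \pmo^n \to \pmo^n$, every coordinate $i$, and every fixing of $\db_{-i}$, the conditional correlation $\ex{\db_i \sim \pmo, \alg}{\alg(\db)_i \cdot \db_i \mid \db_{-i}}$ is bounded above by $(e^\varepsilon - 1 + 2\delta)/(e^\varepsilon + 1)$. This follows from the four standard DP inequalities bounding the probabilities $\pr{}{\alg(\db)_i = \pm 1 \mid \db_i = \pm 1, \db_{-i}}$ across the adjacent pair of inputs $(\db_i = +1, \db_{-i})$ and $(\db_i = -1, \db_{-i})$. For $(\varepsilon,\delta) = (1, 1/20)$ the bound evaluates to approximately $0.489$, so summing over $i$ and averaging over uniform $\db \in \pmo^n$ yields $\ex{\db,\alg}{\langle \alg(\db), \db \rangle} \leq 0.489 \, n$. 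On the accuracy side, the guarantee $\pr{}{\langle \alg(\db), \db \rangle > n/2} \geq 99/100$ (which must hold for every $\db$ and hence also in expectation over uniform $\db$) gives only $\ex{\db,\alg}{\langle \alg(\db), \db \rangle} \geq 0.485 \, n + O(1)$. The gap of $\approx 0.004 \, n$ between these two bounds is positive, so the expectation argument alone is not quite enough.

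To close this gap, I would apply the Fingerprinting Lemma to the scalar function $f(\db) = \tfrac{1}{n}\langle \alg(\db), \db \rangle$ with $\db$ drawn from the fingerprinting distribution: sample $p \sim \mathrm{Unif}[-1,1]$ and let $\db_1, \ldots, \db_n \in \pmo$ be i.i.d.\ with mean $p$. The accuracy guarantee pins $f(\db)$ near $1/2$ with probability $99/100$, whereas $\overline{\db}$ concentrates near $p$; for most realizations of $p$ this makes the slack term $|f(\db) - \overline{\db}|$ substantial, so the lemma's lower bound of $1/3$ on the sum of tracing and slack terms must come largely from the tracing term $f(\db) \cdot \sum_i(\db_i - p)$. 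Expanding $f$ shows this signal equals $\tfrac{1}{n}\sum_{i,j} \alg(\db)_j \db_j (\db_i - p)$, and a pigeonhole argument then isolates a coordinate $i^*$ that contributes $\Omega(1)$ in expectation to the corresponding statistic. Comparing with the alternative world in which $\db_{i^*}$ is replaced by an independent copy $\tilde{\db}_{i^*}$ having the same marginal distribution---in which the analogous statistic has expectation zero---and bounding the difference using the $(1, 1/20)$-differential privacy of $\alg$ (exactly as in the final step of the proof of Theorem~\ref{thm:prefixlb}) yields the desired contradiction for $n$ sufficiently large.

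The step I expect to be the main obstacle is controlling the slack term $|f(\db) - \overline{\db}|$ in the Fingerprinting Lemma tightly enough that it does not absorb the entire lower bound. Unlike the situation in Section~\ref{sec:prefixlb}, where accuracy is guaranteed pointwise for each query, here accuracy is only guaranteed for the aggregate quantity $\langle \alg(\db), \db \rangle$, so getting the constants to align requires exploiting the randomness in $p$ to ensure $|f(\db) - \overline{\db}|$ stays well below $1/6$ on average. An alternative route that sidesteps this bookkeeping is to invoke a standard fingerprinting-code-based reconstruction impossibility result from the literature (in the style of~\cite{BunUV14}), which rules out algorithms of this type for all sufficiently large $n$ in a black-box fashion.
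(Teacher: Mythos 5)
The paper states this as a ``Fact'' and does not supply a proof --- it treats the non-existence of such a reconstruction algorithm as elementary folklore. So there is no proof in the paper for your proposal to be compared against on a step-by-step basis, and I will instead evaluate the proposal on its own terms.

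Your first-pass per-coordinate calculation is correct and is in fact the natural starting point. Fixing $\db_{-i}$ and setting $a=\Pr[\alg(\db)_i=+1\mid \db_i=+1]$, $b=\Pr[\alg(\db)_i=+1\mid \db_i=-1]$, the four $(\eps,\delta)$-DP inequalities do cap $a-b$ at $(e^\eps-1+2\delta)/(e^\eps+1)\approx 0.489$, so $\ex{}{\langle\alg(\db),\db\rangle}\le 0.489\,n$ over uniform $\db$. And you are right that the accuracy hypothesis gives only $\ex{}{\langle\alg(\db),\db\rangle}\ge 0.99\cdot\tfrac{n}{2}-0.01n=0.485n$, so there is a genuine small gap. (Worth noting: in the sole place Fact~\ref{fact:noreconstruction} is invoked --- the end of the proof of Theorem~\ref{thm:corrlb} --- the vector actually satisfies $\langle\tilde{\db},\db\rangle\ge 0.89n$, and with that threshold the per-coordinate bound closes with plenty of room, $0.99\cdot 0.89n-0.01n=0.8711n>0.489n$. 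So a slightly weakened restatement of the Fact would reduce it to exactly your first paragraph.)

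The Fingerprinting-Lemma step you propose to close the gap, however, does not work as written. If you take $f(\db)=\tfrac{1}{n}\langle\alg(\db),\db\rangle$, the accuracy guarantee pins $f(\db)$ near a value above $1/2$, while $\overline{\db}$ concentrates around $p$. The resulting slack term in Lemma~\ref{lem:Fingerprinting} is therefore on the order of $2\,\ex{p}{|\tfrac12-p|}$ for $p\sim\mathrm{Unif}[-1,1]$, which evaluates to $2\cdot\tfrac{5}{8}=\tfrac{5}{4}$. That is far larger than the lemma's constant $1/3$, so the inequality is satisfied by the slack term alone and conveys no information whatsoever about the tracing term $f(\db)\sum_i(\db_i-p)$. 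You frame this as ``the lower bound must come largely from the tracing term,'' but the logic runs the other way: when the slack already overshoots the $1/3$ bound, the tracing term may be anything, including zero. There is no choice of random $p$ that rescues this; the obstacle you flagged as ``the main obstacle'' is in fact fatal for this particular $f$, because $f$ is not (even approximately) an estimator of $\overline{\db}$, which is the role the Fingerprinting Lemma demands of its argument. The correct fingerprinting-style proof has to trace per coordinate (applying the lemma to each $\alg(\cdot)_i$ individually or using a variant tailored to per-coordinate reconstruction rather than mean estimation), not to the scalar inner product. Your fallback of citing a black-box reconstruction impossibility result in the style of~\cite{BunUV14} is a reasonable and honest out, and is likely what the authors had in mind when they labelled this a Fact.
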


The attack works as follows.  For $j = 1,\dots,k$, define the set $V_j = \set{y^{1},\dots,y^{j-1}}$ and ask the query $q_{V_j}(\db) \in \qcorr^{n, \alpha, k}$ to obtain some vector $y^{j}$.  Since $\alg$ is assumed to be accurate for $k$ adaptively-chosen queries, with probability $99/100$, we obtain vectors $y^{1},\dots,y^{k} \in \pmo^{n}$ such that \jnote{Might want to spell this calculation out a bit more for people who forgot our def of accuracy.}  \mnote{Done, also corrected some indexing errors}
\begin{align*}
\forall 1 \leq j \leq k ~~~~ \langle y^j , \db \rangle &\ge \langle \alpha \db, \db \rangle - |\langle y - \alpha \db, \db \rangle| \\
 &\geq \alpha n - \frac{\alpha^2n}{100} \\
 &\geq a n,
\end{align*}
\begin{align*}
\forall 1 \leq j < j' \leq k ~~~~ |\langle y^{j}, y^{j'} \rangle| &\leq |\langle \alpha \db, y^{j} \rangle| + |\langle y^{j'} - \alpha \db, y^{j}\rangle|\\
& \leq \alpha|\langle y^{j}, \db \rangle| + \frac{\alpha^2 n}{100} \\
& \leq \alpha\left( |\langle \alpha \db, \db \rangle| + |\langle y^j - \alpha\db, \db \rangle| \right) + \frac{\alpha^2 n}{100} \\
& \leq \alpha^2 n + \frac{\alpha^3 n}{100} + \frac{\alpha^2 n}{100} \\
&\leq \frac{51}{50}\alpha^2n \\
&= b n,
\end{align*}
where $a = 99\alpha/100$ and $b = 51 \alpha^2/50$.
Thus, by Lemma~\ref{lem:reconstruction}, if $\tilde{\db} = \mathrm{sign}(\sum_{j=1}^{k} y^{j})$, and $k = 100/\alpha^2$, we have
\begin{align*}
\langle \tilde{\db}, \db \rangle 
&\geq  \left(1-\frac{2}{a^2 k} - \frac{2(b-a^2)}{a^2} \right) n\\
&= \left(1 - \frac{2}{(99\alpha/100)^2 k} - \frac{2(51\alpha^2/50-(99\alpha/100)^2)}{(99\alpha/100)^2} \right)n \\
&= \left(1 - \frac{2(100/99)^2}{100} - 2\left(\frac{(51/50)-(99/100)^2}{(99/100)^2}\right)\right)n \\
&\geq 0.89n \geq n/2.
\end{align*}
By Fact~\ref{fact:noreconstruction}, this proves that $\alg$ cannot be $(1, 1/20)$-differentially private.
\end{proof}

\section{Threshold Queries}

First we define threshold queries, which are a family of statistical queries.

\begin{defn}
Let $\thresh{X}$ denote the class of threshold queries over a totally ordered domain $X$. That is, $\thresh{X} = \{c_x : x \in X\}$ where $c_x : X \to \{0, 1\}$ is defined by $c_x(y) = 1$ iff $y \le x$.
\end{defn}

\subsection{Separation for Pure Differential Privacy}

In this section, we show that the sample complexity of answering adaptively-chosen thresholds can be exponentially larger than that of answering thresholds offline.

\begin{prop}[\cite{DworkNPR10, ChanSS11, DworkNRR15}]
Let $X$ be any totally ordered domain. Then there exists a $(\varepsilon,0)$-differentially private mechanism $M$ that, given $\db \in X^n$, gives $\alpha$-accurate answers to $k$ offline queries from $\thresh{X}$ for %Then any collection of $k$ queries from $\thresh{X}$ can be answered offline with sample complexity
\[n = O \left(\min\left\{\frac{\log k + \log^2(1/\alpha)}{\alpha\varepsilon}, \frac{\log^2 k}{\alpha\varepsilon} \right\}\right)\]
\end{prop}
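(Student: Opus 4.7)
The plan is to reduce both bounds to privately releasing prefix sums of a short histogram with user-sensitivity $1$. Sort the query thresholds $\tau_{(1)} \le \cdots \le \tau_{(k)}$ and let $m_i$ denote the number of dataset rows in the bucket $(\tau_{(i-1)}, \tau_{(i)}]$, with $\tau_{(0)} = -\infty$. Then the correct answer to $c_{\tau_{(j)}}$ is exactly $\tfrac{1}{n} \sum_{i=1}^j m_i$, and any single row changes the histogram $(m_1,\dots,m_{k+1})$ in exactly one coordinate by $\pm 1$. Thus it suffices to privately release all prefix sums of a count vector of length $O(k)$ with unit sensitivity.

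For the $\log^2 k / (\alpha\varepsilon)$ bound, I would invoke the binary-tree (continual observation) mechanism of Dwork--Naor--Pitassi--Rothblum and Chan--Shi--Song. Build a complete binary tree over the $k+1$ buckets; at every internal node store the true sum over its subtree perturbed by an independent $\Lap(O(\log k /\varepsilon))$ draw. Privacy follows by basic composition: a single row affects exactly one noisy count per level, so $O(\log k)$ Laplace releases of scale $O(\log k /\varepsilon)$ compose to $(\varepsilon,0)$. For utility, the dyadic decomposition expresses each prefix $\{1,\dots,j\}$ as a disjoint union of at most $O(\log k)$ tree nodes, so each answer's error is a sum of $O(\log k)$ independent Laplaces of scale $O(\log k /\varepsilon)$; Laplace tail bounds plus a union bound over all $k$ queries give simultaneous error $O(\log^2 k / \varepsilon)$, which is at most $\alpha n$ when $n = \Omega(\log^2 k /(\alpha \varepsilon))$.

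For the $(\log k + \log^2(1/\alpha))/(\alpha \varepsilon)$ bound, which is tighter when $k \gg 1/\alpha$, the plan is to first coarsen the histogram to length $O(1/\alpha)$ and only then apply the tree mechanism. Privately pick $r = \lceil 1/\alpha\rceil$ ``anchor'' indices $0 = j_0 < j_1 < \cdots < j_r = k+1$ such that for each $\ell$ the anchor $\tau_{(j_\ell)}$ is an approximate $\ell \alpha$-quantile of $\db$; each block $B_\ell = (j_{\ell-1}, j_\ell]$ then accounts for roughly $\alpha n$ rows. Each quantile is released by the exponential mechanism over the finite candidate universe $\{\tau_{(1)},\dots,\tau_{(k)}\}$ with quality score $-\lvert \operatorname{rank}_\db(\tau) - \ell \alpha n\rvert$, at per-quantile sample cost $O(\log k /\varepsilon)$; the $r$ independent releases compose to a total cost of $O(\log k /(\alpha \varepsilon))$. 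Then apply the binary-tree mechanism to the coarsened length-$r$ histogram $(\lvert B_1 \cap \db\rvert, \dots, \lvert B_r \cap \db\rvert)$, obtaining simultaneous error $O(\log^2(1/\alpha)/\varepsilon)$ across its $r$ prefix sums. To answer an arbitrary $c_{\tau_{(j)}}$, round $j$ down to the largest anchor $j_\ell \le j$ and return the corresponding coarse prefix sum; by construction, the truncated block $B_\ell$ contributes at most $O(\alpha n)$ additional rounding error.

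The main technical obstacle is the composition/quantile-accuracy analysis in the second stage. One has to split the $\varepsilon$ budget into constant fractions between the anchor-selection and tree phases, union-bound failure across the $r$ exponential-mechanism invocations, and verify that the resulting anchors really do leave each block with at most $O(\alpha n)$ rows with high probability; after that, both terms in $(\log k + \log^2(1/\alpha))/(\alpha \varepsilon)$ emerge as clean consequences of the two stages, and taking the minimum with the first algorithm yields the stated bound.
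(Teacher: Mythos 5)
This proposition is cited from \cite{DworkNPR10, ChanSS11, DworkNRR15} and not proved in the paper, so there is no in-paper proof to compare against; I am assessing your argument on its own.

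The reduction to prefix sums of a length-$O(k)$ unit-sensitivity histogram is correct, and your derivation of the $\frac{\log^2 k}{\alpha\varepsilon}$ bound via the binary-tree mechanism is standard and sound (noise scale $\Theta(\log k / \varepsilon)$ per node, $O(\log k)$ nodes per prefix, union bound over $k$ queries gives error $O(\log^2 k / \varepsilon)$).

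However, the second bound has a genuine gap in the quantile-coarsening stage. You state the exponential mechanism has ``per-quantile sample cost $O(\log k / \varepsilon)$'' and that the $r = \lceil 1/\alpha \rceil$ releases ``compose to a total cost of $O(\log k / (\alpha\varepsilon))$.'' This adds up sample complexities as though each quantile selection ran at full budget $\varepsilon$, but you are invoking \emph{basic composition} across $r$ invocations, so each invocation only receives budget $\approx \alpha\varepsilon$. With budget $\alpha\varepsilon$ and $k$ candidates, the exponential mechanism's rank error is $O(\log k / (\alpha\varepsilon))$, and requiring this to be $O(\alpha n)$ forces $n = \Omega\left(\frac{\log k}{\alpha^2 \varepsilon}\right)$, a $1/\alpha$ factor worse than claimed. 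Parallel composition would save you, but it does not apply here as written: your quality score $-\lvert\operatorname{rank}_\db(\tau) - \ell\alpha n\rvert$ for each $\ell$ is a function of the \emph{entire} dataset, so the $r$ queries do not operate on disjoint data. Even reformulating the anchors as interior points of sorted-index blocks does not immediately restore parallel composition, because a single row change shifts sorted positions globally and perturbs the membership of every block by one element, not just one block. Achieving the $\frac{\log k + \log^2(1/\alpha)}{\alpha\varepsilon}$ bound therefore requires a structurally different quantile-selection step --- for instance a recursive median tree of depth $O(\log(1/\alpha))$ in which same-level subproblems genuinely operate on disjoint sub-datasets, or the noisy-boundary partitioning machinery --- and that is the technical core of the cited results, not a routine composition-accounting exercise you can defer. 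As written, your second stage proves $n = O\left(\frac{\log k}{\alpha^2\varepsilon} + \frac{\log^2(1/\alpha)}{\alpha\varepsilon}\right)$, not the stated bound.
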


On the other hand, we show that answering $k$ adaptively-chosen threshold queries can require sample complexity as large as $\Omega(k)$ -- an exponential gap. Note that this matches the upper bound given by the Laplace mechanism \cite{DworkMNS06}. 

\begin{prop}\label{prop:thresh_adaptive_lb}
Answering $k$ adaptively-chosen threshold queries on $[2^{k-1}]$ to accuracy $\alpha$ subject to $\varepsilon$-differential privacy requires sample complexity $n=\Omega(k/\alpha\varepsilon)$.
\end{prop}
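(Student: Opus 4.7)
The plan is to reduce the problem of privately computing an approximate median to answering $k$ adaptively-chosen threshold queries on $[2^{k-1}]$, and then invoke the pure-DP lower bound for approximate median that follows from the Hardt--Talwar packing technique.

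First, I would recall (or re-derive from the packing) the following auxiliary lower bound: any $\varepsilon$-differentially private algorithm that takes $x \in [T]^n$ and, with constant success probability, outputs an $\alpha$-approximate median of $x$---that is, some $m \in [T]$ with both $|\{i : x_i \le m\}| \ge (1/2 - \alpha) n$ and $|\{i : x_i \ge m\}| \ge (1/2 - \alpha) n$---must satisfy $n = \Omega(\log T / (\alpha \varepsilon))$. The packing exhibits $\Theta(T/(\alpha n))$ datasets whose approximate medians are constrained to lie in pairwise disjoint length-$\Theta(\alpha n)$ cells of $[T]$, with any two packed datasets at Hamming distance $O(\alpha n)$; group privacy then forces $T/(\alpha n) \lesssim e^{O(\alpha n \varepsilon)}$, which rearranges to the claimed bound.

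Next, given any $\varepsilon$-DP mechanism $M$ that $\alpha$-accurately answers $k$ adaptively-chosen threshold queries on $X = [2^{k-1}]$, I would build an adaptive querying strategy $A$ that performs binary search. The strategy $A$ maintains an interval $[\ell, r] \subseteq X$ initialized to $[1, 2^{k-1}]$; in each round $A$ submits the threshold query at $\tau = \lfloor (\ell + r)/2 \rfloor$, and upon receiving the answer $\hat a$ updates $\ell \gets \tau + 1$ if $\hat a \le 1/2$ and updates $r \gets \tau$ otherwise. After $k - 1 = \log_2(2^{k-1})$ rounds the interval collapses to a single point $m$, which $A$ outputs. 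Conditioned on the constant-probability event that every answer is within $\alpha$ of the corresponding true fractional count, a routine induction shows that the invariant $|\{i : x_i < \ell\}|/n \le 1/2 + \alpha$ and $|\{i : x_i \le r\}|/n \ge 1/2 - \alpha$ is preserved throughout, so the final point $m$ is an $\alpha$-approximate median of $x$.

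Since $m$ is a deterministic function of the transcript $\adaptivealg{A}{M}(x)$ and $A$'s internal randomness, the composition $x \mapsto m$ is $\varepsilon$-differentially private. Applying the approximate-median lower bound with $T = 2^{k-1}$ then yields $n = \Omega(\log(2^{k-1})/(\alpha \varepsilon)) = \Omega(k/(\alpha \varepsilon))$, as desired. The main obstacle is establishing the auxiliary approximate-median lower bound with the correct $1/\alpha$ factor: one must engineer the packing so that each dataset enjoys $\Theta(\alpha n)$ rows of slack while keeping its admissible approximate medians in a cell disjoint from those of the other packed datasets, rather than settling for the weaker $n = \Omega(\log T / \varepsilon)$ bound that a naive packing of exact medians would give. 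The binary-search reduction itself is immediate once that input is in hand.
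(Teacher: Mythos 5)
Your proposal follows the same route as the paper: reduce to computing an $\alpha$-approximate median via a binary search over adaptive threshold queries (Lemma~\ref{lem:median_reduction}), then invoke a packing/group-privacy lower bound of $n = \Omega(\log T/(\alpha\eps))$ for approximate median (Lemma~\ref{lem:thresh_packing}), and combine with $T = 2^{k-1}$. The only cosmetic difference is in the packing: the paper packs all $T$ datasets $x^t$ (with $\lceil(1/2-\alpha)n\rceil-1$ copies each of $1$ and $T$, and the remaining $\Theta(\alpha n)$ rows at $t$), so that $t$ is the \emph{unique} approximate median of $x^t$, and finishes with pigeonhole rather than explicitly disjoint length-$\Theta(\alpha n)$ cells --- this yields the same bound with slightly cleaner bookkeeping, but your variant works equally well.
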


The idea for the lower bound is that an analyst may adaptively choose $k$ threshold queries to binary search for an ``approximate median'' of the dataset. However, a packing argument shows that locating an approximate median requires sample complexity $\Omega(k)$.
\begin{defn}[Approximate Median]
Let $X$ be a totally ordered domain, $\alpha>0$, and $x \in X^n$. We call $y \in X$ an \emph{$\alpha$-approximate median} of $x$ if\tnote{Unfortunately, we can't combine these into a single inequation. Consider the case where $x_i=y$ for all $i$; both LHSs are $1$.} $$\frac{1}{n} \left|\left\{ i \in [n] : x_i \leq y \right\}\right| \geq \frac12-\alpha \qquad \text{and} \qquad \frac{1}{n} \left|\left\{ i \in [n] : x_i \geq y \right\}\right| \geq \frac12-\alpha.$$
\end{defn}
Proposition \ref{prop:thresh_adaptive_lb} is obtained by combining Lemmas \ref{lem:median_reduction} and \ref{lem:thresh_packing} below.

\begin{lem}\label{lem:median_reduction}
Suppose $M$ answers $k = \lceil 1 +  \log_2 T \rceil$ adaptively-chosen queries from $\thresh{[T]}$ with $\eps$-differential privacy and $(\alpha, \beta)$-accuracy. Then there exists an $\eps$-differentially private $M' : [T]^n \to [T]$ that computes an $\alpha$-approximate median with probability at least $1- \beta$.
\end{lem}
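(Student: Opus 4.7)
The plan is a straightforward reduction via binary search. Given the hypothesized $\eps$-differentially private interactive mechanism $\alg$ that answers $k = \lceil 1 + \log_2 T \rceil$ adaptively-chosen queries from $\thresh{[T]}$ with $(\alpha, \beta)$-accuracy, I would construct $\alg'$ which, on input $\db$, internally simulates an adversary $\adv$ that performs binary search for a median and then outputs the final candidate.

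The core observation is that $y \in [T]$ is an $\alpha$-approximate median of $\db$ if and only if $c_{y-1}(\db) \leq \tfrac{1}{2} + \alpha$ and $c_y(\db) \geq \tfrac{1}{2} - \alpha$ (using the convention $c_0(\db) = 0$). The adversary maintains an interval $[\ell, r] \subseteq [T]$ with the invariant $c_{\ell-1}(\db) \leq \tfrac{1}{2} + \alpha$ and $c_r(\db) \geq \tfrac{1}{2} - \alpha$, starting at $[\ell, r] = [1, T]$, where the invariant holds trivially. At each step it asks the threshold query $c_m$ for $m = \lfloor (\ell + r)/2 \rfloor$, receives an answer $\hat a$ from $\alg$, and updates $r \leftarrow m$ if $\hat a \geq \tfrac{1}{2}$ and $\ell \leftarrow m+1$ otherwise. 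Assuming the answer satisfies $|\hat a - c_m(\db)| \leq \alpha$, the two cases respectively guarantee $c_m(\db) \geq \tfrac{1}{2} - \alpha$ and $c_m(\db) \leq \tfrac{1}{2} + \alpha$, which is precisely what is needed to maintain the invariant. Since each step halves the length of the interval, after at most $\lceil \log_2 T \rceil \leq k$ queries the interval collapses to a singleton $\{y\}$, and the invariant then says $y$ is an $\alpha$-approximate median. $\alg'$ outputs this $y$.

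For accuracy, by the $(\alpha,\beta)$-accuracy of $\alg$ the event that every one of the at most $k$ answers is $\alpha$-accurate holds with probability at least $1 - \beta$; on this event the invariant is preserved throughout and $\alg'$ outputs an $\alpha$-approximate median. For privacy, $\alg'$ interacts with $\db$ only through $\alg$, and its output is a deterministic function of the transcript $\adaptivealg{\adv}{\alg}(\db)$; since $\alg$ is $\eps$-differentially private in the adaptive model, post-processing yields that $\alg'$ is also $\eps$-differentially private. There is no real obstacle here; the only thing to be mindful of is the one-off in the definition of approximate median (the ``$\geq y$'' side translates to a statement about $c_{y-1}$, not $c_y$), which is what forces the ``$+1$'' in the bound $k = \lceil 1 + \log_2 T \rceil$ and determines which half of the interval to discard in each binary-search step.
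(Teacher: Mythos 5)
Your proof is correct and follows essentially the same binary-search reduction as the paper, down to the same loop invariant (lower endpoint's threshold at most $\tfrac12+\alpha$, upper endpoint's at least $\tfrac12-\alpha$) with only cosmetic differences in indexing ($\lfloor\cdot\rfloor$ vs.\ $\lceil\cdot\rceil$ for the midpoint, $[1,T]$ vs.\ $\ell_1=0,u_1=T$ for the initial interval). The explicit post-processing argument for privacy is a small bonus of completeness that the paper leaves implicit.
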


\begin{proof}
The algorithm $M'$, formalized in Figure \ref{fig:Mmedian}, uses $M$ to perform a binary search.
\begin{figure}[h!]
\begin{framed}
\begin{algorithmic}
\INDSTATE[0]{Input: $\db \in X^n$.}
\INDSTATE[0]{$M$ is given $\db$.}
\INDSTATE[0]{Initialize $\ell_1=0$, $u_1 = T$, and $j=1$.}
\INDSTATE[0]{While $u_j-\ell_j>1$ repeat:}
\INDSTATE[1]{Let $m_j = \lceil(u_j+\ell_j)/2\rceil$.}
\INDSTATE[1]{Give $M$ the query $c_{m_j} \in \thresh{[T]}$ and obtain the answer $a_j \in [0,1]$.}
\INDSTATE[1]{If $a_j \geq \frac12$, set $(\ell_{j+1},u_{j+1}) = (\ell_j,m_j)$; otherwise set $(\ell_{j+1},u_{j+1}) = (m_j,u_j)$.}
\INDSTATE[1]{Increment $j$.}
\INDSTATE[0]{Output $u_j$.}
\end{algorithmic}
\end{framed}
\vspace{-6mm}
\caption{$M' : X^n \to X$}\label{fig:Mmedian}
\end{figure}

We have $u_1-\ell_1=T$ and, after every query $j$, $u_{j+1}-\ell_{j+1} \leq \lceil(u_j-\ell_j)/2\rceil$. Since the process stops when $u_j-\ell_j=1$, it is easy to verify that $M'$ makes at most $\lceil 1+\log_2(T-1)\rceil$ queries to $M$. 

Suppose all of the answers given by $M$ are $\alpha$-accurate. This happens with probability at least $1-\beta$. We will show that, given this, $M'$ outputs an $\alpha$-approximate median, which completes the proof.

We claim that $c_{u_j}(x) \geq \frac12-\alpha$ for all $j$. This is easily shown by induction. The base case is $c_T(x)=1 \geq \frac12 - \alpha$. At each step either $u_{j+1}=u_j$ (in which case the induction hypothesis can be applied) or $u_{j+1}=m_j$; in the latter case our accuracy assumption gives $$c_{u_{j+1}}(x)=c_{m_j}(x) \geq a_j - \alpha \geq \frac12 - \alpha.$$ 

We also claim that $c_{\ell_j}(x) < \frac12+\alpha$ for all $j$. This follows from a similar induction and completes the proof.
\end{proof}

\begin{lem}\label{lem:thresh_packing}
Let $M : [T]^n \to [T]$ be an $\eps$-differentially private algorithm that computes an $\alpha$-approximate median with confidence $1- \beta$. Then 
\[n \ge \Omega\left( \frac{\log T + \log(1/\beta)}{\alpha\eps}\right).\]
\end{lem}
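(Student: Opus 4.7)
The plan is a standard packing argument. We construct $T-2$ datasets whose $\alpha$-approximate medians are distinct singletons and that are pairwise $2c$-adjacent for a small $c$, then apply the group privacy lemma from the preliminaries.

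For the construction, set $c = \lceil \alpha n \rceil + 1$ (and note the lemma is trivial if $c > n/4$, so assume $c \le n/4$). For each $y \in \{2, \ldots, T-1\}$, let $x^{(y)} \in [T]^n$ consist of $\lfloor n/2 \rfloor - c$ copies of $1$, $2c$ copies of $y$, and $\lceil n/2 \rceil - c$ copies of $T$. A direct case analysis on $z < y$, $z = y$, and $z > y$ shows that the only $\alpha$-approximate median of $x^{(y)}$ is $y$ itself: for $z < y$ only $\lfloor n/2 \rfloor - c < n(1/2-\alpha)$ points are $\le z$, and symmetrically for $z > y$ only $\lceil n/2 \rceil - c$ points are $\ge z$, while at $z = y$ both sided counts exceed $n/2$. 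Moreover, for any $y \neq y'$, the datasets $x^{(y)}$ and $x^{(y')}$ are $2c$-adjacent, since they agree on every coordinate except the $2c$ that carry $y$ versus $y'$.

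For the $\log T$ term, fix $y_0 = 2$. Correctness gives $\prob{M(x^{(y)}) = y} \ge 1-\beta$ for each $y$, so by group privacy at distance $2c$,
\begin{equation*}
\prob{M(x^{(y_0)}) = y} \ge e^{-2c\varepsilon} \prob{M(x^{(y)}) = y} \ge e^{-2c\varepsilon}(1-\beta).
\end{equation*}
Since the events $\{M(x^{(y_0)}) = y\}_{y=2}^{T-1}$ are disjoint, summing yields $1 \ge (T-2)e^{-2c\varepsilon}(1-\beta)$. For $\beta \le 1/2$ this forces $c = \Omega(\log T / \varepsilon)$, hence $n = \Omega(\log T/(\alpha\varepsilon))$.

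For the $\log(1/\beta)$ term, it suffices to compare two datasets. For $y \ne y_0$, $y$ is not an approximate median of $x^{(y_0)}$, so correctness gives $\prob{M(x^{(y_0)}) = y} \le \beta$, while group privacy still gives $\prob{M(x^{(y_0)}) = y} \ge e^{-2c\varepsilon}(1-\beta)$. Combining yields $e^{2c\varepsilon} \ge (1-\beta)/\beta$, so $n = \Omega(\log(1/\beta)/(\alpha\varepsilon))$. Averaging the two lower bounds gives the claimed $n = \Omega((\log T + \log(1/\beta))/(\alpha\varepsilon))$. The only real subtlety is finding a construction $\{x^{(y)}\}$ with singleton approximate median and small pairwise Hamming distance; once it is in place, the rest is a routine invocation of the group privacy lemma.
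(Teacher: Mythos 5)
Your proof is correct and follows essentially the same packing-plus-group-privacy argument as the paper, with a construction that matches the paper's up to a relabeling of parameters (the paper places $m = \lceil(1/2-\alpha)n\rceil - 1$ copies at each endpoint and $n-2m$ copies at $t$; you place roughly $n/2 - c$ at each endpoint and $2c$ at the middle value with $c = \lceil\alpha n\rceil + 1$). The only difference is bookkeeping: you establish the $\log T$ and $\log(1/\beta)$ contributions by two separate comparisons and then average, whereas the paper extracts both in a single step by choosing $t_*$ via pigeonhole so that $M(x^T)$ outputs $t_*$ with probability at most $\beta/(T-1)$ and applying group privacy once --- a purely cosmetic distinction.
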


\begin{proof}
Let $m= \lceil(\frac12-\alpha)n\rceil-1$. 
For each $t \in [T]$, let $x^t \in [T]^n$ denote the dataset containing $m$ copies of $1$, $m$ copies of $T$, and $n-2m$ copies of $t$. Then for each $t \in [T]$, $$\pr{}{M(x^t) = t} \ge 1-\beta.$$
On the other hand, by the pigeonhole principle, there must exist $t_* \in [T-1]$ such that $$\pr{}{M(x^T)=t_*} \leq \frac{\pr{}{M(x^T) \in [T-1]}}{T-1} \leq \frac{\beta}{T-1}.$$ The inputs $x^T$ and $x^{t_*}$ differ in at most $n-2m \leq 2\alpha n + 2$ entries. By group privacy,
$$1-\beta \leq \pr{}{M(x^{t_*}) = t_*} \leq e^{\eps(2\alpha n+2)} \pr{}{M(x^T)=t_*} \leq e^{\eps(2\alpha n+2)} \frac{\beta}{T-1}.$$
Rearranging these inequalities gives $$O(\eps\alpha n) \geq \eps(2\alpha n+2) \geq \log \left( \frac{(1-\beta)(T-1)}{\beta} \right) \geq \Omega(\log(T/\beta)),$$ which yields the result.
\end{proof}

\begin{remark} \label{rem:NonAdaptThresh}
Proposition \ref{prop:thresh_adaptive_lb} can be extended to \emph{online non-adaptive} queries, which yields a separation between the online non-adaptive and offline models for pure differential privacy and threshold queries.
\end{remark}
The key observation behind remark \ref{rem:NonAdaptThresh} is that, while Lemma \ref{lem:median_reduction} in general requires making adaptive queries, for the inputs  $x^t \in [T]^n$ ($t \in [T]$) used in Lemma \ref{lem:thresh_packing} the queries are ``predictable.'' In particular, on input $x^t$, the algorithm $M'$ from the proof of Lemma \ref{lem:median_reduction} will (with probability at least $1-\beta$) always make the same sequence queries. This allows the queries to be specified in advance in a non-adaptive manner. More precisely, we can produce an algorithm $M'_t$ that produces non-adaptive online queries by simulating $M'$ on input $x^t$ and using those queries. Given the answers to these online non-adaptive queries, $M'_t$ can either accept or reject its input depending on whether the answers are consistent with the input $x^t$; $M'_t$ will accept $x^t$ with high probability and reject $x^{t'}$ for $t' \ne t$ with high probabiliy. The proof of Lemma \ref{lem:thresh_packing} can be carried out using $M'_{t_*}$ instead of $M'$ at the end.

\subsection{The $\mathsf{BetweenThresholds}$ Algorithm} \label{sec:BetweenThresholds}
\bigjnote{To keep the terminology more consistent can we explicitly refer to the ``Above Threshold'' algorithm instead of ``sparse vector''?  The name ``sparse vector'' is dumb anyway and should be deprecated :)  But maybe we can say that our algorithm builds on the ``Above Threshold'' algorithm, which underlies the ubiquitous ``sparse vector'' technique?}\tnote{However, readers are more likely to be familiar with the term "sparse vector" than with "above threshold"}

The key technical novelty behind our algorithm for answering adaptively-chosen threshold queries is a refinement of the ``Above Threshold'' algorithm \cite[\S3.6]{DworkR14}, which underlies the ubiquitous ``sparse vector'' technique \cite{DworkNRRV09,RothR10,DworkNPR10,HardtR10}.

The sparse vector technique addresses a setting where we have a stream of $k$ (adaptively-chosen) low-sensitivity queries and a threshold parameter $t$. Instead of answering all $k$ queries accurately, we are interested in answering only the ones that are above the threshold $t$ -- for the remaining queries, we only require a signal that they are below the threshold. Intuitively, one would expect to only pay in privacy for the queries that are actually above the threshold. And indeed, one can get away with sample complexity proportional to the number of queries that are above the threshold, and to the \emph{logarithm} of the total number of queries.

We extend the sparse vector technique to settings where we demand slightly more information about each query beyond whether it is below a single threshold. In particular, we set two thresholds $\lT < \uT$, and for each query, release a signal as to whether the query is below the lower threshold, above the upper threshold, or between the two thresholds. 

As long as the thresholds are sufficiently far apart, whether (the noisy answer to) a query is below the lower threshold or above the upper threshold is \emph{stable}, in that it is extremely unlikely to change on neighboring datasets. As a result, we obtain an $(\eps, \delta)$-differentially private algorithm that achieves the same accuracy guarantees as the traditional sparse vector technique, i.e. sample complexity proportional to $\log k$.

Our algorithm is summarised by the following theorem.\footnote{In Theorem \ref{thm:AboveBelowThreshold}, only one threshold is allowed. However, our algorithm is more general and permits the setting of two thresholds. We have chosen this statement for simplicity.}

\begin{thm} \label{thm:AboveBelowThreshold}
Let $\alpha,\beta,\eps,\delta,t \in (0,1)$ and $n,k \in\N$ satisfy $$n \geq \frac{1}{\alpha \eps} \max \left\{ 12 \log(30/\eps\delta), 16 \log((k+1)/\beta) \right\}.$$
Then there exists a $(\eps,\delta)$-differentially private algorithm that takes as input $x \in X^n$ and answers a sequence of adaptively-chosen queries $q_1, \cdots, q_k : X^n \to [0,1]$ of sensitivity $1/n$ with $a_1, \cdots, a_{\leq k} \in \{\leftsymb, \rightsymb, \haltsymb\}$ such that, with probability at least $1-\beta$,
\begin{itemize}
\item $a_j = \leftsymb \implies q_j(x) \le t$,
\item $a_j = \rightsymb \implies q_j(x) \ge t$, and
\item $a_j = \haltsymb \implies t - \alpha \le q_j(x) \le t + \alpha$.
\end{itemize}
The algorithm may halt before answering all $k$ queries; however, it only halts after outputting $\haltsymb$.
\end{thm}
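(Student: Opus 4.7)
The plan is to generalize the classical Above Threshold / Sparse Vector construction by using \emph{two} noisy thresholds instead of one, and to argue privacy via a stability (propose-test-release-style) coupling that makes the $\leftsymb$ and $\rightsymb$ outputs ``free'' while only the terminal $\haltsymb$ output incurs privacy loss.

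The algorithm I have in mind samples, once at the start, $\hlT = (t-\alpha/2) + \eta_\ell$ and $\huT = (t+\alpha/2) + \eta_u$, with $\eta_\ell,\eta_u$ independent Laplace noises of scale $\Theta(1/(n\eps))$. For each incoming adaptive query $q_j$, the algorithm draws fresh independent Laplace noise $\nu_j$ of scale $\Theta(1/(n\eps))$, forms $\hat q_j = q_j(x) + \nu_j$, and outputs $\leftsymb$ if $\hat q_j < \hlT$, $\rightsymb$ if $\hat q_j > \huT$, and $\haltsymb$ (then halts) otherwise. For accuracy, a union bound over the $k+2$ Laplace tails shows that, with probability $\ge 1-\beta$, every $|\eta_\ell|,|\eta_u|,|\nu_j|$ is at most $\alpha/4$ whenever the noise scale is $O(1/(n\eps))$ and $n \geq \Omega(\log((k+1)/\beta)/(\alpha\eps))$. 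Under this event, an output of $\leftsymb$ forces $q_j(x) \leq \hat q_j + \alpha/4 \leq \hlT + \alpha/4 \leq t$, and analogously for $\rightsymb$; an output of $\haltsymb$ sandwiches $q_j(x)$ between $\hlT-\alpha/4$ and $\huT+\alpha/4$, which lies in $[t-\alpha,t+\alpha]$. This matches the three required implications.

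The privacy analysis is the real work and mirrors the sparse-vector proof but with a critical stability step. Couple the randomness $(\eta_\ell,\eta_u,\nu_1,\nu_2,\dots)$ identically on neighboring $x \sim x'$. Since each $q_j$ has sensitivity $1/n$, the gaps $\hat q_j(x) - \hlT$ and $\hat q_j(x) - \huT$ differ by at most $1/n$ when we switch to $x'$, so the labels $\leftsymb$ and $\rightsymb$ agree on both databases unless some noisy query lands in a ``boundary zone'' of width $O(1/n)$ around $\hlT$ or $\huT$. A Laplace density bound gives that each fixed query crosses a boundary with probability $O(1/(n\eps))$, but what we actually need is stronger: we condition on $\eta_\ell,\eta_u$ being in a ``safe'' range (off an event of probability $\le \delta/2$ using the Laplace tail at scale $\log(1/\eps\delta)/(n\eps)$), and then argue as in the standard Above Threshold proof via a density-ratio calculation on the $\nu_j$'s. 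The punchline is that, except on an event of mass $\le \delta$, the entire $\leftsymb/\rightsymb$ prefix is \emph{identical} on $x$ and $x'$; conditioned on that, the only further randomness revealed is the single $\haltsymb$, whose privacy cost is that of one Laplace-noised threshold comparison, i.e.\ $\eps$. Combining gives $(\eps,\delta)$-differential privacy, with the $\log(1/\eps\delta)$ term in $n$ coming from the $\delta$ tail for the threshold noises.

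Handling adaptivity is the last subtle point. Because $\nu_j$ is fresh and drawn after $q_j$ is presented, and $\eta_\ell,\eta_u$ are fixed upfront, the standard induction on the query index from Dwork--Roth \S3.6 goes through: one can condition on the transcript of $\leftsymb/\rightsymb$ answers so far, note that this induces only a conditioning on the coupled Laplace draws, and then continue the coupling to the next query. The main obstacle, as in the classical proof, is handling the point where the algorithm emits $\haltsymb$---the distributions of $\hat q_j$ on $x$ and $x'$ are no longer compared in a single half-space but in an interval, so I would argue the ratio bound by splitting $[\hlT,\huT]$ into the two endpoints and applying the single-sided Laplace DP bound to each, losing a factor of 2 that is absorbed into the constants in the sample complexity. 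This gives exactly the claimed bound $n = \Omega(\max\{\log(1/\eps\delta), \log(k/\beta)\}/(\alpha\eps))$.
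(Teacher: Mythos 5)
The algorithm you describe is essentially the paper's $\mathsf{BetweenThresholds}$ (the paper uses a single noise $\mu$ applied with opposite signs to the two thresholds, while you use two independent noises $\eta_\ell,\eta_u$; this is a cosmetic difference), and your accuracy analysis by union bound over $k+2$ Laplace tails is correct. The privacy argument, however, contains a genuine error.

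Your punchline claims that ``except on an event of mass $\le \delta$, the entire $\leftsymb/\rightsymb$ prefix is identical on $x$ and $x'$'' under the identical coupling. This is false: once $\eta_\ell,\eta_u$ are fixed, each fresh $\nu_j$ independently lands within $1/n$ of one of the two noisy thresholds with probability $\Theta(\eps)$ (the Laplace density is $\Theta(n\eps)$ near its mode and the window has width $2/n$), so over $k$ queries the prefixes disagree with probability $\Theta(k\eps)$, which is nowhere near $\delta$. There is no choice of a ``safe range'' for $\eta_\ell,\eta_u$ that fixes this, because the boundary zones exist wherever the noisy thresholds land. The only way to make the $\leftsymb/\rightsymb$ prefix ``free'' is to shift the threshold noise in the coupling (as in the standard Above Threshold proof): replace $\mu$ by $\mu+1/n$ (equivalently, $\eta_\ell\mapsto\eta_\ell+1/n$ and $\eta_u\mapsto\eta_u-1/n$), which makes the prefix event on $x$ a \emph{subset} of the prefix event on $x'$, not identical, and costs an $e^{O(\eps)}$ density-ratio factor on the threshold noise.

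Once you commit to this threshold shift, your treatment of the $\haltsymb$ output also breaks. Under the shifted coupling the interval $[\hlT,\huT]$ on $x'$ is \emph{narrower} by $2/n$ than the corresponding interval on $x$ (on top of the sensitivity shift $\Delta$). The ratio $\Pr[\nu\in[a',b']]/\Pr[\nu\in[a,b]]$ for a narrower target interval $[a,b]\subset[a',b']$ can be arbitrarily large when $[a,b]$ is short and lies near the mode of the Laplace; ``splitting into the two endpoints and applying the single-sided DP bound'' does not bound this ratio, because the $\haltsymb$ event is an intersection and probabilities of intersections do not decompose under DP the way unions do. Controlling this ratio is precisely where the stability step must do real work: one needs a lower bound on the width $\huT-\hlT$, which is why the paper discards the tail event $\mu>z^*$ (probability $\le\delta$) and invokes a dedicated Laplace-interval lemma (the paper's Claim on $\Pr[\nu\in[a',b']]/\Pr[\nu\in[a,b]]$). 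Your proposal correctly intuits that a ``$\delta$'' event must be sacrificed and that the stability idea is the novelty, but it places the $\delta$ sacrifice in the wrong part of the argument (the $\leftsymb/\rightsymb$ prefix, where it cannot help) and omits the width lower bound that is essential for the $\haltsymb$ ratio.
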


Our algorithm is given in Figure \ref{alg:between-thresholds}. The analysis is split into Lemmas \ref{lem:bt-privacy} and \ref{lem:bt-accuracy}.

\begin{figure}[h!]
\begin{framed}
\begin{algorithmic}
\INDSTATE[0]{Input: $x \in X^n$.}
\INDSTATE[0]{Parameters: $\eps,\lT,\uT \in (0,1)$ and $n, k \in \N$.}
\INDSTATE[0]{Sample $\mu \sim \Lap(2/\eps n)$ and initialize noisy thresholds $\hlT = \lT + \mu$ and $\huT = \uT - \mu$.}
\INDSTATE[0]{For $j = 1, 2, \cdots, k$:}
\INDSTATE[1]{Receive query $q_j : X^n \to [0,1]$.}
\INDSTATE[1]{Set $c_j = q_j(x) + \nu_j$ where $\nu_j \sim \Lap(6/\eps n)$.}
\INDSTATE[1]{If $c_j < \hlT$, output $\leftsymb$ and continue.}
\INDSTATE[1]{If $c_j > \huT$, output $\rightsymb$ and continue.}
\INDSTATE[1]{If $c_j \in [\hlT,\huT]$, output $\haltsymb$ and halt.}
\end{algorithmic}
\end{framed}
\vspace{-6mm}
\caption{$\mathsf{BetweenThresholds}$} \label{alg:between-thresholds}
\end{figure}
\begin{comment}
\begin{algorithm}
\caption{BetweenThresholds} \label{alg:between-thresholds}
Input: Dataset $x \in X^n$, adaptively-chosen sensitivity-1 queries $q_1, q_2, \dots$, thresholds $\lT \le  \uT$\\
Output: Stream of answers $a_1, a_2, \dots \in \{\leftsymb, \rightsymb, \haltsymb\}^*$
\begin{enumerate}
\item Sample $\mu \sim \Lap(2/\eps)$ and initialize noisy thresholds $\hlT = \lT + \mu$ and $\huT = \uT - \mu$
\item For each query $q_i$:
\begin{itemize}
\item Set $c_i = q_i(x) + \nu_i$ where $\nu_i \sim \Lap(6/\eps)$
\item If $c_i < \hlT$: output $\leftsymb$ and continue
\item If $c_i > \huT$: output $\rightsymb$ and continue
\item Else: output $\haltsymb$ and Halt
\end{itemize}
\end{enumerate}
\end{algorithm}
\end{comment}

\begin{lem}[Privacy for $\mathsf{BetweenThresholds}$] \label{lem:bt-privacy}
Let $\eps,\delta \in (0,1)$ and $n \in \N$. Then $\mathsf{BetweenThresholds}$ (Figure \ref{alg:between-thresholds}) is $(\eps, \delta)$-differentially private for any adaptively-chosen sequence of queries as long as the gap between the thresholds $\lT, \uT$ satisfies
\[\uT - \lT \ge \frac{12}{\eps n}\left( \log (10/\eps) + \log(1/\delta) + 1\right).\]
\end{lem}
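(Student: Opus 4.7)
The plan is to prove $(\eps,\delta)$-differential privacy by a coupling argument in the spirit of the classical sparse-vector / Above Threshold proof \cite[\S3.6]{DworkR14}, generalized to handle the two-sided halting condition. Fix neighboring datasets $x \sim x'$ and any target output sequence $\mathbf{a}=(s_1,\dots,s_{J-1},\haltsymb)$ with $s_j \in \{\leftsymb,\rightsymb\}$ for $j<J$ (the case where the algorithm exhausts all $k$ queries without halting is analogous and easier). I would write $\Pr[\mathsf{BetweenThresholds}(x)=\mathbf{a}]$ as a joint integral over $(\mu,\nu_1,\dots,\nu_J)$ and compare it to the analogous integral on $x'$ under a simultaneous change of variables.

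The central move is the simultaneous shift $\mu \mapsto \mu + 1/n$ and $\nu_J \mapsto \nu_J - \Delta_J$ (with $\Delta_J = q_J(x') - q_J(x)$), leaving the other $\nu_j$ fixed. The $\mu$-shift preserves every $\leftsymb/\rightsymb$ decision at non-halting steps, since $q_j(x) + \nu_j < \hlT$ on $x$ together with $|\Delta_j| \leq 1/n$ gives $q_j(x') + \nu_j < \hlT + 1/n = \hlT'$, and symmetrically for $\rightsymb$; the gap condition ensures that the shifted thresholds still satisfy $\huT' > \hlT'$ so no spurious $\haltsymb$ is triggered early. The compensating $\nu_J$-shift makes $c_J(x')=c_J(x)$ identically, so the halting condition is at least partially transported. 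The Laplace density ratios contributed by these shifts are $e^{(1/n)/(2/(\eps n))} = e^{\eps/2}$ and $e^{(1/n)/(6/(\eps n))} = e^{\eps/6}$ respectively, giving a combined multiplicative cost of at most $e^{2\eps/3} \leq e^{\eps}$.

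The principal subtlety is that the two-sided halting event on $x$ is only strictly transported into the halting event on $x'$ in its interior: the $\mu$-shift squeezes the noisy interval $[\hlT',\huT']$ inward by $1/n$ on each side, so boundary strips of total width $2/n$ of the halting event on $x$ end up as $\leftsymb$ or $\rightsymb$ outputs on $x'$ rather than $\haltsymb$. The probability mass in these strips is at most $(2/n)\cdot(1/(2b_\nu)) = \eps/6$ per halting query, and must be absorbed into the $\delta$ slack. I would set up a ``bad event'' $\{\mu > M\}$ with $M = \frac{2}{\eps n}\log(10/(\eps\delta))$ whose Laplace tail has probability at most $\delta/5$; the stated gap condition $\uT - \lT \geq \frac{12}{\eps n}(\log(10/\eps) + \log(1/\delta) + 1)$ is calibrated so that, on the complementary good event, the coupled halting event on $x'$ is well-defined and the boundary-strip and tail contributions together sum to at most $\delta$.

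The main obstacle will be the careful bookkeeping required to see that the boundary-strip probability and the tail-of-$\mu$ event combine to exactly $\delta$ rather than some larger constant, and in particular that the specific noise scales $\mu \sim \Lap(2/(\eps n))$ and $\nu_j \sim \Lap(6/(\eps n))$ and the logarithmic dependence in the stated gap condition line up correctly. I expect this to emerge naturally from a hybrid argument that treats the two error sources separately, yielding the bound $\Pr[\mathsf{BetweenThresholds}(x)=\mathbf{a}] \leq e^\eps \Pr[\mathsf{BetweenThresholds}(x')=\mathbf{a}] + \delta$ as required.
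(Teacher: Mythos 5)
Your high-level plan — a coupling / change-of-variables argument that shifts $\mu$ by $1/n$, transports the pre-halting $\leftsymb/\rightsymb$ decisions, and treats the $\mu$-tail as the $\delta$ source — matches the structure of the paper's proof. But your proposed resolution of the halting-step discrepancy is where the argument breaks.

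You correctly identify that the $\mu$-shift shrinks the noisy interval $[\hlT,\huT]$ by $1/n$ on each side, leaving a ``boundary strip'' of total width $2/n$. You then propose to absorb the probability of that strip into the $\delta$ slack. This cannot work: as you yourself compute, the unconditional mass in the strip is of order $\eps/6$, which is a fixed fraction of $\eps$ and has nothing to do with $\delta$. For typical parameters (say $\eps = 0.5$, $\delta = 2^{-20}$), $\eps/6$ dwarfs $\delta$ by many orders of magnitude. The multiplicative budget left after your two shifts (at most $e^{\eps/3}$) is no help either, because the halting probability itself can be tiny, so ``$e^{\eps/3}$ times the halting probability'' can be far smaller than the strip mass. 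There is simply no additive or naive multiplicative slack to put the strip into.

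The missing ingredient is a \emph{multiplicative} bound on the ratio of Laplace probabilities for nested intervals — the paper's Claim~\ref{clm:Lap}. The point is that when the smaller interval $[a,b]$ is wide (length at least $2\lambda \log(1/(1-e^{-\eps/6}))$, with $\lambda = 6/(\eps n)$ the scale of $\nu_k$), the mass that Laplace noise puts on a boundary strip of width $2/n$ is a small \emph{fraction} of the mass on $[a,b]$, so widening the interval costs only a multiplicative factor of about $e^{\eps/2}$. This is precisely where the gap hypothesis $\uT - \lT \geq \frac{12}{\eps n}(\log(10/\eps) + \log(1/\delta) + 1)$ earns its keep: it is calibrated so that, on the good event $\mu \le z^*$, the noisy gap $\huT - \hlT = \uT - \lT - 2\mu$ is at least $\frac{12}{\eps n}\log(10/\eps) \geq 2\lambda\log(1/(1-e^{-\eps/6}))$, making Claim~\ref{clm:Lap} applicable with ratio bound $e^{\eps/2}$. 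The $\log(1/\delta)$ part of the gap is what you already identified — it controls $\Pr[\mu > z^*]$ — but the $\log(10/\eps)$ part is not slack; it is essential for the interval-ratio argument. Note also that the paper does \emph{not} shift $\nu_k$; it keeps the $\nu_k$ coordinate fixed and passes the discrepancy $\Delta = q_k(x') - q_k(x)$ into the interval endpoints, handling both $\Delta$ and the $1/n$-shrinkage in one application of Claim~\ref{clm:Lap}. Your extra $\nu_J$-shift is not incorrect, but it doesn't help you avoid needing this ratio lemma.

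To fix your proof: drop the $\nu_J$-shift, write the halting probability on $x$ (at $\mu=z$) as the probability that $q_k(x') + \nu_k$ lands in the interval $[\hlT + \Delta - 1/n, \huT + \Delta + 1/n]$ (at $\mu=z+1/n$), observe this interval contains $[\hlT, \huT]$ and is wider by at most $2/n$, and prove and apply a Laplace interval-ratio lemma of the form
\[
\Pr[\nu \in [a',b']] \le \frac{e^{\eta/\lambda}}{1 - e^{-(b-a)/2\lambda}} \cdot \Pr[\nu \in [a,b]]
\]
for nested intervals $[a,b]\subset[a',b']$ with $(b'-a')-(b-a)\le\eta$.
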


\begin{lem}[Accuracy for $\mathsf{BetweenThresholds}$] \label{lem:bt-accuracy}
Let $\alpha, \beta,\eps,\lT,\uT \in (0,1)$ and $n,k \in \N$ satisfy \[n \geq \frac{8}{\alpha \eps}\left(\log(k+1) + \log(1/\beta)\right).\] Then, for any input $x \in {X}^n$ and any adaptively-chosen sequence of queries $q_1, q_2, \cdots, q_k$, the answers $a_1, a_2, \cdots a_{\leq k}$ produced by $\mathsf{BetweenThresholds}$ (Figure \ref{alg:between-thresholds}) on input $x$ satisfy the following with probability at least $1-\beta$. For any $j \in [k]$ such that $a_j$ is returned before $\mathsf{BetweenThresholds}$ halts,
%\[\forall j \in \{1, \cdots, k-1\} \qquad q_j(x) < \lT - \alpha, \qquad \text{or} \qquad q_j(x) > \uT+\alpha,\]
%with probability at least $1-\beta$, the algorithm $\mathsf{BetweenThresholds}$ does not halt before answering query $q_k$ and produces answers $a_1, \dots a_k$ such that
\begin{itemize}
\item $a_j = \leftsymb \implies q_j(x) \le \lT + \alpha$,
\item $a_j = \rightsymb \implies q_j(x) \ge \uT - \alpha$, and
\item $a_j = \haltsymb \implies \lT - \alpha \le q_j(x) \le \uT + \alpha$.
\end{itemize}
\end{lem}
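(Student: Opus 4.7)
The plan is to reduce the lemma to a deterministic case analysis on a high-probability ``good event'' concerning the Laplace noise variables. Define $E$ to be the event that $|\mu| \le \alpha/4$ and $|\nu_j| \le 3\alpha/4$ for every $j \in [k]$. I would first verify that $E$ deterministically implies the three required conclusions, and then bound $\Pr[\neg E]$ using Laplace tails together with a union bound over at most $k+1$ events.

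For the deterministic step, recall that by construction $\hlT = \lT + \mu$, $\huT = \uT - \mu$, and $c_j = q_j(x) + \nu_j$. On $E$ we have the key slack inequality $|\mu| + |\nu_j| \le \alpha$, which in turn yields $|\mu - \nu_j| \le \alpha$ and $|\mu + \nu_j| \le \alpha$. If $a_j = \leftsymb$, then $c_j < \hlT$ rearranges to $q_j(x) < \lT + (\mu - \nu_j) \le \lT + \alpha$; symmetrically, if $a_j = \rightsymb$, then $q_j(x) > \uT - (\mu + \nu_j) \ge \uT - \alpha$. If $a_j = \haltsymb$, then $\hlT \le c_j \le \huT$ gives $\lT + \mu - \nu_j \le q_j(x) \le \uT - \mu - \nu_j$, and applying $|\mu - \nu_j|, |\mu + \nu_j| \le \alpha$ delivers $\lT - \alpha \le q_j(x) \le \uT + \alpha$.

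For the probabilistic step, the standard Laplace tail bound $\Pr[|\Lap(b)| > t] = e^{-t/b}$ gives $\Pr[|\mu| > \alpha/4] = e^{-\alpha \eps n / 8}$ (since $\mu \sim \Lap(2/\eps n)$) and $\Pr[|\nu_j| > 3\alpha/4] = e^{-\alpha \eps n / 8}$ (since $\nu_j \sim \Lap(6/\eps n)$). The asymmetric split $(\alpha/4, 3\alpha/4)$ is deliberately chosen so that both Laplace tails yield the same exponent; a union bound over the bound on $\mu$ and the at most $k$ noise terms $\nu_j$ corresponding to queries actually answered then gives $\Pr[\neg E] \le (k+1) e^{-\alpha \eps n / 8}$. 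By the hypothesis $n \ge \frac{8}{\alpha \eps}(\log(k+1) + \log(1/\beta))$, this quantity is at most $\beta$, completing the proof.

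The only subtle point, and the one place where anything beyond routine calculation enters, is the choice of the split between the tolerances for $\mu$ and for $\nu_j$: one must allocate slack in inverse proportion to the scale of each Laplace variable in order to recover the stated constant $8$. A naive symmetric split of $\alpha/2$ each would give a strictly worse constant. Nothing else about the argument is delicate, and no adaptivity considerations enter because the event $E$ is fixed by the noise alone and suffices for every query individually.
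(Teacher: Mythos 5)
Your proof is correct and follows essentially the same approach as the paper: define the good event that $|\mu| \le \alpha/4$ and $|\nu_j| \le 3\alpha/4$ for all $j$, deduce the three implications deterministically from $|\mu| + |\nu_j| \le \alpha$ on that event, and bound the failure probability by a union bound over $k+1$ Laplace tails. Your remark that the $(\alpha/4, 3\alpha/4)$ split is chosen so that both tails have the matching exponent $\eps\alpha n/8$ is a nice explanatory touch, but the underlying argument is the same one the paper gives.
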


Combining Lemmas \ref{lem:bt-privacy} and \ref{lem:bt-accuracy} and setting $\lT=t-\alpha/2$ and $\uT=t+\alpha/2$ yields Theorem \ref{thm:AboveBelowThreshold}.

\begin{proof}[Proof of Lemma \ref{lem:bt-privacy}]
Our analysis is an adaptation of Dwork and Roth's \cite[\S3.6]{DworkR14} analysis of the AboveThreshold algorithm. Recall that a transcript of the execution of $\mathsf{BetweenThresholds}$ is given by $a \in \{\leftsymb, \rightsymb, \haltsymb\}^*$. Let $\cM : X^n \to \{\leftsymb, \rightsymb, \haltsymb\}^*$ denote the function that simulates $\mathsf{BetweenThresholds}$ interacting with a given adaptive adversary (cf. Figure \ref{fig:AdaptiveAlg}) and returns the transcript.

Let $S \subset \{\leftsymb, \rightsymb, \haltsymb\}^*$ be a set of transcripts. Our goal is to show that for adjacent datasets $x \sim x'$,
\[\pr{}{\cM(x) \in S} \le e^{\eps}\pr{}{\cM(x') \in S} + \delta.\]

Let $$z^* = \frac{1}{2}(\uT - \lT) - \frac{6}{\eps n} \log(10/\eps) - 1/n \geq \frac{2}{\eps n} \log(1/\delta) .$$ Our strategy will be to show that as long as the noise value $\mu$ is under control, in particular if $\mu \leq z^*$, then the algorithm behaves in essentially the same way as the standard AboveThreshold algorithm. Meanwhile, the event $\mu > z^*$ which corresponds to the (catastrophic) event where the upper and lower thresholds are too close or overlap, happens with probability at most $\delta$.

The following claim reduces the privacy analysis to examining the probability of obtaining any single transcript $a$:

\begin{clm} \label{clm:bt-reduction}
Suppose that for any transcript $a \in \{\leftsymb, \rightsymb, \haltsymb\}^*$, and any $z \le z^*$, that
\[\pr{}{\cM(x) = a | \mu = z} \le e^{\eps/2}\pr{}{\cM(x') = a | \mu = z + 1/n}.\]
Then $\cM$ is $(\eps, \delta)$-differentially private.
\end{clm}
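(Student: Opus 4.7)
The plan is to reduce the full differential privacy bound to the hypothesis by conditioning on the noise $\mu$ and handling the ``bad'' event $\mu > z^*$ as the $\delta$ slack. Specifically, since $\mu \sim \Lap(2/\eps n)$, the tail bound gives $\pr{}{\mu > z^*} \leq \frac{1}{2} e^{-\eps n z^*/2} \leq \delta$ by our choice $z^* \geq (2/\eps n) \log(1/\delta)$. This event will absorb the additive $\delta$ term in the definition of $(\eps,\delta)$-differential privacy.

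For the ``good'' event $\mu \leq z^*$, I would carry out a standard coupling argument, pairing the execution on $x$ with noise value $z$ against the execution on $x'$ with noise value $z + 1/n$. The hypothesis of the claim gives one factor of $e^{\eps/2}$ per fixed transcript $a$. The remaining factor of $e^{\eps/2}$ comes from the change of variables $z' = z + 1/n$ in the integral over $\mu$: since $\mu$ has Laplace density with scale $2/\eps n$, the ratio of densities $p_\mu(z)/p_\mu(z+1/n)$ is at most $e^{(1/n)/(2/\eps n)} = e^{\eps/2}$.

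Concretely, for any $S \subseteq \{\leftsymb, \rightsymb, \haltsymb\}^*$, I would write
\begin{align*}
\pr{}{\cM(x) \in S}
&\leq \pr{}{\mu > z^*} + \int_{-\infty}^{z^*} p_\mu(z) \sum_{a \in S} \pr{}{\cM(x)=a \mid \mu = z}\, dz \\
&\leq \delta + e^{\eps/2} \int_{-\infty}^{z^*} p_\mu(z) \pr{}{\cM(x') \in S \mid \mu = z + 1/n}\, dz \\
&\leq \delta + e^{\eps/2} \cdot e^{\eps/2} \int_{-\infty}^{z^*+1/n} p_\mu(z') \pr{}{\cM(x') \in S \mid \mu = z'}\, dz' \\
&\leq \delta + e^{\eps} \pr{}{\cM(x') \in S},
\end{align*}
where the second inequality applies the claim's hypothesis pointwise (summing over $a \in S$), the third performs the substitution $z' = z + 1/n$ and bounds $p_\mu(z) \leq e^{\eps/2} p_\mu(z+1/n)$, and the last step just drops the restriction on $z'$ and recognizes the full marginal.

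The main thing to be careful about is the Laplace tail bound and the density-ratio step. For the tail bound, $\pr{}{\mu > z^*} = \frac{1}{2} e^{-\eps n z^*/2}$, and the choice $z^* \geq (2/\eps n)\log(1/\delta)$ (coming from the definition of $z^*$ in terms of the gap $\uT - \lT$) makes this at most $\delta$. For the density-ratio step, one should check that $|z| - |z+1/n|$ is bounded above by $1/n$ pointwise, which is immediate from the triangle inequality. Everything else is bookkeeping, and the proof of the main lemma then reduces to establishing the per-transcript hypothesis of the claim.
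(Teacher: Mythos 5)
Your proof is correct and follows essentially the same approach as the paper's: split on the event $\mu > z^*$ to absorb the $\delta$ term, apply the hypothesis pointwise over $a \in S$ for the $e^{\eps/2}$ factor, and pay another $e^{\eps/2}$ from the Laplace density ratio $p_\mu(z) \le e^{\eps/2} p_\mu(z+1/n)$ under the shift $z \mapsto z+1/n$.
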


\begin{proof}
By properties of the Laplace distribution, since $\mu \sim \Lap(2/\eps n)$, for any $z \in \R$, we have
\[\pr{}{\mu = z} \le e^{\eps / 2} \pr{}{\mu = z+1/n},\]
and
\[\pr{}{\mu > z^*} = \frac12 e^{-\eps n z^*/2} \le \delta.\]
Fix a set of transcripts $S$. Combining these properties allows us to write
\begin{align*}
\pr{}{\cM(x) \in S} &= \int_{\R} \pr{}{\cM(x) \in S | \mu = z} \pr{}{\mu = z} \mathrm{d}z\\
&\le \left(\int_{-\infty}^{z^*} \pr{}{\cM(x) \in S | \mu = z} \pr{}{\mu = z} \mathrm{d}z\right) + \pr{}{\mu > z^*} \\
&\le \left(e^{\eps / 2}\int_{-\infty}^{z^*} \pr{}{\cM(x') \in S | \mu = z + 1/n} \pr{}{\mu = z} \mathrm{d}z\right) + \delta \\
&\le \left(e^{\eps}\int_{-\infty}^{z^*} \pr{}{\cM(x') \in S | \mu = z + 1/n} \pr{}{\mu = z + 1/n} \mathrm{d}z\right) + \delta \\
&\le e^{\eps} \pr{}{\cM(x') \in S} + \delta
\end{align*}
\end{proof}

Returning to the proof of Lemma \ref{lem:bt-privacy}, fix a transcript $a \in \{\leftsymb, \rightsymb, \haltsymb\}^*$. Our goal is now to show that $\cM$ satisfies the hypotheses of Claim \ref{clm:bt-reduction}, namely that for any $z \le z^*$,
\begin{equation} \pr{}{\cM(x) = a | \mu = z} \le e^{\eps/2}\pr{}{\cM(x') = a | \mu = z + 1/n}.\label{eqn:PointwiseGoal}\end{equation}
For some $k \ge 1$, we can write the transcript $a$ as $(a_1, a_2, \dots, a_k)$, where $a_j \in \{\leftsymb, \rightsymb\}$ for each $j < k$, and $a_k = \haltsymb$. 

For convenience, let $A = \cM(x)$ and $A' = \cM(x')$. We may decompose
\begin{align}
\pr{}{\cM(x) = a | \mu = z} &= \pr{}{ (\forall j < k, A_j = a_j)\ \land\ q_{k}(x) + \nu_{k} \in [\hlT, \huT]| \mu = z} \nonumber \\
&= \pr{}{(\forall j < k, A_j = a_j)| \mu = z} ~ \cdot ~ \pr{}{q_{k}(x) + \nu_{k} \in [\hlT, \huT] | \mu = z \land (\forall j < k, A_j = a_j)}. \label{eqn:BetweenThresholdsDecompose}
\end{align}

We upper bound each factor on the right-hand side separately. 

\begin{clm} \label{clm:FirstFactor}
$$\pr{}{(\forall i < k, A_i = a_i)| \mu = z} \leq  \pr{}{(\forall i < k, A'_i = a_i) | \mu = z + 1/n}$$
\end{clm}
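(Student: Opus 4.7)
The plan is to prove the inequality pointwise by coupling the fresh Laplace noise $\nu_1, \dots, \nu_{k-1}$ between the two executions, following the standard analysis of the AboveThreshold algorithm from \cite[\S 3.6]{DworkR14}. After conditioning on $\mu$, the only remaining randomness is the $\nu_i$'s (the adversary may be assumed deterministic by averaging over its coins), which are $\Lap(6/\eps n)$ in both executions and can therefore be coupled identically. I will argue by induction on $i < k$ that, for any fixed realization of the coupled $\nu_i$'s, if on input $(x,\mu=z)$ the first $i$ outputs are $a_1,\dots,a_i$, then on input $(x',\mu=z+1/n)$ the first $i$ outputs are also $a_1,\dots,a_i$.

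For the inductive step, agreement of the partial transcripts through step $i-1$ means the adaptive adversary issues the same query $q_i$ in both executions; this is exactly where adaptivity is handled, since the adversary's next query is a deterministic function of the transcript so far. The sensitivity assumption yields $|q_i(x)-q_i(x')| \le 1/n$. If $a_i=\leftsymb$, then $q_i(x)+\nu_i < \hlT = \lT+z$, whence
$$q_i(x') + \nu_i \le q_i(x) + 1/n + \nu_i < \lT + z + 1/n = \hlT',$$
so $\leftsymb$ is output on the right as well. If $a_i=\rightsymb$, then $q_i(x)+\nu_i > \huT = \uT-z$, whence symmetrically $q_i(x')+\nu_i > \uT - z - 1/n = \huT'$; together with the fact (guaranteed by $z \le z^*$) that $\hlT' \le \huT'$, this means the right-hand execution does not prematurely output $\leftsymb$ and must therefore output $\rightsymb$.

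Taking expectations over the coupled noise converts the pointwise containment of events into the desired probability inequality. The only conceptual obstacle is adaptivity — one must rule out the possibility that the adversary's queries diverge between the two executions — but the induction along a fixed partial transcript resolves this cleanly. After that, the proof reduces to the familiar one-line Laplace shift, applied separately to the lower threshold $\hlT$ and the upper threshold $\huT$.
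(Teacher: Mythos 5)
Your proposal is correct and takes essentially the same approach as the paper: a pointwise coupling of the $\nu_j$ noise, with the observation that under this coupling $a_j = \leftsymb$ forces $q_j(x')+\nu_j < \hlT'$, and $a_j = \rightsymb$ forces $q_j(x')+\nu_j > \huT' \geq \hlT'$ (using $z \le z^*$). Your inductive framing makes explicit what the paper leaves implicit — that matching partial transcripts guarantee the adaptive adversary issues the same query at each step — but the underlying argument is identical.
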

\begin{proof}
For fixed $z$, let $A_z(x)$ denote the set of noise vectors $(\nu_1, \dots, \nu_{k-1})$ for which $(A_1, \dots, A_{k-1}) = (a_1, \dots, a_{k-1})$ when $\nu = z$. We claim that as long as $z \le z^*$, then $A_z(x) \subseteq A_{z+1/n}(x')$. To argue this, let $(\nu_1, \dots, \nu_{k-1}) \in A_z(x)$. Fix an index $j \in \{1, \dots, k-1\}$ and suppose $a_j = \leftsymb$. Then $q_j(x) + \nu_j < \lT + z$, but since $q_j$ has sensitivity $1/n$, we also have $q_j(x') + \nu_j < \lT + (z + 1/n)$. Likewise, if $a_j = \rightsymb$, then $q_j(x) + \nu_j > \uT - z$, so
\[q_j(x') + \nu_j > \uT - z - 1/n \ge \lT + (z + 1/n)\]
as long as $z \le z^* \leq \frac{1}{2}(\uT - \lT) - 1/n$. (This ensures that $\cM(x')$ does not output $\leftsymb$ on the first branch of the ``if'' statement, and proceeds to output $\rightsymb$.)

Since $A_z(x) \subseteq A_{z+1/n}(x')$, this proves that
\begin{align*}
\pr{}{(\forall i < k, A_i = a_i)| \mu = z} &= \pr{}{(\nu_1, \dots, \nu_{k-1}) \in A_z(x)} \\
&\le \pr{}{(\nu_1, \dots, \nu_{k-1}) \in A_{z + 1/n}(x')} \\
&= \pr{}{(\forall i < k, A'_i = a_i) | \mu = z + 1/n}.
\end{align*}
\end{proof}

Given Claim \ref{clm:FirstFactor}, all that is needed to prove \eqref{eqn:PointwiseGoal} and, thereby, prove Lemma \ref{lem:bt-privacy} is to bound the second factor in \eqref{eqn:BetweenThresholdsDecompose} --- that is, we must only show that
\begin{equation} \!\!\!\!\!\!\pr{}{q_{k}(x) + \nu_{k} \in [\hlT, \huT] | \mu = z \land (\forall j < k, A_j = a_j)} \leq e^{\eps/2} \pr{}{q_{k}(x') + \nu_{k} \in [\hlT, \huT] | \mu = z + 1/n \land (\forall j < k, A'_j = a_j)}. \end{equation}

Let $\Delta = (q_{k}(x') - q_{k}(x)) \in [-1/n,  1/n]$. Then
\begin{align*}
&\pr{}{q_{k}(x) + \nu_{k} \in [\hlT, \huT] | \mu = z \land (\forall j < k, A_j = a_j)} \\
&={} \pr{}{\lT + z \le q_{k}(x) + \nu_{k} \le \uT - z} \\
&={} \pr{}{\lT + z + \Delta \le q_{k}(x') + \nu_{k} \le \uT - z + \Delta} \\
&={} \pr{}{\lT + (z +1/n)+ (\Delta-1/n) \le q_{k}(x') + \nu_{k} \le \uT - (z+1/n) + (\Delta+1/n)} \\
&={} \pr{}{q_{k}(x') + \nu_{k} \in [\hlT + \Delta - 1/n, \huT + \Delta +1/n] | \mu = z + 1/n} \\
&\le{} e^{\eps/2}\pr{}{q_{k}(x') + \nu_{k} \in [\hlT, \huT] | \mu = z + 1/n} \\
&={}e^{\eps/2}\pr{}{q_{k}(x') + \nu_{k} \in [\hlT, \huT] | \mu = z + 1/n \land (\forall j < k, A'_j = a_j)} \\
\end{align*}
where the last inequality follows from Claim \ref{clm:Lap} below (setting $\eta = 2/n$, $\lambda=6/\eps n$, $[a,b] = [\hlT, \huT]$, and $[a',b']=[\hlT + \Delta - 1/n, \huT + \Delta +1/n]$) and the fact that $z \le z^* = \frac{1}{2}(\uT - \lT) - \frac{6}{\eps n} \log(10/\eps) - 1/n$ implies
\[b-a=\huT-\hlT = \uT-\lT - 2 \mu \ge \frac{12}{\eps n} \log \left(\frac{10}{\eps} \right) \geq 2\lambda \log \left(\frac{1}{1 - e^{-\eps/6}} \right)\]
whenever $0 \leq \eps \leq 1$.

\begin{clm} \label{clm:Lap}  \mnote{Definitely move to appendix}\tnote{I vote to keep it here.}
Let $\nu \sim Lap(\lambda)$ and let $[a,b], [a',b'] \subset \R$ be intervals satisfying $[a,b] \subset [a',b']$. If $\eta \geq (b'-a')-(b-a)$, then
\[\pr{}{\nu \in [a',b']} \le \frac{e^{\eta/\lambda}}{1 - e^{-(b-a)/2\lambda}} \cdot \pr{}{\nu \in [a,b]}.\]
\end{clm}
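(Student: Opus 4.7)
The plan is to decompose $[a',b'] = [a',a] \cup [a,b] \cup [b,b']$ and bound the probability of each ``extra'' piece as a multiple of $\pr{}{\nu \in [a,b]}$. Set $s_1 = a - a' \ge 0$ and $s_2 = b' - b \ge 0$, so that $s_1 + s_2 = (b'-a') - (b-a) \le \eta$, and write $L = b - a$.

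The main analytic tool will be a Lipschitz estimate on the Laplace density $f(x) = \frac{1}{2\lambda} e^{-|x|/\lambda}$: by the reverse triangle inequality $\big||x|-|y|\big|\le |x-y|$, we have $f(x)/f(y) \le e^{|x-y|/\lambda}$ for all $x,y \in \R$. Applied on $[a',a]$ against the density at $a$, this yields
\[\pr{}{\nu \in [a',a]} \le f(a)\int_{a'}^a e^{(a-x)/\lambda}\,dx = f(a)\,\lambda(e^{s_1/\lambda}-1),\]
while applied in the other direction on $[a,b]$ against the same reference point it yields $\pr{}{\nu \in [a,b]} \ge f(a)\,\lambda(1-e^{-L/\lambda})$. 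Dividing one by the other gives $\pr{}{\nu \in [a',a]}/\pr{}{\nu \in [a,b]} \le (e^{s_1/\lambda}-1)/(1-e^{-L/\lambda})$, and the symmetric argument anchored at $b$ handles $\pr{}{\nu \in [b,b']}$.

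Summing the three contributions yields the intermediate bound
\[\frac{\pr{}{\nu \in [a',b']}}{\pr{}{\nu \in [a,b]}} \;\le\; 1 + \frac{(e^{s_1/\lambda}-1)+(e^{s_2/\lambda}-1)}{1-e^{-L/\lambda}}.\]
To replace $s_1,s_2$ by $\eta$, I would use the elementary identity $(e^{s_1/\lambda}-1)(e^{s_2/\lambda}-1) \ge 0$, which rearranges to $e^{s_1/\lambda} + e^{s_2/\lambda} - 2 \le e^{(s_1+s_2)/\lambda} - 1 \le e^{\eta/\lambda}-1$. This collapses the right-hand side to $(e^{\eta/\lambda}-e^{-L/\lambda})/(1-e^{-L/\lambda})$.

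The only remaining step is algebraic: to convert the denominator $1-e^{-L/\lambda}$ into the $1-e^{-L/(2\lambda)}$ demanded by the statement. Substituting $x = e^{\eta/\lambda} \ge 1$ and $y = e^{-L/(2\lambda)} \in (0,1]$ (so $e^{-L/\lambda}=y^2$ and $1-e^{-L/\lambda}=(1-y)(1+y)$), the target $(x-y^2)/(1-y^2) \le x/(1-y)$ is equivalent, after cross-multiplication and collecting terms, to the factored inequality $y(1-y)(x+y) \ge 0$, which is manifest. The main obstacle is just keeping this last algebraic step organized; the substantive content lies in the one-sided Laplace density estimate used in the second paragraph.
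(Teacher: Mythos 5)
Your proof is correct, and it takes a genuinely different route from the paper's. The paper's argument is a case analysis on the location of $[a,b]$ relative to $0$ (two cases by symmetry: $a<b\le 0$ and $a<0<b$ with $|a|\le|b|$), with an explicit integral computation in each case and the conversion to the $e^{-(b-a)/2\lambda}$ denominator handled somewhat differently in each. Your argument is uniform: the single estimate $f(x)/f(y)\le e^{|x-y|/\lambda}$ on the Laplace density, anchored at the endpoints $a$ and $b$, gives both an upper bound on the mass of the two flanking pieces $[a',a]$ and $[b,b']$ and a matching lower bound on $\pr{}{\nu\in[a,b]}$, with no need to track the sign of the endpoints. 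The convexity trick $(e^{s_1/\lambda}-1)(e^{s_2/\lambda}-1)\ge 0$ to merge the two flanking contributions into a single $e^{\eta/\lambda}-1$, and the final substitution $x=e^{\eta/\lambda},y=e^{-L/(2\lambda)}$ reducing the denominator conversion to $y(1-y)(x+y)\ge 0$, are both sound. The one thing you should state explicitly is the non-degeneracy assumption $b>a$ (i.e.\ $y<1$), since otherwise both sides involve divisions by zero; the paper's version has the same implicit assumption. What your approach buys is the elimination of case analysis and a cleaner conceptual picture (the bound is really just a Lipschitz/log-Lipschitz property of the density); what the paper's approach buys is more elementary closed-form integrals, at the cost of symmetry bookkeeping.
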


\begin{proof}
Recall that the probability density function of the Laplace distribution is given by $f_\lambda(x) = \frac{1}{2 \lambda}e^{-|x|/\lambda}$.
There are four cases to consider: In the first case, $a < b \le 0$. In the second case, $a < 0 < b$ with $|a| \le |b|$. In the third case, $0 \le a < b$. Finally, in the fourth case, $a < 0 < b$ with $|a| \ge |b|$. Since the Laplace distribution is symmetric, it suffices to analyze the first two cases.

\paragraph{Case 1:} Suppose $a < b \le 0$. Then
\begin{align*}
\pr{}{\nu \in [a',b']} 
&\le{} \pr{}{\nu \in [a, b]} + \int_{b}^{b+\eta} \frac{1}{2 \lambda} e^{x / \lambda} \mathrm{d}x \\
&= \frac{1}{2}(e^{(b+\eta)/\lambda} - e^{a/\lambda}) \\
&= \frac{1}{2} \cdot \left(\frac{e^{\eta/\lambda} - e^{(a-b)/\lambda}}{1 - e^{(a - b)/\lambda}} \right) \cdot (e^{b/\lambda} - e^{a/\lambda}) \\
&= \left( \frac{e^{\eta/\lambda} - e^{-(b-a)/\lambda}}{1 - e^{-(b-a)/\lambda}} \right) \cdot\pr{}{\nu \in [a, b]}.
%&\ge \left( \frac{e^{\eta/\lambda} - e^{-(b-a)/2\lambda}}{1 - e^{-(b-a)/2\lambda}} \right) \cdot\pr{}{\nu \in [a, b]}.
\end{align*}

\paragraph{Case 2:} Suppose $a < 0 < b$ and $|a| \le |b|$. Note that this implies $b \ge (b-a)/2$. Then
\begin{align*}
\pr{}{\nu \in [a',b']} 
&\leq \pr{}{\nu \in [a, b]} + \eta \cdot \frac{1}{2\lambda} e^{a/\lambda}\\
%&= \pr{}{\nu \in [a, b]}\left( 1 + \frac{\eta}{2\lambda} \frac{e^{a/\lambda}}{2 - e^{a/\lambda} - e^{-b/\lambda}} \right)\\
&\leq \pr{}{\nu \in [a, b]}\left( 1 + \frac{\eta}{2\lambda} \frac{e^{a/\lambda}}{\pr{}{\nu \in [0,b]}} \right)\\
&= \pr{}{\nu \in [a, b]}\frac{{1-e^{-b/\lambda}} + \frac{\eta}{\lambda} e^{a/\lambda}}{1-e^{-b/\lambda}}\\
&\leq \pr{}{\nu \in [a, b]}\frac{1 + \eta/\lambda}{1-e^{-b/\lambda}} \\
&\leq \pr{}{\nu \in [a, b]} \frac{e^{ \eta/\lambda}}{1-e^{-(b-a)/2\lambda}} .
%&= \pr{}{\nu \in [a, 0]} + \left(\frac{e^{\eta/\lambda} - e^{-b/\lambda}}{1 - e^{-b/\lambda}} \right) \cdot \pr{}{\nu \in [0, b]} \\
%&\le \left(\frac{e^{\eta/\lambda} - e^{-(b-a)/2\lambda}}{1 - e^{-(b-a)/2\lambda}} \right) \cdot \pr{}{\nu \in [a, b]}.
\end{align*}

\end{proof}
\end{proof}

\begin{proof}[Proof of Lemma \ref{lem:bt-accuracy}]
We claim that it suffices to show that with probability at least $1-\beta$ we have
\[\forall {1 \le j \le k} \qquad |\nu_j| + |\mu| \le \alpha.\]
To see this, suppose $|\nu_j| + |\mu| \le \alpha$ for every $j$. Then, if $a_j = \leftsymb$, we have
\[c_j = q_j(x) + \nu_j < \hlT = \lT + \mu, \qquad \text{whence} \qquad q_j(x) < \lT + |\mu| + |\nu_j| \le \lT + \alpha.\]
Similarly, if $a_j = \rightsymb$, then
\[c_j = q_j(x) + \nu_j > \huT = \uT - \mu , \qquad \text{whence} \qquad q_j(x) > \uT - (|\mu| + |\nu_j|) \ge \uT - \alpha. \]
Finally, if $a_j = \haltsymb$, then
\[ c_j = q_j(x) + \nu_j \in [\hlT,\huT] = [\lT+\mu,\uT - \mu], \qquad \text{whence} \qquad \lT - \alpha \le q_j(x) \le \uT + \alpha.\]
%This final statement, combined with the fact that each $q_j(x) < \lT - \alpha$ or $q_j(x) > \uT -\alpha$ for each $j < k$ also implies that the algorithm does not terminate early.

We now show that indeed $|\nu_j| + |\mu| \le \alpha$ for every $j$ with high probability. By tail bounds for the Laplace distribution,
\[\pr{}{|\mu| > \alpha/4} = \exp\left( -\frac{\eps\alpha n}{8}\right) \qquad \text{and} \qquad \pr{}{|\nu_j| > 3\alpha/4} = \exp\left(-\frac{\eps \alpha n}{8} \right)\]
for all $j$. By a union bound,
\[\pr{}{|\mu| > \alpha/4 ~~~\vee~~~ \exists j \in [k] ~~ |\nu_j| > 3\alpha/4} \le  (k+1) \cdot \exp\left( -\frac{\eps\alpha n}{8}\right) \leq \beta,\]
as required.
\end{proof}

\subsection{The Online Interior Point Problem}

Our algorithm extends a result of \cite{BunNSV15} showing how to reduce the problem of privately releasing thresholds to the much simpler \emph{interior point problem}. By analogy, our algorithm for answering adaptively-chosen thresholds relies on solving multiple instances of an online variant of the interior point problem in parallel. In this section, we present the OIP problem and give an $(\eps, \delta)$-differentially private solution that can handle $k$ adaptively-chosen queries with sample complexity $O(\log k)$. Our OIP algorithm is a direct application of the $\mathsf{BetweenThresholds}$ algorithm from Section \ref{sec:BetweenThresholds}.

\begin{defn}[Online Interior Point Problem]
An algorithm $M$ solves the \emph{Online Interior Point (OIP) Problem} for $k$ queries with confidence $\beta$ if, when given as input any private dataset $x \in [0,1]^n$ and any adaptively-chosen sequence of real numbers $y_1, \cdots, y_k \in [0,1]$, with probability at least $1 - \beta$ it produces a sequence of answers $a_1, \cdots, a_k \in \{\leftsymb, \rightsymb\}$ such that  $$\forall j \in \{ 1, 2, \cdots, k\} \qquad y_j < \min_{i \in [n]} x_i \implies a_j = \leftsymb, \qquad y_j \ge \max_{i \in [n]} x_i \implies a_j = \rightsymb. $$
(If $\min_{i \in [n]} x_i \le y_j < \max_{i \in [n]} x_i$, then $M$ may output either symbol $\leftsymb$ or $\rightsymb$.)
\end{defn}

\begin{comment}
\begin{algorithm}
\caption{Online Interior Point} \label{alg:oip}
Input: Dataset $x \in [0, 1]^n$, adaptively-chosen real numbers $y_1, \dots, y_k \in [0, 1]$\\
Output: Stream of answers $a_1, a_2, \dots \in \{\leftsymb, \rightsymb\}^k$
\begin{enumerate}
\item Initialize a $\mathsf{BetweenThresholds}$ instance $\mathcal{B}$ on dataset $x$ with thresholds $t_\ell = \frac{n}{3}$, $t_u = \frac{2n}{3}$.
\item On input query $y_j$:
\begin{itemize}
\item If $\mathcal{B}$ already halted on some query $q_{y^*}$, output $\leftsymb$ if $y_j \le y^*$ and output $\rightsymb$ if $y_j > y^*$.
\item Else, feed $\mathcal{B}$ the query $q_{y_j}(x) := \#\{i \in [n]: x_i \le n\}$. If $\mathcal{B}$ halts, output $\leftsymb$. Otherwise, output the answer produced by $\mathcal{B}$.
\end{itemize}
\end{enumerate}
\end{algorithm}
\end{comment}
\begin{figure}[h!]
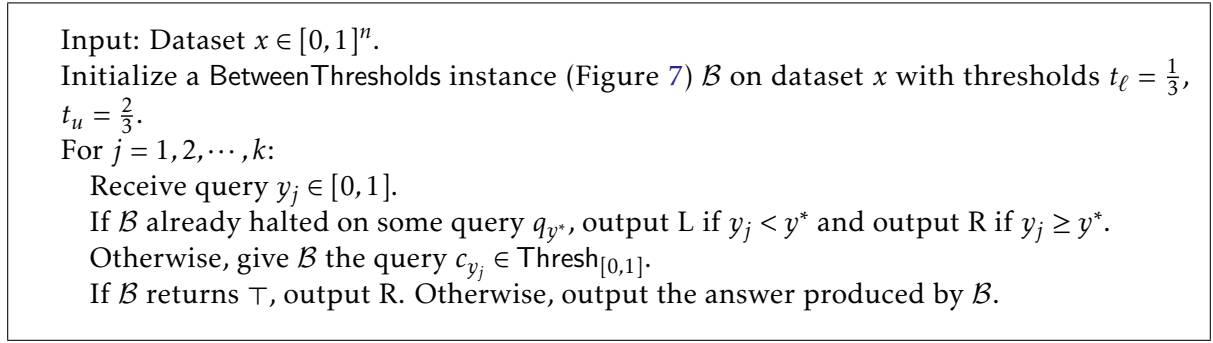

\begin{framed}
\begin{algorithmic}
\INDSTATE[0]{Input: Dataset $x \in [0, 1]^n$.}
\INDSTATE[0]{Initialize a $\mathsf{BetweenThresholds}$ instance (Figure \ref{alg:between-thresholds}) $\mathcal{B}$ on dataset $x$ with thresholds $t_\ell = \frac{1}{3}$, $t_u = \frac{2}{3}$.}
\INDSTATE[0]{For $j= 1, 2, \cdots, k$:}
\INDSTATE[1]{Receive query $y_j \in [0,1]$.}
\INDSTATE[1]{If $\mathcal{B}$ already halted on some query $q_{y^*}$, output $\leftsymb$ if $y_j < y^*$ and output $\rightsymb$ if $y_j \ge y^*$.}
\INDSTATE[1]{Otherwise, give $\mathcal{B}$ the query $c_{y_j}\in \thresh{[0,1]}$.} 
\INDSTATE[1]{If $\mathcal{B}$ returns $\haltsymb$, output $\rightsymb$. Otherwise, output the answer produced by $\mathcal{B}$.}
\end{algorithmic}
\end{framed}
\vspace{-6mm}
\caption{Online Interior Point Algorithm \label{alg:oip}}
\end{figure}

\begin{prop} \label{prop:oip}
The algorithm in Figure \ref{alg:oip} is $(\varepsilon,\delta)$-differentially private and solves the OIP Problem with confidence $\beta$ as long as
\[n \ge \frac{36}{\eps} \left(\log(k+1) + \log(1/\beta) + \log(10/\eps) + \log(1/\delta) + 1\right).\]
\end{prop}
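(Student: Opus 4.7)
The OIP algorithm is essentially a one-line reduction to $\mathsf{BetweenThresholds}$ (call the instance $\mathcal{B}$), so I would derive both privacy and accuracy by applying Lemmas~\ref{lem:bt-privacy} and~\ref{lem:bt-accuracy} to $\mathcal{B}$ with thresholds $\lT = 1/3$ and $\uT = 2/3$. Each adversary input $y_j$ is turned into the sensitivity-$1/n$ statistical query $c_{y_j}(x) = \tfrac{1}{n}|\{i : x_i \leq y_j\}|$, so the adversary's interaction with the OIP algorithm induces precisely the adaptive, low-sensitivity query stream that those lemmas are designed for.

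For privacy, Lemma~\ref{lem:bt-privacy} applies once $\uT - \lT = 1/3 \geq \tfrac{12}{\varepsilon n}(\log(10/\varepsilon) + \log(1/\delta) + 1)$, which rearranges to $n \geq \tfrac{36}{\varepsilon}(\log(10/\varepsilon) + \log(1/\delta) + 1)$, one summand of the claimed bound. After $\mathcal{B}$ halts, the algorithm touches the dataset only through $\mathcal{B}$: the saved threshold $y^*$ was supplied by the adversary, and the rule ``output $\leftsymb$ if $y_j < y^*$, else $\rightsymb$'' is pure post-processing of $\mathcal{B}$'s transcript together with adversary inputs. Hence the whole procedure inherits $(\varepsilon, \delta)$-differential privacy from $\mathcal{B}$.

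For accuracy I would invoke Lemma~\ref{lem:bt-accuracy} with parameter $\alpha = 2/9 < 1/3$, which requires $n \geq \tfrac{8}{(2/9)\varepsilon}(\log(k+1) + \log(1/\beta)) = \tfrac{36}{\varepsilon}(\log(k+1) + \log(1/\beta))$, the other summand. Condition on the probability-$(1-\beta)$ event that every answer of $\mathcal{B}$ is $2/9$-accurate. In the pre-halt phase, an output of $\leftsymb$ implies $c_{y_j}(x) \leq 5/9 < 1$, so some $x_i > y_j$, ruling out the bad case $y_j \geq \max_i x_i$; symmetrically, $\rightsymb$ implies $c_{y_j}(x) \geq 4/9 > 0$, ruling out $y_j < \min_i x_i$. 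Each pre-halt answer therefore meets the OIP specification.

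The post-halt case is where the slack $\alpha < 1/3$ is essential. When $\mathcal{B}$ halts on $y^*$, the halting guarantee gives $c_{y^*}(x) \in [1/3 - 2/9,\, 2/3 + 2/9] = [1/9,\, 8/9]$, which strictly excludes $0$ and $1$; hence $\min_i x_i \leq y^* < \max_i x_i$, i.e., $y^*$ is a genuine interior point of $x$. Outputting $\rightsymb$ on $y^*$ itself is admissible because any symbol is legal in the interior regime, and for every subsequent $y_j$ the comparison rule is automatically correct: $y_j < \min_i x_i \leq y^*$ forces a $\leftsymb$, while $y_j \geq \max_i x_i > y^*$ forces a $\rightsymb$. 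The only substantive design choice---and the sole potential obstacle---is picking $\alpha$ strictly below $1/3$ so that the halt interval lies inside $(0,1)$ and a halt certifies a true interior point; the rest is direct assembly of the two lemmas.
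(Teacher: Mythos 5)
Your proof is correct and follows essentially the same route as the paper: both derive privacy immediately from Lemma~\ref{lem:bt-privacy} (post-processing of $\mathsf{BetweenThresholds}$ with threshold gap $1/3$) and accuracy from Lemma~\ref{lem:bt-accuracy}, then argue that the post-halt comparison rule remains correct. The one genuine difference is your choice of $\alpha = 2/9$ rather than the paper's $\alpha = 1/3$, and this is actually an improvement. With $\alpha = 1/3$, the halt-case guarantee of Lemma~\ref{lem:bt-accuracy} reads $q_j(x) \in [\lT - \alpha,\, \uT + \alpha] = [0,1]$, which is vacuous; to conclude that a halt certifies $\min_i x_i \le y^* < \max_i x_i$ one then has to fall back on the (measure-zero, strict-inequality) observation that the Laplace noise realizations exactly saturate the bound with probability zero — a point the paper's proof glosses over. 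Your $\alpha = 2/9$ makes the halt interval $[1/9, 8/9]$, which strictly excludes $\{0,1\}$, so the interior-point conclusion is immediate. This also cleanly explains the constant: your accuracy condition demands exactly $n \ge \tfrac{36}{\eps}(\log(k+1) + \log(1/\beta))$, matching the privacy requirement $n \ge \tfrac{36}{\eps}(\log(10/\eps) + \log(1/\delta) + 1)$, whereas the paper's $\alpha = 1/3$ only uses $\tfrac{24}{\eps}$ of the available slack on the accuracy side.
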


\begin{proof}
Privacy follows immediately from Lemma \ref{lem:bt-privacy}, since Algorithm \ref{alg:oip} is obtained by post-processing Algorithm \ref{alg:between-thresholds}, run using thresholds with a gap of size $1/3$.

To argue utility, let $\alpha = 1/3$ so that
\[n \ge \frac{8}{\eps \alpha}(\log(k+1) + \log(1/\beta)).\]
By Lemma \ref{lem:bt-accuracy}, with probability at least $1-\beta$, the following events occur:
\begin{itemize}
\item If the $\mathsf{BetweenThresholds}$ instance $\mathcal{B}$ halts when it is queried on $c_{y^*}$, then $\min_{i \in [n]} x_i \le y^* < \max_{i \in [n]} x_i$. 
\item If $\mathcal{B}$ has not yet halted and $y_j < \min_{i \in [n]} x_i$, its answer to $c_{y_j}$ is $\leftsymb$.
\item If $\mathcal{B}$ has not yet halted and $y_j \ge \max_{i \in [n]} x_i$, its answer to $c_{y_j}$ is $\rightsymb$.
\end{itemize}
Thus, if $\mathcal{B}$ has not yet halted, the answers provided are accurate answers for the OIP Problem. On the other hand, when $\mathcal{B}$ halts, it has successfully identified an ``interior point'' of the dataset $x$, i.e. a $y^*$ such that $\min_{i \in [n]} x_i \le y^* < \max_{i \in [n]} x_i$. Thus, for any subsequent query $y$, we have that 
\[y < \min_{i \in [n]} x_i \implies y < y^*,\]
so Algorithm \ref{alg:oip} correctly outputs $\leftsymb$. Similarly, 
\[y \ge \max_{i \in [n]} x_i \implies y \ge y^*,\]
so Algorithm \ref{alg:oip} correctly outputs $\rightsymb$ on such a query.
\end{proof}

\subsection{Releasing Adaptive Thresholds with Approximate Differential Privacy}

We are now ready to state our reduction from releasing thresholds to solving the OIP Problem.

\begin{theorem} \label{thm:oip-to-thresholds}
If there exists an $(\eps, \delta)$-differentially private algorithm solving the OIP problem for $k$ queries with confidence $\alpha\beta/8$ and sample complexity $n'$, then there is a $(4\eps, (1 + e^{\eps})\delta)$-differentially private algorithm for releasing $k$ threshold queries with $(\alpha, \beta)$-accuracy and sample complexity
\[n = \max\left\{\frac{6n'}{\alpha}, \frac{24 \log^{2.5}(4/\alpha) \cdot \log(2/\beta)}{\alpha \eps}\right\}.\]
%$$n = \max\left\{O\left(\frac{n'}{\alpha}\right), \tilde{O}\left(\frac{\log(1/\beta)}{\alpha \eps}\right)\right\}.$$
\end{theorem}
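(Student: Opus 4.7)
Building on the sketch in the introduction, I would sort $x \in [0,1]^n$ as $x_{(1)} \le \cdots \le x_{(n)}$, partition the sorted sequence into $L = \lceil 1/\alpha \rceil$ chunks $C_1,\ldots,C_L$ of consecutive sorted elements (each of size $w = \lfloor n/L \rfloor \ge n'$ since $n \ge 6n'/\alpha$), and run $L$ parallel instances $\mathcal{B}_1,\ldots,\mathcal{B}_L$ of the OIP algorithm with $\mathcal{B}_\ell$ holding $C_\ell$ as its private dataset. To answer each incoming threshold query $\tau_j \in [0,1]$, feed $\tau_j$ as an OIP query to every $\mathcal{B}_\ell$ and output
\[
\widehat q_{\tau_j} \;=\; \frac{1}{L}\,\bigl|\{\ell : \mathcal{B}_\ell\text{ answered }\rightsymb\text{ on }\tau_j\}\bigr|.
\]

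\textbf{Accuracy.} By the guarantee of OIP (Proposition \ref{prop:oip}) applied with confidence $\alpha\beta/8$, each $\mathcal{B}_\ell$ answers all $k$ queries correctly with probability at least $1 - \alpha\beta/8$; a union bound over the $L = O(1/\alpha)$ chunks gives joint correctness with probability at least $1-\beta/8$. Conditioned on this event, for each query $\tau_j$ the number $r(\tau_j)$ of $\rightsymb$ answers equals the number of chunks lying entirely below $\tau_j$, which differs from $L\cdot q_{\tau_j}(x)$ by at most $1$ (the at most one chunk that straddles $\tau_j$). Hence $|\widehat q_{\tau_j} - q_{\tau_j}(x)| \le 1/L \le \alpha$, yielding $(\alpha,\beta)$-accuracy. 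The extra $\log^{2.5}(4/\alpha)\log(2/\beta)/(\alpha\eps)$ term in the sample-complexity bound comes from ensuring the OIP's internal Laplace noise, applied to threshold-count queries that have sensitivity $1/w = O(\alpha/n)$ within each chunk, itself stays below $\alpha$ simultaneously with probability at least $1-\beta/2$.

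\textbf{Privacy and main obstacle.} Replacing one element of $x$ by $x'$ moves its sort-rank from some $p$ to some $p'$; every chunk whose sort-position range intersects $[p,p']$ has its multiset shifted by one position, so is $2$-adjacent to the original (it loses one element and gains another, all other entries merely being re-indexed). Treating each OIP as a black box and composing naively would give only $(2L\eps, 2L\delta)$-DP, not the claimed $(4\eps,(1+e^\eps)\delta)$-DP. The crucial observation is that although up to $L$ chunks are affected, the per-chunk adjacencies are perfectly \emph{correlated}: the element ``added'' to chunk $C_\ell$ is exactly the element ``removed'' from $C_{\ell+1}$ in the shifted sort order, so only the two endpoint chunks (those containing positions $p$ and $p'$) witness a genuinely ``new'' or ``deleted'' element of $x$. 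I would therefore open the $\mathsf{BetweenThresholds}$ analysis from Section \ref{sec:BetweenThresholds} and couple the internal noise variables $\mu_\ell,\nu_{\ell,j}$ across all chunks so that, under the coupling, the joint transcripts on $x$ and $x'$ agree on every interior (``shifted'') chunk except on a stability failure event of probability at most $(1+e^\eps)\delta/2$. Conditioned on this stability event, the analysis reduces to composition over the two endpoint chunks, each contributing $2\eps$ after a group-privacy step for its single $2$-adjacency, for a total of $4\eps$; the failure probability contributes the $(1+e^\eps)\delta$ term.

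\textbf{Where the hard part lies.} The main technical difficulty is making the coupling argument in the previous paragraph rigorous for \emph{adaptive} threshold queries. Because the adversary's next query depends on the previous answers, I cannot simply couple chunk-by-chunk in isolation: I have to maintain, by induction over the query index $j$, a joint coupling of $(\mu_\ell,\nu_{\ell,1},\ldots,\nu_{\ell,j})$ for all $\ell$ simultaneously such that (i) the transcripts on $x$ and $x'$ match on all interior chunks through query $j$, and (ii) the stability (propose-test-release) slack inside each $\mathsf{BetweenThresholds}$ absorbs the $1/w$ sensitivity shift to query $q_{\tau_j}$ restricted to a shifted chunk. It is precisely this step that requires the $\log(1/\eps\delta)$ gap between the two thresholds inside $\mathsf{BetweenThresholds}$, and it is the reason why OIP must be instantiated with the stringent confidence $\alpha\beta/8$ and why sample complexity scales as $n'/\alpha$ rather than the $n'\log(1/\alpha)$ one would get from a recursive approach.
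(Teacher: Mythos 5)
Your high-level plan---sort, partition into $O(1/\alpha)$ chunks of consecutive sorted elements, run parallel OIP instances, and report the fraction of instances answering $\rightsymb$---matches the paper's Algorithm \ref{alg:adaptive-thresholds}, and your accuracy argument is essentially the same. However, there is a genuine gap in the privacy argument, and you have (to your credit) located exactly where it is.

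You use a \emph{deterministic} partition into equal-size chunks and observe, correctly, that replacing one element of $x$ shifts the sort rank of every element between the old and new positions, so that up to $L$ chunks become $2$-adjacent multisets. You then propose to rescue the $4\eps$ privacy bound by opening up $\mathsf{BetweenThresholds}$ and coupling the internal noise variables across chunks so that interior chunks produce identical transcripts. This is the step that does not go through. The quantity $q_{\tau_j}$ restricted to an interior chunk changes by $0$ or $\pm 1/w$ depending on where $\tau_j$ lands relative to the shifted boundary element, so the noise shift you would need for $\nu_{\ell, j}$ is query-dependent and adversary-controlled; the standard one-shot shift of $\mu_\ell$ (the mechanism behind sparse-vector/$\mathsf{BetweenThresholds}$ privacy) does not absorb a per-query additive perturbation on every query, only on the halting query. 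There is no obvious way to couple noise across $L$ parallel adaptive $\mathsf{BetweenThresholds}$ runs so that only two of them ``pay.''

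The paper avoids this problem entirely with a \emph{randomized} partition (the $\mathsf{Partition}$ subroutine of Figure \ref{alg:partition}): the chunk boundaries $t_m$ are set to $\lfloor mn/M + \eta_m \rfloor$ where $\eta_m$ is a sum of tree-structured Laplace noises. Claim \ref{clm:partition-privacy} then exhibits a measurable bijection $\varphi$ on the noise space such that under noise $z$ for $x$ and $\varphi(z)$ for $x'$ the two resulting partitions differ on \emph{at most two} chunks (each by a single add/remove) and the noise density ratio is at most $e^{2\eps}$. Privacy of the OIP instances then contributes only another $e^{2\eps}$ via a two-fold composition, giving $(4\eps, (1+e^\eps)\delta)$-DP without touching the internals of OIP at all. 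This also means your attribution of the $\log^{2.5}(4/\alpha)$ term in the sample complexity to ``OIP's internal Laplace noise'' is incorrect: that term comes from the accuracy requirement (Claim \ref{clm:partition-accuracy}) that the tree-based Laplace noise in the \emph{partition boundaries} be $O(\alpha n)$ with probability $1-\beta/2$, so that every chunk has size $\Theta(\alpha n)$.
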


Combining this reduction with our algorithm for the OIP Problem (Proposition \ref{prop:oip}) yields:

\begin{corollary} \label{cor:adaptive-thresholds}
There is an $(\eps, \delta)$-differentially private algorithm for releasing $k$ adaptively-chosen threshold queries with $(\alpha, \beta)$-accuracy for
$$n =O\left(\frac{\log k + \log^{2.5}(1/\alpha) + \log(1/\beta\eps\delta)}{\alpha\eps}\right).$$
\end{corollary}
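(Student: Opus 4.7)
The plan is to derive this corollary by direct composition of Proposition \ref{prop:oip} with the reduction in Theorem \ref{thm:oip-to-thresholds}, treating it as a parameter-chasing exercise. First I would set target privacy parameters $\eps' = \eps/4$ and $\delta' = \delta/(1 + e^{\eps'})$ for the OIP subroutine, so that Theorem \ref{thm:oip-to-thresholds} produces an $(\eps, \delta)$-differentially private algorithm for thresholds. Assuming $\eps \le O(1)$, we have $\log(1/\delta') = \log(1/\delta) + O(1)$, so asymptotically this substitution costs nothing.

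Next I would feed the confidence requirement $\beta_{\mathrm{OIP}} = \alpha \beta / 8$ from the reduction into Proposition \ref{prop:oip}. Its sample complexity bound then becomes
\[
n' \;=\; O\!\left( \frac{\log(k+1) + \log(1/\alpha\beta) + \log(1/\eps') + \log(1/\delta') + 1}{\eps'} \right) \;=\; O\!\left( \frac{\log k + \log(1/\alpha\beta\eps\delta)}{\eps} \right).
\]

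Now I would plug $n'$ into the sample complexity bound $n = \max\{6 n'/\alpha,\; 24 \log^{2.5}(4/\alpha) \log(2/\beta)/(\alpha \eps')\}$ supplied by Theorem \ref{thm:oip-to-thresholds}. The first term yields $O((\log k + \log(1/\alpha\beta\eps\delta))/(\alpha \eps))$, and the second yields $O(\log^{2.5}(1/\alpha) \cdot \log(1/\beta)/(\alpha \eps))$. Taking the maximum and collapsing $\log(1/\alpha\beta\eps\delta)$ and $\log^{2.5}(1/\alpha) \log(1/\beta)$ into the terms $\log^{2.5}(1/\alpha) + \log(1/\beta\eps\delta)$ (absorbing cross terms via $ab \le a^2 + b^2$-style bounds or simply noting each summand is dominated) gives exactly
\[
n = O\!\left(\frac{\log k + \log^{2.5}(1/\alpha) + \log(1/\beta \eps \delta)}{\alpha \eps}\right),
\]
as claimed.

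There is no genuine obstacle here; the only delicate step is the bookkeeping that turns $\log(1/\alpha\beta)$ and $\log^{2.5}(1/\alpha)\log(1/\beta)$ into the additive form $\log^{2.5}(1/\alpha) + \log(1/\beta\eps\delta)$, which is an elementary arithmetic manipulation. Thus the entire proof is a one-line invocation, and I would present it essentially as: ``Instantiate Theorem \ref{thm:oip-to-thresholds} with the OIP algorithm of Proposition \ref{prop:oip}, set $\eps' = \eps/4$ and $\delta' = \Theta(\delta)$, and simplify.''
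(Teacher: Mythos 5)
Your proposal follows exactly the route the paper intends: the paper offers no separate proof of the corollary, stating only that it follows by ``combining'' Theorem~\ref{thm:oip-to-thresholds} with Proposition~\ref{prop:oip}, and your instantiations $\eps' = \eps/4$, $\delta' = \delta/(1+e^{\eps'})$, and OIP confidence $\alpha\beta/8$ are precisely the right ones.

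However, the step you flag as the only ``delicate'' one --- absorbing $\log^{2.5}(1/\alpha)\log(1/\beta)$ into the additive form $\log^{2.5}(1/\alpha) + \log(1/\beta\eps\delta)$ --- does not in fact go through by the argument you sketch, and indeed does not hold. The two manipulations you invoke both fail: $ab \le a^2 + b^2$ applied to $a = \log^{1.25}(1/\alpha)$, $b = \log(1/\beta)$ would produce $\log^{2.5}(1/\alpha) + \log^2(1/\beta)$, which has the wrong exponent on the second term; and ``each summand is dominated'' is false, since for example $\alpha = \beta \to 0$ gives $\log^{2.5}(1/\alpha)\log(1/\beta) = \Theta(\log^{3.5}(1/\alpha))$, which is not $O(\log^{2.5}(1/\alpha) + \log(1/\beta))$. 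So either one of $\alpha,\beta$ must be treated as a constant (in which case the product is trivially absorbed), or the corollary's sample-complexity bound as printed should read $O\bigl((\log k + \log^{2.5}(1/\alpha)\log(1/\beta) + \log(1/\eps\delta))/\alpha\eps\bigr)$. This slippage is present in the source as well --- Theorem~\ref{thm:oip-to-thresholds} genuinely contributes the product $\log^{2.5}(4/\alpha)\log(2/\beta)$ --- so you are not introducing a new error, but you should not present the product-to-sum conversion as elementary bookkeeping; it is where the claim as stated actually fails.
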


\begin{proof}[Proof of Theorem \ref{thm:oip-to-thresholds}]

\newcommand{\oipalg}{T}\newcommand{\adapthreshalg}[1]{\mathsf{AdaptiveThresholds}_{#1}}
Our algorithm and its analysis follow the reduction of Bun et al.~\cite{BunNSV15} for reducing the (offline) query release problem for thresholds to the offline interior point problem.

Let $\oipalg$ be an $(\eps, \delta)$-differentially private algorithm solving the OIP Problem with confidence $\alpha\beta/8$ and sample complexity $n'$. Without loss of generality, we may assume that $\oipalg$ is differentially private in ``add-or-remove-an-item sense''---i.e. if $x \in [0, 1]^*$ and $x'$ differs from $x$ up to the addition or removal of a row, then for every adversary $\adv$ and set $S$ of outcomes of the interaction between $\adv$ and $\oipalg$, we have $\pr{}{\adaptivealg{\adv}{\oipalg}(x) \in S} \le e^{\eps} \pr{}{\adaptivealg{\adv}{\oipalg}(x') \in S} + \delta$. Moreover, $\oipalg$ provides accurate answers to the OIP Problem with probability at least $1-\alpha\beta/8$ whenever its input is of size at least $n'$. To force an algorithm $\oipalg$ to have these properties, we may pad any dataset of size less than $n'$ with an arbitrary fixed element. On the other hand, we may subsample the first $n'$ elements from any dataset with more than this many elements.

Consider the algorithm $\adapthreshalg{\oipalg}$ in Figures \ref{alg:adaptive-thresholds} and \ref{alg:partition}.

\begin{figure}[h!]
\begin{framed}
\begin{algorithmic}
\INDSTATE[0]{Input: Dataset $x  \in [0, 1]^n$.}
\INDSTATE[0]{Parameter: $\alpha \in (0,1)$.}
\medskip
\INDSTATE[0]{Let $(x^{(1)}, \dots, x^{(M)}) \getsr \mathsf{Partition}(x_1, \dots, x_n, \alpha)$.}
\INDSTATE[0]{Initialize an instance of the OIP algorithm $\oipalg^{(m)}$ on each chunk $x^{(m)} \in [0,1]^*$, for $m \in [M]$.}
\INDSTATE[0]{For each $j = 1, \cdots, k$:}
\INDSTATE[1]{Receive query $c_{y_j} \in \thresh{[0,1]}$.}
\INDSTATE[1]{Give query $y_j \in [0,1]$ to every OIP instance $\oipalg^{(m)}$, receiving answers $a_j^{(1)}, \cdots, a_j^{(M)} \in \{\leftsymb,\rightsymb\}$.}
\INDSTATE[1]{Return $a_j = \frac{1}{M} \cdot \left|\left\{m \in [M] : a_j^{(m)} = \rightsymb\right\}\right|$.}
\end{algorithmic}
\end{framed}
\vspace{-6mm}
\caption{$\adapthreshalg{\oipalg}$ \label{alg:adaptive-thresholds}}
\end{figure}

\begin{figure}[h!]
\begin{framed}
\begin{algorithmic}
\INDSTATE[0]{Input: Dataset $x \in [0, 1]^n$.}
\INDSTATE[0]{Parameter: $\alpha \in (0,1)$.}
\INDSTATE[0]{Output: (Random) partition $(x^{(1)}, \dots, x^{(M)}) \in ([0, 1]^*)^M$ of $x$, where $2/\alpha \le M < 4/\alpha$.}
\medskip
\INDSTATE[0]{Let $M = 2^{\lceil \log_2(2/\alpha) \rceil}$.}
\INDSTATE[0]{Sort $x$ in nondecreasing order $x_1 \le x_2 \le \dots \le x_n$.}
\INDSTATE[0]{For each $0 \le \ell \le \log_2 M$ and $s \in \{0, 1\}^\ell$, sample $\nu_s \sim \Lap((\log_2 M) / \eps)$ independently.}
\INDSTATE[0]{For each $1 \le m \le M-1$, let $\eta_m = \sum_{s \in P(m)} \nu_s$, where $P(m)$ is the set of all prefixes of the binary representation of $m$.}
\INDSTATE[0]{Let $t_0 = 1, t_1 = \left\lfloor \frac{n}{M} + \eta_1 \right\rfloor, \cdots, t_m = \left\lfloor \frac{m \cdot n}{M} + \eta_m \right\rfloor, \cdots, t_{M} = n + 1$.}
\INDSTATE[0]{Let $x^{(m)} = (x_{t_{m - 1}}, \dots, x_{t_m - 1})$ for all $m \in [M]$.}
\end{algorithmic}
\end{framed}
\vspace{-6mm}
\caption{$\mathsf{Partition}$\label{alg:partition}}
\end{figure}

The proof of Theorem \ref{thm:oip-to-thresholds} relies on the following two claims about the $\mathsf{Partition}$ subroutine, both of which are implicit in the work of Bun et al.~\cite[Appendix C]{BunNSV15} and are based on ideas of Dwork et al.~\cite{DworkNPR10}. Claim \ref{clm:partition-privacy} shows that for neighboring databases $x \sim x'$, the behaviors of the $\mathsf{Partition}$ subroutine on $x$ and $x'$ are ``similar'' the following sense: for any fixed partition of $x$, one is roughly as likely (over the randomness of the partition algorithm) to obtain a partition of $x'$ that differs on at most two chunks, where the different chunks themselves differ only up to the addition or removal of a single item. This will allow us to show that running $M$ parallel copies of the OIP algorithm on the chunks remains roughly $(\eps, \delta)$-differentially private. Claim \ref{clm:partition-accuracy} shows that, with high probability, each chunk is simultaneously large enough for the corresponding OIP algorithm to succeed, but also small enough so that treating all of the elements in a chunk as if they were the same element still permits us to get $\alpha$-accurate answers to arbitrary threshold queries.

\begin{clm}\label{clm:partition-privacy}
Fix neighboring datasets $x, x' \in [0, 1]^n$. Then there exists a (measurable) bijection $\varphi: \R^{2M} \to \R^{2M}$ with the following properties:
\begin{enumerate}
\item Let $z \in \R^{2M}$ be any noise vector. Let $x^{(1)}, \dots, x^{(M)}$ denote the partition of $x$ obtained with random noise set to $\nu = z$. Similarly, let $x'^{(1)}, \dots, x'^{(M)}$ denote the partition of $x'$ obtained under noise $\nu = \varphi(z)$. Then there exist indices $i_1, i_2$ such that: 1) For $i \in \{i_1, i_2\}$, the chunks $x^{(i)}$ and $x'^{(i)}$ differ up to the addition or removal of at most one item and 2) For every index $i \notin \{i_1, i_2\}$, we have $x^{(i)} = x'^{(i)}$.
\item For every noise vector $z \in \R^{2M}$, we have $\pr{}{\nu = \varphi(z)} \le e^{2\eps} \pr{}{\nu = z}$.
\end{enumerate}
\end{clm}

\begin{clm}\label{clm:partition-accuracy}
With probability at least $1 - \beta/2$, we have that $|t_m - m \cdot n / M| \leq \alpha n / 24$ for all $m \in [M]$.%every chunk $x^{(m)}$ has size  $\frac{\alpha n}{6} \le |x^{(m)}| \le \alpha n$.
\end{clm}

\paragraph{Privacy of Algorithm \ref{alg:adaptive-thresholds}.}
We first show how to use Claim \ref{clm:partition-privacy} to show that Algorithm \ref{alg:adaptive-thresholds} is differentially private. Fix an adversary $A$, and let $B=\adaptivealg{A}{\adapthreshalg{\oipalg}}$ simulate the interaction between $A$ and Algorithm \ref{alg:adaptive-thresholds}. Let $S$ be a subset of the range of $B$. Then, by Property (1) of Claim \ref{clm:partition-privacy} and group privacy\tnote{This is group privacy, not composition.}, we have that for any $z \in \R^{2M}$:
\[\pr{}{B(x)  \in S | \nu = z} \le e^{2\eps} \pr{}{B(x')  \in S | \nu = \varphi(z)} + (1+e^{\eps})\delta.\]
By Property (2) of Claim \ref{clm:partition-privacy}, we also have $\Pr[\nu = z] \le e^{2\eps}\Pr[\nu = \varphi(z)]$ for every $z \in \R^{2M}$. Therefore,
\begin{align*}
\pr{}{B(x)  \in S} &= \int_{ \R^{2M}} \pr{}{B(x)  \in S | \nu = z} \cdot \pr{}{\nu = z} \mathrm{d}z \\
&\le \int_{ \R^{2M}} \left(e^{2\eps}\pr{}{B(x')  \in S | \nu = \varphi(z)} + (1+e^{\eps})\delta\right) \cdot \pr{}{\nu = z} \mathrm{d}z \\
&\le(1+e^{\eps})\delta + \int_{ \R^{2M}} e^{2\eps}\pr{}{B(x')  \in S | \nu = \varphi(z)} \cdot e^{2\eps}\pr{}{\nu = \varphi(z)} \mathrm{d}z \\
&\le(1+e^{\eps})\delta + e^{4\eps} \pr{}{B(x')  \in S }.
\end{align*}
Hence, $B$ is $(e^{4\eps}, (1+e^{\eps})\delta)$-differentially private, as claimed.

\paragraph{Accuracy of Algorithm \ref{alg:adaptive-thresholds}.} We now show how to use Claim \ref{clm:partition-accuracy} to show that Algorithm \ref{alg:adaptive-thresholds} produces $(\alpha, \beta)$-accurate answers. By a union bound, the following three events occur with probability at least $1-\beta$:
\begin{enumerate}
\item For all $m \in [M]$, $\left|\frac{m}{M} - \frac{t_m}{n}\right| \leq \frac{\alpha}{6}$.
\item Every chunk $x^{(m)}$ has size $|x^{(m)}| = t_m-t_{m-1} \in [\alpha n / 6, 2 \alpha n / 3]$.
\item Every instance of $\oipalg$ succeeds.
\end{enumerate}
Now we need to show that if these three events occur, we can produce $\alpha$-accurate answers to every threshold query $c_{y_1}, \dots, c_{y_k}$. Write the sorted input database as $x_1 \le x_2 \le \dots \le x_n$. We consider two cases for the $j^\text{th}$ query: As our first case, suppose $x_n \le y_j$. Then for every chunk $x^{(m)}$, we have $\max \{x^{(m)}\} \le y_j$. Then the success condition of $\oipalg^{(m)}$ guarantees that $a_j^{(m)} = \rightsymb$. Thus, the answer $a_j = 1$ is (exactly) accurate for the query $c_j$.

As our second case, let $i$ be the smallest index for which $x_i > y_j$, and suppose the item $x_i$ is in some chunk $x^{(m_i)}$. Note that this means that %$x_{i-1}$ is either in chunk $x^{(m_i)}$ or is the largest item of chunk $x^{(m_i-1)}$, and that 
the true answer to the query $c_{y_j}$ is $(i-1)/n$ and that $t_{m_i-1} \leq i \leq t_{m_i}-1$. Then again, for every $m < m_i$ we have $\max \{x^{(m)}\} \le y_j$, so every such $\oipalg^{(m)}$ instance yields $a_j^{(m)} = \rightsymb$.  Thus,
\[a_j = \frac{1}{M}\cdot \left|\left\{m \in [M] : a_j^{(m)} = \rightsymb\right\}\right| \ge \frac{m_i-1}{M} \ge \frac{t_{m_i}}{n} - \frac{\alpha}{6} -\frac{\alpha}{2} \ge \frac{(i-1)}{n} - \alpha,\]
since $M \geq 2/\alpha$.

On the other hand, for every $m > m_i$, we have $\min \{x^{(m)}\} > y_j$, so every such $\oipalg^{(m)}$ instance instead yields $a_j^{(m)} = \leftsymb$. %By a similar argument as above, we also have $a_j \le (i-1)/n + \alpha$, which completes the proof.
$$a_j \leq \frac{m_i}{M} \leq \frac{t_{m_i}}{n} + \frac{\alpha}{6} \leq \frac{t_{m_i-1} + 2\alpha n / 3}{n} + \frac{\alpha}{6} \leq \frac{i}{n} + \frac{2\alpha}{3} + \frac{\alpha}{6} \leq \frac{i-1}{n} + \alpha,$$ since $n \geq 6/\alpha$.

\end{proof}

\addcontentsline{toc}{section}{Acknowledgements}
\subsection*{Acknowledgements} We thank Salil Vadhan for many helpful discussions.

\addcontentsline{toc}{section}{References}
\bibliographystyle{alpha}
\bibliography{refs}

\appendix

\section{The Fingerprinting Lemma} \label{sec:Fingerprinting}

In this section we prove the fingerprinting lemma (Lemma \ref{lem:Fingerprinting}). The proof is broken into several lemmata. \jnote{Use the proper plural of lemma!  It's so good!}

\begin{lem} \label{lem:Corr}
Let $f : \{\pm 1\}^n \to \mathbb{R}$. Define $g : [\pm 1] \to \mathbb{R}$ by $$g(p) = \ex{x_{1 \cdots n} \sim p}{f(x)}.$$ Then $$\ex{x_{1 \cdots n} \sim p}{f(x) \cdot \sum_{i \in [n]} (x_i - p)} = g'(p) \cdot (1-p^2).$$
\end{lem}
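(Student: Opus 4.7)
The plan is a direct calculation of $g'(p)$ using the product rule, followed by a short algebraic identity that leverages $x_i^2 = 1$. Since each coordinate $x_i \in \pmo$ is independently sampled with $\mathbb{E}[x_i] = p$, we have $\Pr[x_i = b] = (1 + bp)/2$ for $b \in \pmo$, so the joint density can be written compactly as $\prod_{i=1}^n (1 + x_i p)/2$. Therefore
\[
g(p) = \sum_{x \in \pmo^n} f(x) \prod_{i=1}^n \frac{1 + x_i p}{2}.
\]

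First, I would differentiate $g$ term by term. By the product rule (or the logarithmic derivative),
\[
\frac{d}{dp} \prod_{i=1}^n \frac{1 + x_i p}{2} = \left(\prod_{i=1}^n \frac{1 + x_i p}{2}\right) \cdot \sum_{i=1}^n \frac{x_i}{1 + x_i p}.
\]
So the main step is to rewrite $\sum_i x_i/(1+x_i p)$ in a form that looks like $\sum_i (x_i - p)$. This is where the fact that $x_i^2 = 1$ comes in: multiplying numerator and denominator by $1 - x_i p$,
\[
\frac{x_i}{1 + x_i p} = \frac{x_i(1 - x_i p)}{1 - x_i^2 p^2} = \frac{x_i - p}{1 - p^2}.
\]
I expect this algebraic identity to be the only non-routine step; after that, everything assembles mechanically.

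Combining the two displays,
\[
g'(p) = \sum_{x \in \pmo^n} f(x) \left(\prod_{i=1}^n \frac{1 + x_i p}{2}\right) \cdot \frac{1}{1 - p^2} \sum_{i=1}^n (x_i - p) = \frac{1}{1 - p^2} \ex{x \sim p}{f(x) \sum_{i=1}^n (x_i - p)}.
\]
Multiplying both sides by $1 - p^2$ yields the claimed identity. The argument is entirely local in $p \in (-1, 1)$; the endpoints $p = \pm 1$ are trivial since $1 - p^2 = 0$ and the coordinates $x_i$ become deterministic, making $\sum_i(x_i - p) = 0$ almost surely.
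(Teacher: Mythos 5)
Your proposal is correct and follows essentially the same route as the paper: expand $g(p)$ as a sum over $\pmo^n$ weighted by the product density $\prod_i (1+x_ip)/2$, differentiate via the product rule, and use $x_i^2 = 1$ to rewrite the resulting factor as $(x_i - p)/(1-p^2)$. The only cosmetic difference is that you phrase the differentiation as a logarithmic derivative (necessitating your remark about $p \in (-1,1)$), whereas the paper writes the derivative of each factor directly as $x/2 = \tfrac{1+xp}{2}\cdot\tfrac{x-p}{1-p^2}$, which sidesteps the division-by-zero concern at $p = \pm 1$.
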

A rescaling of this lemma appears in \cite{SteinkeU15a}. The following proof is taken from \cite{DworkSSUV15}.
\begin{proof}
We begin by establishing several identities.

Since $x^2=1$ for $x \in \{\pm 1\}$, we have the identity $$\frac{\mathrm{d}}{\mathrm{d}p} \frac{1+xp}{2} = \frac{x}{2} = \frac{1+xp}{2} \frac{x-p}{1-p^2} $$ for all $x \in \{\pm 1\}$ and $p \in (-1,1)$. By the product rule, we have $$\frac{\mathrm{d}}{\mathrm{d}p} \prod_{i \in [n]} \frac{1+x_ip}{2} = \sum_{i \in [n]}  \left( \frac{\mathrm{d}}{\mathrm{d}p} \frac{1+x_ip}{2}  \right) \prod_{k \in [n] \setminus \{i\}} \frac{1+x_kp}{2} = \sum_{i \in [n]}  \frac{x_i-p}{1-p^2} \prod_{k \in [n]} \frac{1+x_kp}{2}$$ for all $x \in \{\pm 1\}^n$ and $p \in (-1,1)$.

Sampling $x \sim p$ samples each $x \in \{\pm 1\}$ with probability $\frac{1+xp}{2}$. Thus sampling $x_{1 \cdots n} \sim p$, samples each $x \in \{\pm 1\}^n$ with probability $\prod_{i \in [n]} \frac{1+x_ip}{2}$.

Now we can write $$g(p) = \ex{x_{1 \cdots n} \sim p}{f(x)} = \sum_{x \in \{\pm 1\}^n} f(x) \prod_{i \in [n]} \frac{1+x_ip}{2}.$$ Using the above identities gives 
\begin{align*}
g'(p) =& \sum_{x \in \{\pm 1\}^n} f(x) \frac{\mathrm{d}}{\mathrm{d}p}  \prod_{i \in [n]} \frac{1+x_ip}{2}\\
=& \sum_{x \in \{\pm 1\}^n} f(x) \sum_{i \in [n]}  \frac{x_i-p}{1-p^2} \prod_{k \in [n]} \frac{1+x_kp}{2}\\
=& \ex{x_{1 \cdots n} \sim p}{f(x) \sum_{i \in [n]}  \frac{x_i-p}{1-p^2} }\\
\end{align*}
\end{proof}

\begin{lem} \label{lem:Parts}
Let $g : [\pm 1] \to \mathbb{R}$ be a polynomial. Then $$\ex{p \in [\pm 1]}{g'(p) \cdot (1-p^2)} = 2\ex{p \in [\pm 1]}{g(p) \cdot p}.$$
\end{lem}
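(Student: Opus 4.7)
The plan is to prove this by a straightforward integration by parts. Interpreting the expectation as $\ex{p \in [\pm 1]}{h(p)} = \frac{1}{2}\int_{-1}^{1} h(p)\, dp$, the goal becomes showing
$$\frac{1}{2}\int_{-1}^{1} g'(p)(1-p^2)\, dp = 2 \cdot \frac{1}{2}\int_{-1}^{1} g(p) \cdot p \, dp.$$

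First I would apply integration by parts to the left-hand side with $u = 1-p^2$ and $dv = g'(p)\, dp$, so that $du = -2p\, dp$ and $v = g(p)$. This gives
$$\int_{-1}^{1} g'(p)(1-p^2)\, dp = \Big[(1-p^2)g(p)\Big]_{-1}^{1} + \int_{-1}^{1} 2p \cdot g(p)\, dp.$$
The boundary term vanishes because $1-p^2 = 0$ at $p = \pm 1$, and since $g$ is a polynomial, all expressions are finite and the integration by parts is valid. Dividing both sides by $2$ yields the claim.

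There is no real obstacle here; the only subtlety is noticing that the weight $1-p^2$ is precisely the factor that makes the boundary terms vanish on $[-1,1]$, which is why this identity is the ``right'' form of integration by parts against the uniform distribution on $[-1,1]$. The hypothesis that $g$ is a polynomial is only used to ensure smoothness and to justify applying the fundamental theorem of calculus without regularity concerns; any $C^1$ function would suffice.
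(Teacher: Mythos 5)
Your proof is correct and matches the paper's argument essentially verbatim: both apply integration by parts on $[-1,1]$ with the weight $1-p^2$ and observe that the boundary term vanishes because $1-p^2$ is zero at the endpoints. The paper writes the integration by parts via the product rule and the fundamental theorem of calculus rather than the $u$-$dv$ shorthand, but the computation is identical.
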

\begin{proof}
Let $u(p) = 1-p^2$.
By integration by parts and the fundamental theorem of calculus,
\begin{align*}
\ex{p \in [\pm 1]}{g'(p) \cdot (1-p^2)} =& \frac12 \int_{-1}^1 g'(p) (1-p^2) \mathrm{d}p\\
=& \frac12 \int_{-1}^1 g'(p) u(p) \mathrm{d} p\\
=& \frac12 \int_{-1}^1 \left(\frac{\mathrm{d}}{\mathrm{d}p} g(p)u(p)\right) - g(p) u'(p) \mathrm{d}p\\
=& \frac12 \left( g(1)u(1) - g(-1)u(-1) \right) - \frac12\int_{-1}^1 g(p) (-2p) \mathrm{d}p\\
=& 0 + \int_{-1}^1 g(p) p \mathrm{d}p\\
=& 2\ex{p \in [\pm 1]}{g(p) \cdot p}.
\end{align*}
\end{proof}

\begin{prop} \label{prop:CorrErr2}
Let $f : \{\pm 1\}^n \to \mathbb{R}$. Then $$\ex{p \in [\pm 1], x_{1 \cdots n} \sim p}{f(x) \cdot \sum_{i \in [n]} (x_i - p) + (f(x)-\overline{x})^2} \geq \frac{1}{3}.$$
\end{prop}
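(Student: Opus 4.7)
The plan is to combine the two preceding lemmata with a simple application of Jensen's inequality. Define $g(p) = \ex{x \sim p}{f(x)}$; note that $g$ is a polynomial in $p$ (of degree at most $n$) because $g(p) = \sum_{x \in \pmo^n} f(x) \prod_i (1+x_ip)/2$, so Lemma~\ref{lem:Parts} will be applicable. By Lemma~\ref{lem:Corr}, for every $p \in [-1,1]$,
$$
\ex{x \sim p}{f(x) \sum_{i \in [n]}(x_i - p)} = g'(p)(1-p^2),
$$
and then taking the expectation over $p$ uniform in $[-1,1]$ and applying Lemma~\ref{lem:Parts} yields
$$
\ex{p, x \sim p}{f(x) \sum_{i \in [n]}(x_i - p)} = 2 \ex{p}{g(p)\cdot p}.
$$

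For the second term, I would use the fact that $\ex{x \sim p}{\overline{x}} = p$ together with Jensen's inequality $\ex{}{Z^2} \geq (\ex{}{Z})^2$ applied with $Z = f(x) - \overline{x}$, which gives
$$
\ex{x \sim p}{(f(x) - \overline{x})^2} \geq (g(p) - p)^2.
$$

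Putting the two parts together, the quantity we wish to lower bound is at least
$$
\ex{p}{2 g(p) \cdot p + (g(p)-p)^2} = \ex{p}{g(p)^2 + p^2} \geq \ex{p}{p^2} = \frac{1}{3},
$$
after the algebraic simplification $2gp + (g-p)^2 = g^2 + p^2$, and using $\ex{p \in [-1,1]}{p^2} = \frac{1}{3}$.

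I expect no significant obstacle: the only subtlety is verifying that $g$ is a polynomial so that Lemma~\ref{lem:Parts} applies, which is immediate from the explicit formula for $g$. Everything else is algebraic and relies on the two preceding lemmata together with a one-line application of Jensen's inequality.
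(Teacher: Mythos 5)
Your proof is correct and follows essentially the same route as the paper's: invoke Lemma~\ref{lem:Corr} then Lemma~\ref{lem:Parts} to rewrite the first term as $2\ex{p}{g(p)p}$, use Jensen's inequality to lower bound the second term by $(g(p)-p)^2$, and combine. The only cosmetic difference is that you collapse $2gp + (g-p)^2 = g^2 + p^2$ in one step rather than expanding and cancelling; the substance is identical.
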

\begin{proof}
Define $g : [\pm 1] \to \mathbb{R}$ by $$g(p) = \ex{x_{1 \cdots n} \sim p}{f(x)}.$$
By Lemmas \ref{lem:Corr} and \ref{lem:Parts}, $$\ex{p \in [\pm 1], x_{1 \cdots n} \sim p}{f(x) \cdot \sum_{i \in [n]} (x_i - p)} = \ex{p \in [\pm 1]}{g'(p)(1-p^2)} = \ex{p \in [\pm 1]}{2g(p)p}.$$
Moreover, by Jensen's inequality,
\begin{align*}
\ex{p \in [\pm 1], x_{1 \cdots n} \sim p}{(f(x) - \overline{x})^2} \geq& \ex{p \in [\pm 1]}{\left(\ex{x_{1 \cdots n} \sim p}{f(x) - \overline{x}}\right)^2}\\
=& \ex{p \in [\pm 1]}{(g(p)-p)^2}\\
=& \ex{p \in [\pm 1]}{g(p)^2 -2g(p)p + p^2}\\
=& \ex{p \in [\pm 1]}{g(p)^2} - \ex{p \in [\pm 1], x_{1 \cdots n} \sim p}{f(x) \cdot \sum_{i \in [n]} (x_i - p)} + \frac{1}{3}.
\end{align*}
Rearranging yields the result: $$\ex{p \in [\pm 1], x_{1 \cdots n} \sim p}{f(x) \cdot \sum_{i \in [n]} (x_i - p) + (f(x)-\overline{x})^2} \geq \ex{p \in [\pm 1]}{g(p)^2} + \frac{1}{3} \geq \frac{1}{3}.$$
\end{proof}

We also have an alternative version of Proposition \ref{prop:CorrErr2}:
\begin{prop} \label{prop:CorrErr}
Let $f : \{\pm 1\}^n \to \mathbb{R}$. Then $$\ex{p \in [\pm 1], x_{1 \cdots n} \sim p}{f(x) \cdot \sum_{i \in [n]} (x_i - p) + (f(x)-p)^2} \geq \frac{1}{3}.$$
\end{prop}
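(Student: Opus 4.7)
The plan is to mimic the argument in Proposition \ref{prop:CorrErr2} almost verbatim, only observing that the ``variance term'' $(f(x)-p)^2$ is even easier to handle than $(f(x)-\overline{x})^2$ because $p$ is a constant with respect to the $x \sim p$ expectation, so no Jensen step is needed.

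First I would apply Lemmas \ref{lem:Corr} and \ref{lem:Parts} exactly as in the proof of Proposition \ref{prop:CorrErr2}. Define $g(p) = \ex{x_{1\cdots n}\sim p}{f(x)}$, which is a polynomial in $p$ because $g(p) = \sum_{x \in \{\pm 1\}^n} f(x) \prod_i \frac{1+x_i p}{2}$. Then Lemma \ref{lem:Corr} gives
\[\ex{x_{1\cdots n}\sim p}{f(x)\sum_{i\in[n]}(x_i-p)} = g'(p)(1-p^2),\]
and Lemma \ref{lem:Parts} gives $\ex{p}{g'(p)(1-p^2)} = 2\ex{p}{g(p)p}$.

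Next I expand the square:
\[\ex{x_{1\cdots n}\sim p}{(f(x)-p)^2} = \ex{x_{1\cdots n}\sim p}{f(x)^2} - 2p\,g(p) + p^2.\]
Summing the two contributions, the cross terms $2g(p)p$ and $-2pg(p)$ cancel, leaving
\[\ex{p,\,x_{1\cdots n}\sim p}{f(x)\sum_{i\in[n]}(x_i-p) + (f(x)-p)^2} = \ex{p,\,x}{f(x)^2} + \ex{p\in[\pm 1]}{p^2}.\]

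Finally, observe that $\ex{p\in[\pm 1]}{p^2} = \frac{1}{2}\int_{-1}^{1} p^2 \mathrm{d}p = \frac{1}{3}$, while $\ex{p,x}{f(x)^2}\ge 0$. This gives the desired lower bound of $\frac{1}{3}$. There is no real obstacle here; the only thing to verify carefully is that Lemma \ref{lem:Parts} applies (which it does, since $g$ is a polynomial), and that the cross-term cancellation is exact, which it is because the expectation of $\overline{x}$ with respect to $x\sim p$ is $p$ itself, so replacing $\overline{x}$ with $p$ removes the need for Jensen's inequality.
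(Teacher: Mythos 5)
Your argument is correct and is essentially identical to the paper's own proof: both apply Lemmas \ref{lem:Corr} and \ref{lem:Parts} to rewrite the first term as $\ex{p}{2g(p)p}$, expand $(f(x)-p)^2$, observe the cross term cancels, and conclude from $\ex{p}{p^2}=1/3$ and $f(x)^2 \ge 0$. Your presentation is marginally more explicit about the cancellation, but there is no substantive difference.
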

\begin{proof}
Define $g : [\pm 1] \to \mathbb{R}$ by $$g(p) = \ex{x_{1 \cdots n} \sim p}{f(x)}.$$
By Lemmas \ref{lem:Corr} and \ref{lem:Parts}, $$\ex{p \in [\pm 1], x_{1 \cdots n} \sim p}{f(x) \cdot \sum_{i \in [n]} (x_i - p)} = \ex{p \in [\pm 1]}{g'(p)(1-p^2)} = \ex{p \in [\pm 1]}{2g(p)p}.$$
Moreover, $$\ex{p \in [\pm 1], x_{1 \cdots n} \sim p}{(f(x) -p)^2} = \ex{p \in [\pm 1], x_{1 \cdots n} \sim p}{f(x)^2 -2g(p) p + p^2} \geq 0 - \ex{p \in [\pm 1]}{2g(p) p} + \frac{1}{3}.$$
The result follows by combining the above equality and inequality.
\end{proof}

Finally we restate and prove Lemma \ref{lem:Fingerprinting}

\begin{lem}[Fingerprinting Lemma] \label{cor:CorrEmpErr}
Let $f : \{\pm 1\}^n \to [\pm 1]$. Then $$\ex{p \in [\pm 1], x_{1 \cdots n} \sim p}{f(x) \cdot \sum_{i \in [n]} (x_i - p) + 2\left|f(x)-\overline{x}\right|} \geq \frac{1}{3}.$$
\end{lem}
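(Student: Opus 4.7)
The Fingerprinting Lemma follows immediately from Proposition \ref{prop:CorrErr2}, which has already done the heavy lifting. The plan is to observe that Proposition \ref{prop:CorrErr2} gives us the same inequality with the squared error $(f(x) - \overline{x})^2$ in place of $2|f(x) - \overline{x}|$, and then to argue that the squared quantity is pointwise dominated by the absolute value quantity under the hypothesis that $f$ is bounded in $[-1,1]$.

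Specifically, since $f(x) \in [-1,1]$ by assumption and $\overline{x} = \frac{1}{n}\sum_{i} x_i \in [-1,1]$ (as an average of $\pm 1$ values), we have $|f(x) - \overline{x}| \leq 2$. Consequently
\[(f(x) - \overline{x})^2 \;=\; |f(x) - \overline{x}|\cdot|f(x) - \overline{x}| \;\leq\; 2\,|f(x) - \overline{x}|\]
pointwise for every $x \in \{\pm 1\}^n$. Taking the expectation of both sides and adding the correlation term $f(x)\cdot\sum_{i\in[n]}(x_i - p)$ to each side preserves the inequality, yielding
\[\ex{p,x}{f(x)\cdot \sum_{i\in[n]}(x_i-p) + 2|f(x)-\overline{x}|} \;\geq\; \ex{p,x}{f(x)\cdot \sum_{i\in[n]}(x_i-p) + (f(x)-\overline{x})^2} \;\geq\; \frac13,\]
where the final inequality is exactly Proposition \ref{prop:CorrErr2}. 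This completes the proof.

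There is no real obstacle here, since the hard analytic content (differentiating under the expectation via Lemma \ref{lem:Corr}, and then integrating by parts via Lemma \ref{lem:Parts}) has been absorbed into Proposition \ref{prop:CorrErr2}. The only subtlety worth noting is that the lemma crucially uses the hypothesis $f(x) \in [-1,1]$ to convert the squared loss into the linear absolute loss; without this boundedness assumption, the step $(f(x)-\overline{x})^2 \leq 2|f(x)-\overline{x}|$ would fail, and indeed the form of the lemma needed for the online prefix-query lower bound (where $f$ represents the answer produced by a mechanism, and is clipped to $[-1,1]$) essentially forces this boundedness.
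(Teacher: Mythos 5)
Your proof is correct and matches the paper's argument exactly: both reduce to Proposition \ref{prop:CorrErr2} via the pointwise bound $(f(x)-\overline{x})^2 \leq 2|f(x)-\overline{x}|$, which holds because $|f(x)-\overline{x}| \leq 2$ under the hypothesis $f(x), \overline{x} \in [-1,1]$. Your closing remark correctly identifies the role of the boundedness assumption.
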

\begin{proof}
Since $\left|f(x)-\overline{x}\right| \leq 2$, we have $\left|f(x)-\overline{x}\right|^2 \leq 2 \left|f(x)-\overline{x}\right|$. The result thus follows from Proposition \ref{prop:CorrErr2}.
\end{proof}

\end{document}